\documentclass[UK-english, notap, thm-restate]{lipics-v2021}
\usepackage[utf8]{inputenc}
\usepackage{amsmath}
\usepackage{amsfonts}
\usepackage{amsthm}
\usepackage{makecell}
\usepackage{xspace}
\usepackage{imakeidx}
\usepackage{thmtools}
\usepackage{makecell}
\usepackage{todonotes}
\usepackage{algorithm}
\usepackage{algpseudocode}
\usepackage{booktabs}
\usepackage{ragged2e}
\usepackage{xspace}
\usepackage{adjustbox}
\usepackage{changepage}
\usepackage{todonotes}
\usepackage{tcolorbox}
\usepackage{graphicx,caption}
\usepackage[dvipsnames]{xcolor}
\usepackage{soul}
\soulregister\cite7
\nolinenumbers
\hideLIPIcs

\usepackage[noadjust]{cite}

\newcommand{\eps}{\varepsilon}
\newcommand{\D}{\mathcal{D}}
\newcommand{\T}{\mathcal{T}}
\newcommand{\G}{\mathcal{G}}

\renewcommand{\P}{\mathcal{P}}

\renewcommand{\S}{\mathcal{S}}
\newcommand{\DG}{\mathcal{D}}
\newcommand{\neighb}[1]{#1 * C}
\newcommand{\neighbb}[2]{#1 * #2}
\newcommand{\dist}{d}

\newcommand{\F}{\mathcal{F}}
\newcommand{\M}{\mathcal{M}}
\newcommand{\C}{\mathcal{C}}
\newcommand{\E}{\mathcal{E}}
\newcommand{\Q}{\mathcal{Q}}
\newcommand{\MBM}{\ensuremath{M}}

\newcommand{\typeiiConst}{(2^{i+5}+1)}

\definecolor{lavender}{RGB}{230,230,250}
\NewDocumentCommand{\joe}{sm}{\todo[color=SkyBlue]{Joe: #2}}
\NewDocumentCommand{\joeInline}{sm}{\todo[inline,color=SkyBlue]{Joe: #2}}

\ccsdesc[500]{Theory of computation~Computational geometry}

\title{
\fontsize{16pt}{20pt}\selectfont A dynamic $(1+\eps)$-spanner for disk intersection graphs}

\titlerunning{ A dynamic $(1+\eps)$-spanner for disk intersection graphs}

\author{Sarita de Berg}{IT University of Copenhagen, Denmark}{debe@itu.dk}{https://orcid.org/0000-0001-5555-966X}{}

\author{Ivor van der Hoog}{IT University of Copenhagen, Denmark}{ivva@itu.dk}{
https://orcid.org/0009-0006-2624-0231}{}

\author{Eva Rotenberg}{IT University of Copenhagen, Denmark}{erot@itu.dk}{
https://orcid.org/0000-0001-5853-7909}{}

\author{Johanne Müller Vistisen}{IT University of Copenhagen, Denmark}{jomy@itu.dk}{https://orcid.org/0009-0007-3008-3061}{}

\author{Sampson Wong}{University of Copenhagen, Denmark}{sampson.wong123@gmail.com}{0000-0003-3803-3804}{}
\authorrunning{S. de Berg, I. van der Hoog, E. Rotenberg, J. Müller Vistisen, and S. Wong}
\Copyright{S. de Berg, I. van der Hoog, E. Rotenberg, J. Müller Vistisen, and S. Wong}

\keywords{intersection graphs, dynamic data structures, spanners}

\funding{
This research was supported by Danmarks Frie Forskningsfond Case no. 10.46540/3160-00020B,
the VILLUM Foundation grant (VIL37507) ``Efficient Recomputations for Changeful Problems'',
and 
the European Union's Marie Skłodowska-Curie Actions Postdoctoral Fellowship Project number No. 101146276.}

\begin{document}

\maketitle


\begin{abstract}
 We maintain a $(1+\eps)$-spanner over the disk intersection graph of a dynamic set of disks. We restrict all disks to have their diameter in $[4,\Psi]$ for some fixed and known $\Psi$. The resulting $(1+\eps)$-spanner has size $O(n \eps^{-2} \log \Psi \log (\eps^{-1}))$, where $n$ is the present number of disks.

 We develop a novel use of persistent data structures to dynamically maintain our $(1+\eps)$-spanner.
 Our approach requires $O(\eps^{-2} n \log^4 n \log \Psi)$ space and has an $O( \left( \frac{\Psi}{\eps} \right)^2 \log^4 n \log^2 \Psi \log^2 (\eps^{-1}))$ expected amortised update time.
 For constant $\eps$ and $\Psi$, this spanner has near-linear size, uses near-linear space and has polylogarithmic update time. 
 Furthermore, we observe that for any $\eps < 1$, our spanner also serves as a connectivity data structure. With a slight adaptation of our techniques, this leads to better bounds for dynamically supporting connectivity queries in a disk intersection graph.
 In particular, we improve the space usage when compared to the dynamic data structure of (Baumann et al., DCG'24), replacing the linear dependency on $\Psi$ by a polylogarithmic dependency. Finally, we generalise our results to $d$-dimensional hypercubes.
\end{abstract}

\newpage
\section{Introduction}

Given a set of $n$ geometric shapes, e.g. disks, the corresponding intersection graph has the shapes as its vertex set, and an edge between two shapes whenever they intersect (see Figure~\ref{fig:intro_figure}). Disk intersection graphs are often used as a model for wireless ad-hoc networks~\cite{DBLP:journals/comsur/CadgerCSM13,DBLP:journals/jsac/GaoGHZZ05,DBLP:journals/tac/CortesMB06}, in particular, the radius of a disk equals the transmission range of the device. One of the central algorithmic challenges of disk intersection graphs is that, with $n$ vertices, the graph may have $\Omega(n^2)$ edges. Nonetheless, there are many problems that can be solved in subquadratic time in disk graphs~\cite{KaplanKMRSS2019,klost2023,KaplanKSS23,ConroyT22,clark1990unit,Berg2025,An0025}. 
Both weighted~\cite{FurerK12,An0025,DBLP:journals/siamcomp/GaoZ05,DBLP:journals/jocg/ChanS19a} and unweighted~\cite{DBLP:journals/comgeo/Biniaz20,Berg2025,DBLP:conf/esa/Brewer025,klost2023} disk intersection graphs have been studied; we will consider weighted disk intersection graphs where the weight of an edge is the Euclidean distance between the centres of the disks. 
We study two dynamic problems: maintaining a spanner for the disk intersection graph or a data structure for connectivity queries (for the latter, the edge weights are irrelevant).

\begin{figure}[b]
    \centering
    \includegraphics{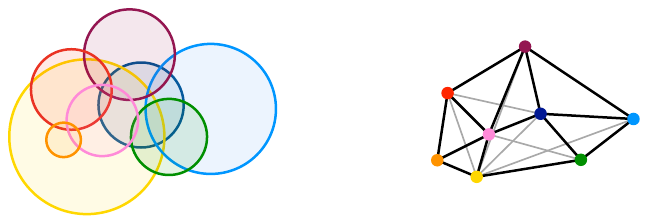}
    \caption{A set of disks~$\S$ (left) and the corresponding disk graph~$\D(\S)$ (right) that includes both the thick black and the thin grey edges. The thick black edges form a $(1+\eps)$-spanner for $\eps = 3/4$.
    }
    \label{fig:intro_figure}
\end{figure}

\subparagraph{Geometric spanners.} Given a graph, a spanner is a subgraph that approximately preserves the distances of the original graph. Formally, a $(1+\varepsilon)$-spanner for a graph $G$ is a subgraph $G'$ such that for every pair of vertices $u,v \in G$, we have that $d_{G'}(u,v) \leq (1+\varepsilon) \cdot d_G(u,v)$, where $d_X(\cdot, \cdot)$ denotes the shortest path distance in~$X$. {Throughout the paper, we assume that $\eps$ is an arbitrary fixed constant with $0< \varepsilon <1$. 
In computational geometry, there are many works that consider spanners of \emph{point sets}, where the graph $G$ is the complete graph over the point set and edge weights correspond to Euclidean distances.
Results for such Euclidean spanners of point sets include a variety of $(1+\eps)$-spanners of size
$O(n/\eps^{d-1})$~\cite{DBLP:journals/dcg/AlthoferDDJS93,DBLP:journals/jacm/CallahanK95,yao1982,DBLP:conf/stoc/AryaDMSS95,DBLP:journals/siamcomp/GudmundssonLN02,DBLP:conf/stoc/Clarkson87,DBLP:journals/siamcomp/LeS25}. For a thorough treatment of spanners for Euclidean point sets, we reference the textbook~\cite{narasimhan2007geometric}. Spanners are also well studied in other settings, such as in doubling spaces~\cite{RoutingDoublingMetrics,DBLP:journals/siamcomp/Har-PeledM06}, hyperbolic spaces~\cite{DBLP:journals/jocg/Kisfaludi-BakW24,dynamic_hyperbolic}, general graphs~\cite{DBLP:journals/dcg/AlthoferDDJS93, peleg1989graph}, minor-free graphs~\cite{DBLP:conf/focs/BorradaileLW17,MinorFree} and polygonal domains~\cite{DBLP:conf/compgeom/AbamAHA15,SpannerPolyhedralTerrain,spanner_planar_domains_via_covers}.
Many of these spanners have also been studied in a dynamic setting, where the goal is to maintain a spanner when vertices are inserted or deleted~\cite{chan2020locality,GottliebR08,dynamic_hyperbolic,ChuzhoyP25,BaswanaKS12}.

\subparagraph{Spanners in disk graphs.} 
There exists a variety of static spanners for unweighted intersection graphs~\cite{Catusse2010PlanarHopSpanners,chan_et_al:LIPIcs.SoCG.2023.23,Dumitrescu2022SparseHopSpanners, ConroyToth2022HopSpanners, Yan2009RoutingUDG}.
Fürer and Kasiviswanathan~\cite{FurerK12} were the first to consider spanners of weighted disk intersection graphs. They constructed a $(1+\eps)$-spanner of size  $O(n\eps^{-2})$ in $O(n^{4/3+\delta}\eps^{-4/3}\log^{2/3}\Psi)$ time, with $\delta \geq 0$, for when all disk diameters lie in $[1, \Psi]$.
 Their results generalise to higher dimensions and other disk-like objects. Kaplan, Mulzer, Roditty, Seiferth, and Sharir~\cite{kaplan2020dynamic} provided an improved $(1+\eps)$-spanner construction that runs in $O(n \eps^{-2}\log^9 n \lambda_6(\log n))$ time, where $\lambda_s(t)$ is the maximum length of a Davenport–Schinzel sequence
on $t$ symbols of order $s$~\cite{DavenportSchinzel}. New techniques by Liu~\cite{liu2022nearestneighbors} improve their construction time to $O(n\eps^{-2}\log^4 n)$ expected time.

 A natural extension is to dynamically maintain a spanner as disks get inserted and deleted. Surprisingly, to the best of our knowledge, this paper is the first to study the dynamic maintenance of a disk intersection graph spanner. This includes weighted, unweighted, and even unit disk intersection graphs.  However, we note that some closely related problems have been studied, such as dynamic connectivity.

\subparagraph{Dynamic connectivity in disk graphs.} In the connectivity problem, the goal is to maintain a data structure on the graph such that connectivity queries can be answered efficiently. A \emph{connectivity query} takes two disks $\sigma,\rho$ as input and returns whether $\sigma$ and $\rho$ are in the same connected component of the disk graph. The most common approach for dynamic connectivity in disk graphs is to construct a proxy graph (a sparser subgraph of the disk intersection graph) and then maintain this sparser graph in the dynamic graph connectivity data structure introduced by Holm, de Lichtenberg, and Thorup~\cite{holm2001poly}.

Kaplan, Mulzer, Roditty, Seiferth, and Sharir \cite{kaplan2020dynamic} study a dynamic set of disks $\S$ whose radii lie in $[1, \Psi]$.
Using the above approach, their data structure supports connectivity queries in $O(\log n / \log \log n)$ and has an expected amortised update time of $O(\Psi^{2}\log^9 n \lambda_6(\log n))$, which can be improved to $O(\Psi^2\log^4 n)$ using \cite{liu2022nearestneighbors}. The overall space is $O(\Psi^2 n \log n)$. Baumann, Kaplan, Klost, Knorr, Mulzer, Roditty, and Seiferth~\cite{baumann2024dynamic} further improve the expected amortised update time to $O(\Psi \log^4 n)$ and the space to $O(\Psi n \log n)$. If the input consists of axis-aligned squares, van der Hoog, Nusser, Rotenberg, and Staals \cite{hoog2024fully} maintain connectivity using  $O(\Psi \log^4 n + \log^6 n)$ amortised update time and $O(n\log^3n \log \Psi)$ space.

Finally, Chan and Huang~\cite{ChanH24a} consider the case where there are no restrictions on the disk radii. They use a completely different approach to obtain a data structure with constant query time and an amortised update time of $O(n^{7/8+\eps})$ for an arbitrarily small $\eps>0$. Polylogarithmic results for the insertion-only and deletion-only cases have been obtained by Baumann, Kaplan, Klost, Knorr, Mulzer, Roditty, and Seiferth~\cite{baumann2024dynamic} using techniques very similar to ours. Nevertheless, this approach has serious robustness issues when studying a fully dynamic set of disks with unbounded radii. One of the reasons for this is that there exist no fully dynamic compressed quadtrees on the real RAM~\cite{har-peled2011geometric,HoogKL18}.

\subparagraph{Our contributions.}  We introduce the first dynamic $(1+\eps)$-spanner for disk intersection graphs. Our model for dynamic disks follows~\cite{kaplan2020dynamic, baumann2024dynamic, hoog2024fully}, whereby each disk has its diameter in $[4, \Psi]$ for a fixed and known $\Psi$. We present a $(1+\eps)$-spanner of $O(n \eps^{-2} \log \Psi \log (\eps^{-1}))$ size. Our expected amortised update time is $O( \left(\frac{\Psi}{\eps}\right)^2 \log^4 n \log^2 \Psi \log^2 (\eps^{-1}))$, and our expected total space is $O(n\eps^{-2} \log^4 n \log \Psi)$.

By using similar techniques, we obtain a dynamic connectivity data structure with $O(\Psi \log^4 n \log \Psi)$ expected amortised update time and $O(n \log^4 n \log \Psi)$ expected space. Compared to the best known result of~\cite{baumann2024dynamic}, we match their update time up to logarithmic factors and improve their space from linear in $\Psi$ to logarithmic in $\Psi$. 

Finally, we note that all our results generalise to $d$-dimensional hypercubes by replacing a disk intersection data structure with a hypercube intersection data structure. Table~\ref{tab:results} provides an overview of our results and the most relevant related work. 

\subparagraph{Organisation.} After covering  the preliminaries in Section~\ref{sec:prelims}, we present an overview of our results and techniques in Section~\ref{sec:overview}. In Section~\ref{sec:high_space}, we present an easier variant of our dynamic $(1+\eps)$-spanner for disks within a fixed bounding box that uses $\tilde{O}(\left(\frac{\Psi}{\eps}\right)^2 n)$ space. In Appendix~\ref{sec:small_space}, we reduce the space usage by a factor $O(\Psi^2)$ and in Appendix~\ref{app:generalisation}, we remove the bounding box assumption. In Appendix~\ref{sec:connectivity}, we present our results on dynamic connectivity. Finally, we generalise all of our results to $d$-dimensional hypercubes in Appendix~\ref{sec:cubes}.

\begin{table}[]
    \centering
    \begin{tabular}{l|l|l|l}
 \multicolumn{4}{c}{\textbf{Dynamic $(1+\eps)$-spanners, fixed aspect ratio $\Psi$}} \vspace{0.1cm}\\
           & Amortised update time & Space & Reference \\
     \hline\hline 
    2D disks  & $O( \left(\frac{\Psi}{\eps}\right)^2 \log^4 n \log^2 \Psi \log^2 (\eps^{-1}))$ exp. & $O( n \, \eps^{-2} \log^4 n \log \Psi)$  exp. & Thm.~\ref{thm:main_small_space} \ \\
    $d$-dim cubes  &  $O( \left(\frac{\Psi}{\eps}\right)^d \log^d n \log^2 \Psi \log^2 (\eps^{-1}))$  &  $O( n \, \eps^{-d} \log^d n \log \Psi)$  & Thm.~\ref{thm:hypercubes_spanner} 
    \vspace{0.4cm}\\ 
   \multicolumn{4}{c}{\textbf{Dynamic connectivity, fixed aspect ratio $\Psi$}} \vspace{0.1cm}\\
    & Amortised update time & Space & Reference \\  \hline\hline
         2D squares  & $O(\Psi \log^4 n+ \log^6 n)$ & $O(n \log^3 n \log \Psi)$ & \cite[Thm. 10]{hoog2024fully}  \\
    2D disks  & $O(\Psi^2 \log^4 n)$ exp. & $O(\Psi^2 n \log n)$ & \cite{kaplan2020dynamic,liu2022nearestneighbors}  \\
         2D disks  &  $O(\Psi \log^4 n)$ exp.  & $O(\Psi n \log n)$ & \cite[Thm.4.6]{baumann2024dynamic} \\   
         2D disks & $O(\Psi \log^4 n \log \Psi)$ exp. & $O(n \log^4 n \log \Psi)$ exp. & Thm~\ref{thm:main_connectivity} \\
         $d$-dim cubes  & $O(\Psi^{d-1}\log^dn\log\Psi)$ & $O(n\log^dn\log\Psi)$  & Thm~\ref{thm:hypercubes_connectivity} \vspace{0.2cm} \\
    \end{tabular}
    \caption{Table of our results. The size of the spanners is $O(n\eps^{-d} \log \Psi \log \eps^{-1})$. All connectivity structures have $O(\log n/\log \log n)$ query time.\vspace{-0.5cm}}
    \label{tab:results}
\end{table}

\section{Preliminaries}\label{sec:prelims} 

The input is a dynamic set $\S = (\sigma_1, \sigma_2, \ldots, \sigma_n)$ of $n$ disks in the plane, or a set of $n$ $d$-dimensional cubes in $\mathbb{R}^d$ for some constant dimension $d$. These preliminaries focus on disks in the plane, but the concepts are defined analogously for $d$-dimensional cubes in $\mathbb{R}^d$.
To minimise numerical dependencies on $\sqrt{2}$, we define for any convex shape $C$ the diameter $|C|$ of $C$ as the length of the largest horizontal segment contained in $C$.
We assume that, at all times, $\forall \sigma \in \S$, the diameter is at least $4$ and at most $\Psi$, where $\Psi$ is some fixed and known value. In other words, $|\sigma| \in [4, \Psi]$. For now, we assume that at all times $\S$ lies contained within a bounding box $B = [0, \Psi^*]^2$, where $\Psi^*$ is the smallest power of two that is bigger than $\Psi$. 
In Appendix~\ref{app:generalisation}, we drop this bounding box assumption.

The \emph{disk intersection graph} $\D(\S)$ is an edge-weighted graph where vertices are the disks in $\S$, and there is an edge $(\sigma_1, \sigma_2)$ whenever the two disks intersect. The weight $\dist(\sigma_1,\sigma_2)$ of $(\sigma_1, \sigma_2)$ is the Euclidean distance between the disk centres. 
For any edge-weighted graph $G$ over $\S$, we let $d_{G}(\sigma_1, \sigma_2)$ denote the length of the shortest path between $\sigma_1$ and $\sigma_2$ in~$G$. 

\subparagraph{Problem statement.}
Since disk intersection graphs can be very dense, one may be interested in maintaining a sparse \emph{spanner} of this graph.
A $(1+\varepsilon)$-spanner of a disk intersection graph $\D(\S)$ is an edge-weighted subgraph $G$ of $\DG(\S)$ 
such that for any $\sigma_1, \sigma_2 \in S$, we have $d_G(\sigma_1, \sigma_2) \leq (1+\varepsilon) \cdot d_{\D(\S)}(\sigma_1, \sigma_2)$. 
The \emph{size} of a spanner is its number of edges. 
We wish to maintain a $(1+\eps)$-spanner of near-linear size subject to disk insertions and deletions in $\S$. 
The running time, space usage, and size of the spanner depend on $n$, $\eps$, and $\Psi$.
Our results apply to both the word RAM, real RAM, and pointer machine model of computation. 

\subparagraph{Dynamic Euclidean spanner.} We will use the following dynamic spanner in our construction. 

\begin{lemma}[\cite{chan2020locality,GottliebR08}]\label{lemma:spanner}
    For any $d$-dimensional point set $P$ of size $n$ there exists a dynamic $(1+\eps)$-spanner of their pairwise Euclidean distances that has size $O(n \eps^{-d} \log (\eps^{-1}))$ that can be maintained using $O(n \eps^{-d} \log (\eps^{-1}))$ space and $O(\eps^{-d} \log n \log^2 (\eps^{-1}))$ update time.
\end{lemma}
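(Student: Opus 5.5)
This lemma is cited from \cite{chan2020locality,GottliebR08}, so the plan is to recall the standard construction rather than build something new. The cleanest route is the locality-sensitive-ordering framework of Chan, Har-Peled and Jones~\cite{chan2020locality}. There one shows that for every $\eps$ there is a family $\Sigma$ of $O(\eps^{-d}\log(\eps^{-1}))$ total orders on $[0,1)^d$ with the following property. For any two points $p,q$ there is an order $\sigma\in\Sigma$ such that every point lying between $p$ and $q$ in $\sigma$ is within distance $\eps\,\|p-q\|$ of $p$ or of $q$.

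The orders come from a constant number of shifted quadtrees. Inside each shifted copy one fixes a $Z$-order over a grid of $O(\eps^{-d})$ cells, and one varies which child is recursed into first; this variation is where the $\log(\eps^{-1})$ factor in $|\Sigma|$ arises. Comparing two points in such an order only requires inspecting bits of their coordinates, so a comparison is $O(1)$ (or $O(\log(\eps^{-1}))$) time in the stated models. Using the model assumptions of the paper, the point set can be scaled into the unit cube.

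\textbf{Spanner.} For each $\sigma\in\Sigma$, keep $P$ in a balanced search tree sorted by $\sigma$, and add an edge between every pair of consecutive points. This gives $O(n)$ edges per order and $O(n\eps^{-d}\log(\eps^{-1}))$ edges and space overall.

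\textbf{Stretch.} The stretch argument is the standard induction on the distances of pairs. Take $p,q$, and pick the order $\sigma$ guaranteed for them. Walking along $\sigma$ from $p$ to $q$, there must be a consecutive pair $(p',q')$ with $p'$ close to $p$ and $q'$ close to $q$. The edge $p'q'$ has length at most $(1+2\eps)\|p-q\|$, and the pairs $(p,p')$ and $(q',q)$ are strictly shorter. Applying the induction hypothesis to those two pairs gives stretch $1+O(\eps)$, and rescaling $\eps$ by a constant gives $1+\eps$.

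\textbf{Updates.} An insertion or deletion touches each of the $|\Sigma|$ trees once. Each touch costs $O(\log n)$ comparisons and changes $O(1)$ consecutive pairs. With $O(\log(\eps^{-1}))$ time per comparison, the total is $O(\eps^{-d}\log n\log^2(\eps^{-1}))$.

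The main obstacle is establishing the locality-sensitive ordering property itself: correctly choosing the shifts, which is the Chan lemma showing that a constant number of shifts suffice to place $p$ and $q$ in a common quadtree cell of size $O(\|p-q\|)$, and bounding how many orders are needed within a cell. Everything after that is routine bookkeeping. Since the paper treats this as a black box, I would cite it directly.
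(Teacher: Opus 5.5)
Your proposal matches the paper, which gives no proof and simply cites this result from \cite{chan2020locality,GottliebR08}. Your reconstruction is the Chan--Har-Peled--Jones argument and gives the stated bounds: connect consecutive points in each locality-sensitive ordering, induct on pair distances with stretch $t=(1+2\eps)/(1-2\eps)$, and pay $O(\log n)$ comparisons of cost $O(\log(\eps^{-1}))$ in each of the $O(\eps^{-d}\log(\eps^{-1}))$ orders.

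One correction to your description of the black box. A single fixed $Z$-order on the $\eps^{-d}$ grid children would not work. The $\eps^{-d}$ factor in $|\Sigma|$ comes from using $O(\eps^{-d})$ orderings of the children, chosen so that every pair of children is consecutive in one of them. The $\log(\eps^{-1})$ factor comes from the $\log(\eps^{-1})$ offsets at which blocks of consecutive quadtree levels are merged into one grid level, not from varying which child comes first.
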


\subparagraph{Quadtree.}
We will also use the following quadtree in our construction.

\begin{definition}
\label{definition:complete_quadtree}
Let $B$ be an axis-aligned $d$-dimensional hypercube whose side length is a positive power of two.
A \emph{split} operation selects a hypercube and divides it into $2^d$ equal hypercubes.
A \emph{quadtree} is any tree obtained by recursively applying the split operation on $B$. 
Each node of the tree corresponds to a hypercube in $\mathbb{R}^d$, called a \emph{cell}.
\end{definition}

 For a cell $C$ and any odd integer $k$, we denote by $k * C$ the region $[0,k|C|]^2$ translated to have $C$ as its centre. 
For any disk $\sigma$, we define its \emph{storing cell} $C_\sigma$  (see Figure~\ref{fig:cells}) as the largest cell, obtained by recursively splitting $B$, that: (1) contains the centre of $\sigma$ and (2) is contained in~$\sigma$. 
The \emph{storing family} $\F(\sigma)$ of $\sigma$ is the set of all cells, down to diameter $1$, that can be obtained by recursively splitting $C_\sigma$ and contain the centre of~$\sigma$. 
Typically, one uses the minimum-size quadtree that contains all storing cells $C_\sigma$ for $\sigma \in \S$~\cite{FurerK12, baumann2024dynamic}.  We deviate this standard by considering the minimum-size quadtree that contains all storing families:

\begin{definition}
    \label{definition:quadtree}
    Given a fixed bounding box $B$ and $\S$, we define $T(\S)$ as the minimum-size quadtree that contains the storing families $\F(\sigma)$ for all $\sigma \in \S$.
    For a cell $C \in T(\S)$, the set $\pi(C)$ consists of all disks in $\S$ that have $C$ in their storing families.
\end{definition}

\begin{figure}[]
    \centering
    \includegraphics[page=3]{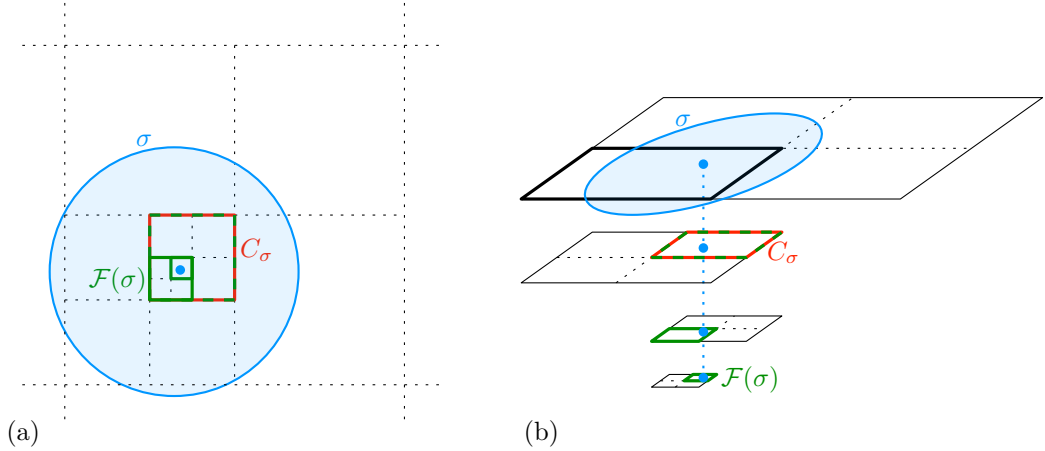}
    \caption{For disk $\sigma$, the storing cell is $C_\sigma$ and its storing family $\F(\sigma)$ in (a) 2D and (b) 3D view.}
    \label{fig:cells}
\end{figure}

In other words, the set $\pi(C)$ of a cell $C$ consists of all disks that have their centre in $C$ and also fully contain $C$. Since $B$ is fixed as $[0, \Psi^*]^2$ we observe: 
\begin{observation}\label{obs_intersection_storing_cell}
The quadtree $T(\S)$ has $O(n \log \Psi)$ cells. Moreover, for all cells $C, C' \in T(\S)$, where $|C| = |C'|$ and $C' \subset 3 * C$, all pairs $(\sigma, \sigma') \in \pi(C) \times \pi(C')$ intersect. 
\end{observation}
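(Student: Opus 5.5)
The observation has two parts, and I would prove them separately: first the bound on the number of cells, then the intersection claim.

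\textbf{Size of $T(\S)$.} The plan is to show that each storing family $\F(\sigma)$ is a single root-to-leaf-type path of length $O(\log \Psi)$.
\begin{itemize}
\item $C_\sigma$ is contained in $\sigma$, so $|C_\sigma| \le |\sigma| \le \Psi$.
\item The cells of $\F(\sigma)$ are nested cells containing the centre of $\sigma$, with diameters halving from $|C_\sigma|$ down to $1$. Choosing one cell per level when the centre lies on a cell boundary, there are at most $\log_2 \Psi + 1$ of them.
\item The minimum quadtree containing $\F(\sigma)$ must also contain every ancestor of $C_\sigma$ up to $B$. Because $B=[0,\Psi^*]^2$ and $\Psi^* < 2\Psi$, there are $O(\log \Psi)$ such ancestors.
\item Adding a path to a tree adds at most $2^d$ siblings per level. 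Hence each disk contributes $O(\log \Psi)$ cells.
\end{itemize}
Taking the union over all $n$ disks gives $O(n\log\Psi)$ cells.

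\textbf{Intersection claim.} Fix $\sigma \in \pi(C)$ and $\sigma' \in \pi(C')$, with $|C|=|C'|=s$ and $C' \subset 3 * C$.
\begin{itemize}
\item By definition of $\pi$, $C$ is contained in $\sigma$ and contains its centre, and likewise $C' \subseteq \sigma'$. So it suffices to show that the cells $C$ and $C'$ intersect, or that $\sigma \cup \sigma'$ covers a connecting region.
\item Quadtree cells of equal size inside $3*C$ are $C$ itself or one of its $8$ neighbouring grid cells. Every such neighbour shares at least a corner point with $C$.
\item Since the cells are closed, $C \cap C' \neq \emptyset$. Because $C \subseteq \sigma$ and $C' \subseteq \sigma'$, we get $\sigma \cap \sigma' \supseteq C \cap C' \neq \emptyset$.
\end{itemize}

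\textbf{Main obstacle.} The only subtle point is justifying that an equal-size cell $C'$ with $C' \subset 3*C$ must be a grid neighbour of $C$, rather than some shifted cell. This holds because all cells of a given size arise from recursively splitting the fixed box $B$, so they lie on the aligned grid of side $s$. The region $3*C$ is exactly the union of the $3\times 3$ block of grid cells around $C$, so any grid cell inside it is one of those nine. I would also note that the argument relies on cells being closed, so that corner-touching counts as intersection; this is consistent with the convention that cells are closed hypercubes.
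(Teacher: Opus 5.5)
Your proof is correct and is essentially the argument the paper leaves implicit: the paper states the observation without proof, relying on the fixed box $B=[0,\Psi^*]^2$ to bound the depth by $O(\log\Psi)$, and on every $\sigma\in\pi(C)$ containing $C$, so that two equal-size touching grid cells force an intersection. One small correction: $\F(\sigma)$ contains \emph{every} cell containing the centre rather than one cell chosen per level, but there are at most $2^d$ such cells per level, so the $O(\log\Psi)$ bound per disk is unaffected.
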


Next, we subdivide each cell~$C$ into a finer grid of what we call $\varepsilon$-cells (see Figure~\ref{fig:population}):

\begin{definition}
    \label{definition:eps-squares}
For a fixed $0 < \eps < 1$, we partition each cell $C \in T(\S)$ into a ($d$-dimensional) grid of cells of side length $\eps |C|$, called \emph{$\eps$-cells}. Let $E$ be an $\eps$-cell of a cell $C$, then the \emph{subpopulation} $\Gamma_{\eps}(E)$ denotes all disks that have $C$ as their storing cell and their centre in~$E$. Analogously, the \emph{population} $\pi_{\eps}(E)$ denotes all disks in $\pi(C)$ that have their centre in $E$. 
\end{definition}

\subparagraph{Disk intersection queries.}
Let $\S$ be a set of $n$ disks. Given a query disk $\rho$, the \emph{disk intersection query} structure returns a disk in $\S$ that intersects~$\rho$, if such a disk exists.

\begin{restatable}[\cite{kaplan2020dynamic,liu2022nearestneighbors}]{lemma}{intersectionlemma}
\label{lemma:intersection_queries}
We can maintain a set of $n$ disks $\S \subset \mathbb{R}^2$ in a data structure of $O(n \log n)$ size that can answer intersection queries in $O(\log^2 n)$-time, supporting disk updates (insertions and deletions) in $O(\log^4 n)$ amortised expected time.     
\end{restatable}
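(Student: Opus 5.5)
This lemma is essentially a black-box combination of known results, so the plan is a reduction rather than a new argument. The first step is to reduce disk intersection queries to additively weighted nearest-neighbour queries. A disk $\sigma$ with centre $c_\sigma$ and radius $r_\sigma$ intersects a query disk $\rho$ with centre $c_\rho$ and radius $r_\rho$ if and only if
\[
\|c_\sigma - c_\rho\| - r_\sigma \le r_\rho .
\]
It therefore suffices to maintain the point sites $c_\sigma$ with additive weights $-r_\sigma$ under the distance function $\delta_\sigma(x) = \|x - c_\sigma\| - r_\sigma$. A query at $c_\rho$ returns the site $\sigma^*$ minimising $\delta_\sigma(c_\rho)$. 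We then answer ``$\sigma^*$'' if $\delta_{\sigma^*}(c_\rho) \le r_\rho$, and ``no disk'' otherwise. Correctness is immediate from the displayed equivalence.

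The second step is to invoke the dynamic nearest-neighbour structure. Kaplan et al.~\cite{kaplan2020dynamic} give a dynamic structure for general distance functions of constant description complexity, which covers additively weighted Euclidean distance. The structure is based on shallow cuttings of the lower envelope of the surfaces $z = \delta_\sigma(x)$ in $\mathbb{R}^3$, together with the Bentley--Saxe-type decomposition of Chan. Liu~\cite{liu2022nearestneighbors} improves this to $O(\log^2 n)$ query time, $O(\log^4 n)$ amortised expected update time, and $O(n \log n)$ space. Two things remain to check:
\begin{itemize}
\item The surfaces $\delta_\sigma$ satisfy the hypotheses of these works: they are continuous, totally defined, and of constant description complexity.
\item The lower envelope of such surfaces (the additively weighted Voronoi diagram) has linear complexity, which gives shallow cuttings of the needed size.
\end{itemize}

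The main obstacle is this verification that additively weighted distances fit the framework. In particular, the linear-complexity bound for the envelope and the existence of small shallow cuttings must be confirmed. I would first cite the known linear complexity of additively weighted Voronoi diagrams. Then I would check that the pseudo-line/pseudo-plane conditions used by Liu's construction hold for these surfaces. Once that is done, the stated size, query time and update time follow directly from the cited results.
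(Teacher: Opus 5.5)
Your proposal is correct and matches the paper's proof: the paper also stores the centres with additive weight $-r_\sigma$, queries the additively weighted nearest-neighbour structure of \cite{kaplan2020dynamic,liu2022nearestneighbors} at the centre of $\rho$, and reports an intersection if and only if the returned weighted distance is at most $r_\rho$. The only difference is that the paper takes it as given that these structures apply to additively weighted distances, attributing this observation to \cite{baumann2024dynamic}, so the hypothesis checks you flag as the main obstacle are simply covered by the citations rather than reproved.
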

\begin{proof}
As observed in~\cite{baumann2024dynamic}, the results of Kaplan, Mulzer, Rodity Seiferth, and Sharir~\cite{kaplan2020dynamic} and Liu~\cite{liu2022nearestneighbors} for dynamic additively weighted nearest neighbour queries can be used to obtain a fully dynamic data structure for disk intersection queries with corresponding space and time bounds as follows. We insert all centres of $\S$ into the data structure, where the weight for each disk is the negative radius of the disk. For an intersection query with a disk $\rho$ with centre point $r$, we query this data structure with $r$. 
Let the point corresponding to a disk $\sigma$ be closest to $r$. If the distance is negative, then $r$ is contained in~$\sigma$.
Otherwise, the returned distance $d$ is the distance from $r$ to the perimeter of~$\S$ and~$\rho$ intersects a disk in $\S$ if and only if $d$ is less than the radius of $\rho$. 
\end{proof}

\begin{figure}[]
    \centering
    \includegraphics[page=1]{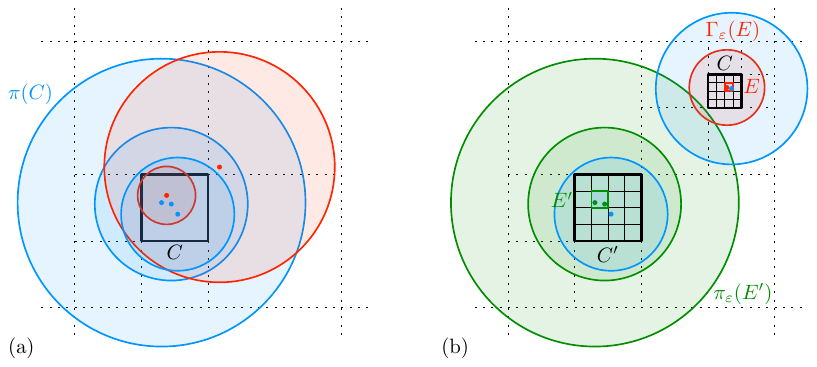}
    \caption{
(a) The set $\pi(C)$ consists of all blue disks, the red disks do not belong to $\pi(C)$ because either their centre is not in $C$, or they do not contain $C$. (b) The $\eps$-cells $E$ and $E'$ of cells $C$ and $C'$ and their corresponding subpopulation $\Gamma_\eps(E)$ (red) and population $\pi_\eps(E')$ (green).
    }
    \label{fig:population}
\end{figure}

\subparagraph{Hypercubes.}
Hypercube intersection queries are defined analogously, and we obtain all our $d$-dimensional hypercube results by using the following lemma instead of Lemma~\ref{lemma:intersection_queries}:

\begin{restatable}[Proof in Appendix~\ref{sec:cubes}]{lemma}{hypercubelemma}
\label{lemma:intersection_queries_cubes}
We can maintain a set of $n$ $d$-dimensional hypercubes $\S \subset \mathbb{R}^d$ in a data structure of $O(n \log^{d-1} n)$ size, that can answer intersection queries in $O(\log^d n)$ time, and that can handle cube updates (insertion or deletion) in $O(\log^d n)$ time.     
\end{restatable}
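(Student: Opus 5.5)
Two axis-aligned hypercubes intersect exactly when their projections onto every coordinate axis intersect, so I would reduce the hypercube intersection query to orthogonal range searching. Store each hypercube $\sigma \in \S$ with centre $c$ and half side length $r$ as the $d$ intervals $[c_i - r, c_i + r]$. A query cube $\rho$ with intervals $[a_i,b_i]$ intersects $\sigma$ if and only if, for every $i$, both $c_i - r \le b_i$ and $c_i + r \ge a_i$ hold. That is $2d$ one-sided conditions on the $2d$-dimensional point $p_\sigma = (c_1-r,\dots,c_d-r,c_1+r,\dots,c_d+r)$. Taken naively, this is a $2d$-dimensional dominance (orthant emptiness) query, and dynamic range trees would give $O(\log^{2d-1} n)$-type bounds, which is too weak.

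To reach $O(\log^{d-1} n)$ space and $O(\log^d n)$ time, I would use the fact that all cubes are hypercubes: each is determined by $d+1$ parameters, the centre $c$ and the radius $r$. The intersection condition is $\|c - q\|_\infty \le r + s$, where $q$ and $s$ are the centre and half-side of $\rho$. This is an additively weighted $L_\infty$ nearest-neighbour query, mirroring the disk case of Lemma~\ref{lemma:intersection_queries}. Concretely:
\begin{enumerate}
\item Decompose the $L_\infty$ condition into $2d$ one-sided constraints: $c_i - r \le q_i + s$ and $c_i + r \ge q_i - s$ for all $i$.
\item Build a dynamic range tree (a BB[$\alpha$] tree with rebuilding) on the first $d-1$ coordinates of the centres. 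It isolates $O(\log^{d-1} n)$ canonical subsets and uses $O(n\log^{d-1} n)$ space.
\item In the canonical subsets, the coordinate $c_i$ appears together with $r$ in two constraints, $c_i - r$ and $c_i + r$.
\end{enumerate}

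The reduction in step 2 is not exact, because the constraints depend on both $c_i$ and $r$, and this is the main obstacle. One way to handle it is to search the primary levels with respect to the query centre $q_i$, that is, split cubes by whether $c_i \le q_i$ or $c_i \ge q_i$. Within the half where $c_i \le q_i$, the constraint $c_i - r \le q_i + s$ holds automatically, so only the single linear constraint $c_i + r \ge q_i - s$ remains. Each dimension then contributes a one-dimensional search on the key $c_i$ together with a single linear condition on $c_i \pm r$.

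At the last level, we are left with a set of cubes that already satisfy all constraints except those in the final coordinate, plus one remaining linear threshold. There I would store a balanced search tree keyed on $c_d$, augmented with the maximum of $c_d + r$ and the minimum of $c_d - r$ in subtrees. A query then walks one root-to-leaf path in $O(\log n)$ time and returns a witness cube. A linear-size augmented tree at the last level gives the $O(n\log^{d-1} n)$ space bound, and the $O(\log^{d-1} n)$ canonical nodes times $O(\log n)$ per last-level query give $O(\log^d n)$ query time.

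For updates, an insertion or deletion touches $O(\log^{d-1} n)$ secondary structures, each updated in $O(\log n)$ time with the augmented values refreshed along a path. Rebalancing the weight-balanced primary trees by partial rebuilding amortises to $O(\log^d n)$. The step I would verify most carefully is that the halving trick keeps only one unresolved constraint per coordinate. If it fails, I would fall back to treating $r$ as an extra coordinate and accept one additional logarithmic factor.
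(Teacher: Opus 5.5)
\textbf{The halving step has a genuine gap, and it breaks the claimed bounds.} Splitting coordinate $i$ at the query centre ($c_i \le q_i$ versus $c_i \ge q_i$) makes one of that coordinate's two constraints vacuous. The surviving one, $c_i + r \ge q_i - s$ (respectively $c_i - r \le q_i + s$), is a threshold on the key $c_i \pm r$, not on the key $c_i$ your range tree is built on.

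So a canonical node of the range tree on $c_1,\dots,c_{d-1}$ only certifies the orthant conditions. Each of the first $d-1$ coordinates still leaves its own unresolved threshold. Your claim that at the last level the cubes ``already satisfy all constraints except those in the final coordinate, plus one remaining linear threshold'' is therefore false.

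Already for $d=2$ this fails. Take a canonical set with $c_1 \le q_1$, and the half $c_2 \le q_2$ of the last-level tree. You need a cube with both $c_1 + r \ge q_1 - s$ and $c_2 + r \ge q_2 - s$, which is a two-dimensional dominance condition on $(c_1+r,\,c_2+r)$. The subtree maximum of $c_2 + r$ can be attained by a cube whose $x_1$-interval misses $[q_1-s,q_1+s]$, while a different cube satisfies both conditions. The augmented tree then answers incorrectly.

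Resolving every leftover threshold needs one extra level per coordinate, which essentially brings you back to the naive $2d$-dimensional dominance bound. Your fallback of adding $r$ as a coordinate does not help either. The leftover conditions involve the sums $c_i \pm r$, one per coordinate, and these are not products of ranges in $c_i$ and $r$.

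\textbf{The missing idea is to settle the whole overlap condition of a coordinate at its own level.} This is what the paper's multi-level construction does: one level per coordinate, each returning $O(\log n)$ nodes that together hold exactly the cubes whose $x_i$-interval meets that of $\rho$. To get this, split by the position of the interval rather than of the centre. The interval $[c_i - r, c_i + r]$ meets $[q_i - s, q_i + s]$ if and only if one of the following holds:
\begin{itemize}
\item it contains $q_i$;
\item its right endpoint lies in $[q_i - s, q_i)$;
\item its left endpoint lies in $(q_i, q_i + s]$.
\end{itemize}
Each case is a single one-dimensional condition: a stabbing query in a segment tree, or a range query on one endpoint. So each yields $O(\log n)$ canonical subsets at an $O(\log n)$ space factor.

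Stack $d-1$ such levels and finish with your augmented tree keyed on $c_d$. That tree does correctly decide overlap in the last coordinate alone, via a prefix maximum of $c_d + r$ over $c_d \le q_d$ and a suffix minimum of $c_d - r$ over $c_d \ge q_d$. The result is $O(n\log^{d-1} n)$ space and $O(\log^d n)$ query time.
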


\subparagraph{Pointer machine data structures.}
In our paper, we will make use of data structure \emph{persistence} as described by Driscoll, Sarnak, Sleator, and Tarjan~\cite{driscoll1986persistent}.
To this end, we require that the data structures that we use (Lemmas~\ref{lemma:intersection_queries} and \ref{lemma:intersection_queries_cubes}) are implementable on a \emph{pointer machine}. We can then either assume a pointer machine as our model of computation, or simulate a pointer machine on the RAM model of our choice.  
A pointer machine consists of \emph{nodes} and \emph{pointers}.
A \emph{data node} stores any type of data (integers, logical values, strings, real numbers, etc), and the dynamic input  (our dynamic set of disks) is a set of data nodes.

A \emph{pointer node} stores a number of constantly many bits.
Each node has constantly many outgoing pointers, where a pointer is a directed edge from one node to another. 
An \emph{update} adds or removes a data node to or from the input. 
A \emph{dynamic data structure} maintains an arbitrarily large set of pointer nodes and, for each data node, a pointer node that points to it. 
One pointer node is called the \emph{root}, and we maintain the invariant that any node can be reached by a directed path from the root. The \emph{size} of the data structure is the total number of nodes it has. 
Any update or query runs a \emph{program} which, starting from the root node, can execute instructions such as \emph{pointer traversals}, \emph{comparisons} between nodes that share a pointer, and \emph{atomic operations} (such as addition between nodes that share a pointer).
The \emph{running time} of a program is the number of instructions it executes before termination. 
The following can be observed from \cite{chan2020dynamic, kaplan2020dynamic,liu2022nearestneighbors} and has been confirmed by~\cite{personal}:

\begin{observation}
\label{obs:pointer}
    Both data structures of Lemmas~\ref{lemma:intersection_queries} and \ref{lemma:intersection_queries_cubes} can be implemented on a pointer machine. However, pointer machine nodes will not necessarily have constant in-degree. 
\end{observation}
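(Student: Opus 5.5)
Since this is an observation about how existing data structures are realised rather than a mathematical claim, I would prove it by inspecting each construction. For each one I would check two things. First, every component is a linked structure of constant out-degree nodes. Second, every algorithmic step is a pointer traversal, a comparison, or an atomic arithmetic operation on data reachable through pointers. I would then point to where in-degree blows up.

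\textbf{Disk intersection queries (Lemma~\ref{lemma:intersection_queries}).} The structure is the dynamic additively weighted nearest neighbour structure of~\cite{kaplan2020dynamic}, with the improvements of~\cite{liu2022nearestneighbors}. I would decompose it into its building blocks:
\begin{itemize}
\item a logarithmic-method hierarchy of static substructures;
\item shallow cuttings of lower envelopes of the distance surfaces, each stored as a vertical decomposition with point location;
\item conflict lists attached to the cells of the cuttings;
\item balanced search trees and linked lists for bookkeeping.
\end{itemize}
Balanced search trees, doubly linked lists and triangulation-based point location (a Kirkpatrick hierarchy or persistent search trees) are classical pointer-machine structures. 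Constructing the cuttings uses random sampling and geometric predicates on coordinates, which are atomic operations on data nodes. The only RAM-specific features I expect to meet are array indexing into the $O(\log n)$ levels of the logarithmic method and possibly arrays of cells. I would replace each by a linked list or a balanced tree. Locating a level then costs $O(\log n)$ (or $O(\log\log n)$) traversals per operation, and this overhead is dominated by the stated $O(\log^2 n)$ query and $O(\log^4 n)$ update bounds. The main thing to verify is that no step relies on hashing or on bit tricks with integer keys. I expect this to be the most tedious part, since the Liu and Chan constructions~\cite{chan2020dynamic,liu2022nearestneighbors} must be checked layer by layer; I would lean on the confirmation in~\cite{personal}.

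\textbf{Hypercube intersection queries (Lemma~\ref{lemma:intersection_queries_cubes}).} Two axis-parallel hypercubes intersect iff their projections intersect in every coordinate. The structure is therefore a $d$-level multilevel tree, such as segment trees or interval trees nested through range trees, over the coordinate intervals. Such structures consist of balanced search trees whose nodes point to secondary structures, and they are standard pointer-machine structures. Dynamic rebalancing, via weight-balanced rebuilding, also uses only traversals and comparisons. This gives the claim for the second lemma with the same bounds.

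\textbf{Unbounded in-degree.} I would exhibit the negative part explicitly. In both structures a single input object appears in many places: in up to $\Theta(\log n)$ conflict lists or levels, and in $\Theta(\log^{d-1} n)$ canonical nodes of the multilevel tree. Each occurrence points to the same data node, or to a shared pointer node representing it. Hence the in-degree of such nodes is not bounded by a constant. This is exactly the hypothesis of Driscoll et al.~\cite{driscoll1986persistent} that fails, and it is why the observation records it as a caveat rather than claiming constant-overhead persistence directly.
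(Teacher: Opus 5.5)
Your proposal is correct and follows the same route as the paper, which supports this observation only by pointing to the constructions in \cite{chan2020dynamic,kaplan2020dynamic,liu2022nearestneighbors} together with the confirmation in \cite{personal}, while its own hypercube structure (Lemma~\ref{lemma:intersection_queries_cubes}) is a multilevel interval-tree construction. Your component-by-component check and your explanation of the in-degree caveat through shared occurrences of one object (which is why the paper builds branch persistence on fat nodes instead of invoking full persistence) faithfully elaborate that argument; the only quibble is that one object can lie in far more than $\Theta(\log n)$ conflict lists, which only strengthens your conclusion.
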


\section{Overview of techniques}\label{sec:overview}

The main focus of this paper is to dynamically maintain a $(1+\eps)$-spanner of $\D(\S)$. 
In the main body, we assume that $\S$ remains contained within a bounding box $[0,\Psi^*]^2$, where $\Psi$ is a known upper bound on the diameter of the disks in $\S$ and  $\Psi^* = 2^{\lceil \log_2 \Psi \rceil}$. 
This assumption allows us to ignore dynamic quadtree misalignment issues (see Appendix~\ref{app:generalisation}). The basis of our spanner is a quadtree $T(\S)$ that stores $\S$ (see Definition~\ref{definition:quadtree}).
We then construct a spanner $G$ by using two types of edges: type~\ref{type:one} edges connect two disks in quadtree cells that are ``close'', whereas type~\ref{type:two} edges connect two disks that lie in $\eps$-cells that are ``far away''.  %

We obtain type~\ref{type:one} edges by considering pairs of ``close'' quadtree cells $(C_1, C_2)$, where close is defined such that all disks assigned to $C_1$ intersect all disks assigned to $C_2$. 
We then dynamically maintain a Euclidean $(1+\varepsilon)$-spanner (Lemma~\ref{lemma:spanner}) on the centre points of disks assigned to $C_1$ or $C_2$, and for each edge in this Euclidean spanner, we add a type~\ref{type:one} edge to $G$. 

Type~\ref{type:two} edges are instead constructed for pairs of $\varepsilon$-cells (Definition~\ref{definition:eps-squares}). For any pair $(E, E')$ that lies within a specified distance, we first test whether any two disks in $\Gamma_\eps(E) \times \pi_\eps(E')$ intersect.
One can intuitively imagine this as testing whether any two disks across their populations intersect, but the definition requires slightly more care so as to avoid an additional $O(\log \Psi)$ factor across the paper. 
If there exist any intersecting disks in the bichromatic intersection graph of $\Gamma_\eps(E) \times \pi_\eps(E')$, we add a single edge in the spanner between an arbitrary intersecting pair. 
To dynamically maintain type~\ref{type:two} edges under disk insertions and deletions, we need to be able to maintain whether there is \emph{any} intersection in $\Gamma_\eps(E) \times \pi_\eps(E')$. 
To this end, we maintain a maximal bichromatic matching in this graph. 
To do this, we first show a straightforward approach based on~\cite{kaplan2020dynamic, baumann2024dynamic}:
for any pair $(E, E')$ we maintain two separate disk intersection data structures (Lemma~\ref{lemma:intersection_queries}) storing the disks in $\Gamma_\eps(E)$ and $\pi_\eps(E')$ that do not appear in the matching.
In this way, we can immediately query whether a new disk $\sigma$ can be matched to an existing but unmatched disk. This leads to the following result:

\begin{restatable}{lemma}{mainone}
\label{lemma:main_big_space}
Let $\S$ be a fully dynamic set of disks in the plane with diameters in $[4,\Psi]$ for a fixed and known $\Psi$.
For any fixed $\eps \in (0,1)$, we can maintain a $(1+\eps)$-spanner of $\D(\S)$ of size $O(n \eps^{-2} \log \Psi \log (\eps^{-1}))$ using $O( \left(\frac{\Psi}{\eps}\right)^2 \log^4 n \log \Psi \log^2 (\eps^{-1}))$ expected amortised update time and $O(\left(\frac{\Psi}{\eps}\right)^2  n \log n \log \Psi \log(\eps^{-1}))$ space.
\end{restatable}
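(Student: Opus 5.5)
The plan is to define the spanner $G$ explicitly from $T(\S)$ with two edge types, prove the stretch bound by induction on distance (really, on a single edge of $\D(\S)$), and then bound size, space and update time by counting cell pairs.

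\textbf{Construction.} For every cell $C\in T(\S)$ and every cell $C'\in T(\S)$ with $|C'|=|C|$ and $C'\subset 3*C$, we maintain a Euclidean spanner (Lemma~\ref{lemma:spanner}) on the centres of the disks whose storing cell is $C$ or $C'$. Each of its edges is a type~1 edge. By Observation~\ref{obs_intersection_storing_cell} these are genuine edges of $\D(\S)$.

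For type~2 edges we take every $\eps$-cell $E$ of a cell $C$ and every $\eps$-cell $E'$ of a cell $C'$ with $|C'|\le|C|$ whose centres lie within distance $O(\Psi)$ and satisfy $d(E,E')\ge c|C|$. Here $c$ is chosen so that $E,E'$ are not already covered by the type~1 pairs. If some pair in $\Gamma_\eps(E)\times\pi_\eps(E')$ intersects, we add one such intersecting edge. Restricting to $\Gamma_\eps(E)$, i.e.\ to storing cell exactly $C$, charges each disk to a single level and so avoids an extra $\log\Psi$ factor.

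\textbf{Stretch.} It suffices to show that every edge $(\sigma,\sigma')$ of $\D(\S)$ satisfies $d_G(\sigma,\sigma')\le(1+O(\eps))d(\sigma,\sigma')$; rescaling $\eps$ then gives the claim. I would induct over edges sorted by length. Assume $|C_\sigma|\ge|C_{\sigma'}|$, and let $C'$ be the cell of $\F(\sigma')$ at the level of $C_\sigma$; this exists since families extend down to diameter $1$. If $C'\subset 3*C_\sigma$, the type~1 Euclidean spanner already connects $\sigma,\sigma'$ with stretch $1+\eps$. Otherwise $d(\sigma,\sigma')=\Omega(|C_\sigma|)$. Let $E\ni$ centre of $\sigma$ and $E'\ni$ centre of $\sigma'$ be the corresponding $\eps$-cells, with $\sigma\in\Gamma_\eps(E)$ and $\sigma'\in\pi_\eps(E')$. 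The pair $(E,E')$ is nonempty, so there is a type~2 edge $(\rho,\rho')$ with $\rho\in\Gamma_\eps(E)$ and $\rho'\in\pi_\eps(E')$. The detour from $\sigma$ to $\rho$ is a type~1 edge inside $C_\sigma$ of length $O(\eps|C_\sigma|)$. From $\rho'$ to $\sigma'$ we have $O(\eps|C_\sigma|)$ Euclidean distance, and both disks contain $C'$, so they intersect; these are shorter edges handled by induction. The total is $d(\sigma,\sigma')+O(\eps|C_\sigma|)\le(1+O(\eps))d(\sigma,\sigma')$. The main obstacle is this step: making the induction well-founded, since the short edges must be strictly shorter, and choosing the constants (the $3*C$ closeness versus $c$) so that every pair falls into one case.

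\textbf{Size.} Type~1 edges: each disk participates in $O(1)$ neighbouring pairs, giving $O(n\eps^{-2}\log\eps^{-1})$. Type~2 edges: charge each edge to its $\Gamma_\eps(E)$ side. Within a level, a nonempty $E$ pairs with $O(\eps^{-2})$ cells $E'$ per relevant level. Using that $E'$ must meet $\sigma$ at bounded distance, I would charge across the $\log\Psi$ levels to get $O(n\eps^{-2}\log\Psi\log\eps^{-1})$.

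\textbf{Maintenance.} Each type~2 pair keeps a maximal bichromatic matching and two structures from Lemma~\ref{lemma:intersection_queries} on the unmatched disks. When a disk is inserted, for each of the $O((\Psi/\eps)^2\log\Psi)$ pairs containing it we query for a match, in $O(\log^2 n)$ time, and insert the disk, in $O(\log^4 n)$ time. When a matched disk is deleted, its partner is reinserted and re-queried, which costs $O(1)$ extra operations per pair. The pair's spanner edge is any matched edge. Type~1 updates cost $O(\eps^{-2}\log n\log^2\eps^{-1})$ via Lemma~\ref{lemma:spanner}, and updating $T(\S)$ costs $O(\log\Psi)$ per update. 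Multiplying gives the stated update time. For space, each disk lies in $O((\Psi/\eps)^2\log\Psi)$ intersection structures of $O(\log n)$ size each, plus the Euclidean spanners, which matches the stated space bound up to the $\log\eps^{-1}$ factor.
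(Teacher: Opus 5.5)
Your construction has two genuine gaps, and each one alone breaks the statement.

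First, you build the type~1 Euclidean spanners only on disks whose \emph{storing cell} is $C$ or $C'$. The paper builds $\mathit{Span}(C,C')$ on $\pi(C)\cup\pi(C')$, that is, on every disk whose storing \emph{family} contains $C$ or $C'$; this is why its quadtree contains the whole storing families. With your choice, two disks whose storing cells differ in size never lie in a common Euclidean spanner. Take $\S=\{\sigma,\sigma'\}$ with $|\sigma'|=4$, $|\sigma|=\Psi$ large, and centres at distance less than your constant $c>0$ (your stretch argument needs $c>0$). No type~1 spanner contains both disks. Neither orientation gives a type~2 pair either: $d(E,E')$ is at most the distance between the centres, hence $<c\le c|C|$ whichever storing cell plays the role of $C$. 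So $G$ has no edge although $\D(\S)$ has one, and no induction can repair that.

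Your stretch argument hides this. Under $|C_\sigma|\ge|C_{\sigma'}|$, the family $\F(\sigma')$ contains only cells of size at most $|C_{\sigma'}|$, so ``the cell of $\F(\sigma')$ at the level of $C_\sigma$'' does not exist. If you flip the assumption it exists, but then the near case needs $\sigma'$ to be a point of the spanner at level $|C_\sigma|$, which is exactly the population-based definition. With populations, the detour $\rho'\rightsquigarrow\sigma'$ (both disks lie in $\pi(C')$) is also a direct type~1 path, so the induction over edge lengths you worry about is unnecessary. The price is a $\log\Psi$ factor in the type~1 count, which the stated size bound allows.

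Second, your type~2 pairs ($|C'|\le|C|$, any partner within distance $O(\Psi)$) are too many for the size bound. At cell size $s$ there are $\Theta((\Psi/s)^2)$ admissible cells $C'$, so an $\eps$-cell of a unit cell has $\Theta(\Psi^2\eps^{-2})$ partners, not $O(\eps^{-2})$ per level. Concretely, take $\Theta(\Psi^2)$ disks of diameter $4$ and $\Theta(\Psi^2)$ disks of diameter $\Psi$, with suitably separated centres in a square of side $\Psi/8$; this already produces $\Theta(n^2)$ type~2 edges.

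Your orientation forces this large range. With $\Gamma_\eps(E)$ on the larger storing cell, the detour $\sigma\rightsquigarrow\rho$ costs $\Theta(\eps|C_\sigma|)$, which is not $O(\eps\Delta)$ when $\Delta\ll|C_\sigma|$. Such edges must therefore be served by pairs based at the smaller cell. Their partners have at most its size and must be searched up to distance $\Theta(\Psi)$.

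The missing idea is the paper's coupling of scale to distance:
\begin{itemize}
\item Put $\Gamma_\eps(E)$ on the \emph{smaller} storing cell $C_1$.
\item For $\Delta\in[2^{i+3}|C_1|,2^{i+4}|C_1|)$ (or $i=1$ if $\Delta<2^4|C_1|$), pair it with a cell $C'\in\F(\sigma_2)$ of size $2^{i-1}|C_1|$ that meets $(2^{i+5}+1)*C_1$.
\item Such a $C'$ exists because $\Delta\le\frac12(|\sigma_1|+|\sigma_2|)$ forces $|C_2|=\Omega(\Delta)$.
\item There are $O(1)$ such cells per $i$, hence $O(\eps^{-2}\log\Psi)$ pairs per $\eps$-cell.
\item Both detours, of lengths $O(\eps|C_1|)$ and $O(\eps|C'|)$, are then $O(\eps\Delta)$.
\end{itemize}
Your maintenance scheme (a maximal matching plus two intersection structures per pair) is the paper's, and it goes through once the pairs are defined this way.
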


The dynamic spanner in Lemma~\ref{lemma:main_big_space} has an $\Theta(\Psi^2)$ dependence in both its update time and space. We observe that the $\Omega(\Psi^2)$ dependence in update time is unavoidable:

\begin{observation}[Figure~\ref{fig:lower_bound}]
\label{observation:lower_bound}
    Any fully dynamic data structure that maintains a $O(1)$-spanner of  $\D(\S)$ 
    has $\Omega( \Psi^2)$ update time. 
\end{observation}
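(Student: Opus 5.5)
We prove the lower bound by exhibiting an input configuration in which a single update forces $\Omega(\Psi^2)$ changes to any $O(1)$-spanner; since the data structure must at least write down the edges it adds or removes, the update time is $\Omega(\Psi^2)$. The construction takes one large disk $\rho$ of diameter $\Psi$ and a set $\S'$ of $\Theta(\Psi^2)$ small disks of diameter $4$, with centres on a grid of spacing, say, $10$ inside a square of side $\Theta(\Psi)$ contained in $\rho$. The grid is chosen so that no two small disks intersect, while every small disk lies inside $\rho$ and therefore intersects it.

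The key step is to show that while $\rho$ is present, every $O(1)$-spanner $G$ of $\D(\S' \cup \{\rho\})$ contains each edge $(\rho,\sigma)$ for $\sigma \in \S'$. Each small disk $\sigma$ has degree exactly one in $\D(\S'\cup\{\rho\})$, namely its edge to $\rho$. So $\sigma$ is connected to $\rho$ in $\D$ only through that edge, and $d_{\D}(\sigma,\rho)$ is finite. Any spanner with finite stretch must preserve connectivity, hence it must contain the unique edge incident to $\sigma$. Consequently $G$ has at least $|\S'| = \Theta(\Psi^2)$ edges incident to $\rho$. Before $\rho$ is inserted the graph has no edges at all, so $G$ is empty.

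Now consider the update sequence that repeatedly inserts and deletes $\rho$. Each insertion must add $\Theta(\Psi^2)$ edges to the maintained spanner, and each deletion must remove them. Since the spanner is maintained explicitly, with $n = \Theta(\Psi^2)$, every such update costs $\Omega(\Psi^2)$ time. Because the sequence alternates indefinitely, the bound also holds in the amortised sense: amortisation cannot absorb the cost, as every insertion of $\rho$ individually forces the change. If $n$ is larger, we add the remaining disks far away, outside $\rho$, so that they do not affect the argument.

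The main obstacle is mild: the diameters must stay within $[4,\Psi]$ and the disks must fit in the bounding box. To handle this, we pick the grid spacing and the square side as constants times $\Psi$ so that $\Theta(\Psi^2)$ disjoint diameter-$4$ disks fit strictly inside a disk of diameter $\Psi$ (spacing $\geq 4$ with side $\Psi/2$ suffices). We also note that the argument is really about stretch being finite, so it holds for any $t$-spanner, which in particular covers the $O(1)$ stretch in the statement.
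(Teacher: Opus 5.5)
Your proposal is correct and is the same argument the paper gives via Figure~\ref{fig:lower_bound}. A disk of diameter $\Psi$ contains $\Theta(\Psi^2)$ pairwise disjoint diameter-$4$ disks, each of degree one, so inserting or deleting the big disk forces $\Theta(\Psi^2)$ spanner edges to change. One small fix: in your last paragraph the grid spacing must be strictly greater than $4$, since at spacing exactly $4$ the closed disks touch and therefore intersect; your initial choice of spacing $10$ already avoids this.
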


\begin{figure}
    \centering
    \includegraphics{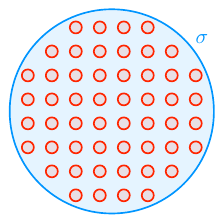}
    \caption{If $\sigma$ is deleted or inserted in $\S$, $\Theta(\Psi^2)$ edges have to be added or removed in the spanner.}
    \label{fig:lower_bound}
\end{figure}

The $\Omega(\Psi^2)$ dependence in space comes from our naive strategy of storing a separate disk intersection data structure for every pair of $\eps$-cells $(E, E')$ at a specified distance. 
To reduce this space dependence, our main insight is to introduce a new form of data structure persistence: \emph{branch persistence} (Definition~\ref{definition:branch_persistence}). This type of persistence allows access and updates to the main (root) version of the data structure and additionally allows access and updates on several \emph{branches}. An update to the root also performs the corresponding update on all branches.
A branch persistent data structure supports branch creation, branch queries, branch-updates and complete rebuilds. Instead of maintaining a disk intersection query data structure (Lemma~\ref{lemma:intersection_queries}) for every pair of suitable $\varepsilon$-cells, which requires $\Omega(\Psi^2)$ space, we instead maintain a branch persistent disk intersection data structure (Lemma~\ref{lemma:intersection_queries_branching}) for each $\eps$-cell. The corresponding disk intersection data structure can then be accessed either via a branch or via the root version of the branch persistent data structure.
By bounding the total symmetric differences between the branches of the new data structure (Lemma~\ref{lemma:bounding_size_of_matching}) and with more careful analysis, we obtain the following dynamic spanner:

\begin{restatable}{thm}{mainTwo}
\label{thm:main_small_space}
Let $\S$ be a fully dynamic set of disks in the plane with diameters in $[4,\Psi]$ for a fixed and known $\Psi$.
For any fixed $0<\eps <1$, we can maintain a $(1+\eps)$-spanner of $\D(\S)$ of size $O(n \, \eps^{-2} \log \Psi \log (\eps^{-1}))$ using $O( \left(\frac{\Psi}{\eps}\right)^2 \log^4 n \log^2 \Psi \log^2 (\eps^{-1}))$ expected amortised update time and $O( n \, \eps^{-2} \log^4 n \log \Psi)$ expected space.
\end{restatable}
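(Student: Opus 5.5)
The plan is to start from the construction behind Lemma~\ref{lemma:main_big_space} and keep its spanner unchanged: the same type~\ref{type:one} and type~\ref{type:two} edges and the same maximal bichromatic matchings. The size bound $O(n\eps^{-2}\log\Psi\log(\eps^{-1}))$ and the $(1+\eps)$-stretch argument then carry over verbatim. Only the data structures that maintain the type~\ref{type:two} matchings change.

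\textbf{Spanner and type~\ref{type:one} edges.} First I recall the stretch argument. Take any edge $(\sigma,\rho)$ of $\D(\S)$ and assume $|\sigma|\ge|\rho|$. Walk down the storing family of $\rho$ to the cell whose size is comparable to $|C_\sigma|$. If the relevant cells are close (within $3*C$), Observation~\ref{obs_intersection_storing_cell} makes the Euclidean spanner of Lemma~\ref{lemma:spanner} give a $(1+\eps)$-path. Otherwise the $\eps$-cells $E\ni\sigma$ and $E'\ni\rho$ lie within $O(\Psi/\eps)$ $\eps$-cells of each other. In that case the matched pair in $\Gamma_\eps(E)\times\pi_\eps(E')$ gives an edge whose endpoints lie within $\eps|C|$ of $\sigma$ and of $\rho$. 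Those endpoints connect back to $\sigma,\rho$ via type~\ref{type:one} paths. An induction on distance, as in the static argument of F\"urer and Kasiviswanathan, absorbs the additive $O(\eps|C|)$ error into a multiplicative $(1+O(\eps))$ factor; I then rescale $\eps$.

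\textbf{Branch persistence.} Instead of one intersection structure per pair $(E,E')$, I store for each $\eps$-cell $E'$ a single branch-persistent version (Lemma~\ref{lemma:intersection_queries_branching}) of the structure of Lemma~\ref{lemma:intersection_queries} on $\pi_\eps(E')$. For every partner $E$ there is one branch from which the disks currently matched in the $(E,E')$-matching are deleted.
\begin{itemize}
\item Inserting a disk updates the root, and hence all branches. It then queries each of the $O((\Psi/\eps)^2)$ partner branches for an unmatched intersecting disk, and adds the resulting matched pairs via branch-updates.
\item Deleting a matched disk frees its partner, which re-queries for a new match.
\end{itemize}
Each matching change costs $O(\log^4 n)$ expected amortised time, which is multiplied by the Driscoll et al.\ overhead. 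Because Observation~\ref{obs:pointer} allows unbounded in-degree, I expect an additional $O(\log\Psi)$-type overhead. I would pay for it by periodically rebuilding a branch-persistent structure once its branches have accumulated as many modifications as the root has size. Combined with the $O(\log\Psi\log(\eps^{-1}))$ levels and spanner factors, this yields the claimed update time.

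\textbf{Space (main obstacle).} The root structures cost $O(n\log n)$ per level of storing families in total, over $O(\log\Psi)$ levels. Each branch costs space proportional to its number of modifications, which is its symmetric difference with the root, i.e.\ the size of the matching for $(E,E')$. So the key step is Lemma~\ref{lemma:bounding_size_of_matching}: the total size of all maximal matchings must be $O(n\eps^{-2}\log\Psi)$ rather than $O(n\Psi^2)$.

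I would try to prove it by charging each matched pair to its $\Gamma_\eps(E)$-endpoint $\sigma$. The subpopulation uses storing cells, so each disk lies in only one $\Gamma_\eps$. The goal is then to show that a single $\sigma$ is matched in only $O(\eps^{-2})$ pairs per level, using a packing argument: the partner cells $E'$ that $\sigma$ can reach at a fixed scale, among those not already covered by type~\ref{type:one} closeness, have bounded number relative to $\eps^{-2}$. If that fails, I would instead bound the matchings globally by the number of type~\ref{type:two} edges, which is already bounded by the spanner size.

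Multiplying the total branch size by the persistence and rebuild overheads of $O(\log^3 n)$ per modification gives the $O(n\eps^{-2}\log^4 n\log\Psi)$ expected space bound.
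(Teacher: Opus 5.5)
Your overall plan follows the paper. The spanner is unchanged, and the per-pair intersection structures become one branch-persistent structure per $\eps$-cell, with one branch per partner. Your primary argument for Lemma~\ref{lemma:bounding_size_of_matching} is exactly the paper's: charge each matched pair to its unique $\Gamma_\eps$-endpoint, which has $O(1)$ partner cells and hence $O(\eps^{-2})$ partner $\eps$-cells per level. Drop the fallback, because type~\ref{type:two} edges count non-empty matchings, not matched pairs. Your re-sketch of the stretch argument is also off: one descends the storing family of the disk with the \emph{larger} storing cell, and no induction is needed. Citing Lemma~\ref{thm:spanner-size} suffices.

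The genuine gap is in the accounting that yields the stated time and space. You identify a branch's number of modifications with its symmetric difference with the root. That holds only right after a rebuild. By Lemma~\ref{lemma:branch-persistent}, every root-update is executed as a branch-update on all $B$ (up to $\Theta(\Psi^2\eps^{-2})$) branches of $\Q'(E')$ and adds $O(B\log^4 n)$ space. A disk update performs a root-update for every cell of $\F(\sigma)$. So between rebuilds the space grows by up to $\Theta(\Psi^2\eps^{-2}\log\Psi\log^4 n)$ per update, regardless of matching sizes. You use this propagation for the time bound but forget it for space.

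The theorem's bounds come from calibrating rebuilds against exactly this cost. The paper rebuilds all $\Q(E),\Q'(E)$ together every $K=N/(2\Psi^2)$ updates. A rebuild costs $O((n\log\Psi+z)\log^4 n\log B)$ with $z=2|\M|=O(n\eps^{-2}\log\Psi)$, which amortises to $O(\Psi^2\eps^{-2}\log^4 n\log\Psi\log B)$ per update. Meanwhile one period's root- and branch-updates add only $O(K\Psi^2\eps^{-2}\log\Psi\log^4 n)=O(n\eps^{-2}\log\Psi\log^4 n)$ space. Your proposal never does this balancing. Separately, the closing ``$O(\log^3 n)$ per modification'' is unsupported. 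The correct factor is $U(n)=O(\log^4 n)$ per modification, and that is where the $\log^4 n$ in the space bound comes from.

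Your rebuild rule does not fill this gap. It is motivated by the wrong overhead: the $O(\log B)$ fat-node lookup is paid on every operation and is simply part of the stated bound, so rebuilding does not remove it. As stated (rebuild a structure once its modifications reach the root size $n_E$), the rule also does not pay for the $z_E$ term of the rebuild cost $O((n_E+z_E)\log^4 n\log B)$. Suppose $\pi_\eps(E')$ holds $n_E\le B$ disks, each matched in every branch, so $z_E\approx Bn_E$. A single root-update ($B$ modifications) then triggers a rebuild costing $\Theta(Bn_E\log^4 n\log B)$. Alternately inserting and deleting one disk of $\pi_\eps(E')$ therefore costs a factor $n_E$ (up to $\Theta(B)$) more per update than claimed. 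A local scheme can be repaired by triggering at $n_E+z_E$ accumulated modifications plus a global rebuild every $\Theta(N)$ updates, but as written this step is missing. The paper's single global schedule avoids the issue.

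A smaller omission: you keep persistent structures only on the populations $\pi_\eps(E')$. Inserting a disk into $\pi_\eps(E')$, or deleting the $\Gamma_\eps$-endpoint of a matched pair, requires querying the unmatched part of $\Gamma_\eps(E)$ for every partner $E$. You therefore also need the paper's $\Q(E)$, or plain per-pair copies. Per-pair copies fit the budget, since each disk lies in one $\Gamma_\eps(E)$ with $O(\eps^{-2}\log\Psi)$ partners.
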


\subparagraph{Removing the bounding box assumption.}
The dynamic spanners in Lemma~\ref{lemma:main_big_space} and Theorem~\ref{thm:main_small_space} require that the disks in $\S$ remain in the bounding box $B = [0, \Psi^*]^2$. Given a disk~$\sigma$, our quadtree recursively splits~$B$ until we obtain a unit cell that contains the centre of $\sigma$. 
Unfortunately, if $B$ is unbounded, this procedure takes unbounded time. 

In the static setting, this issue can be avoided by using a compressed quadtree. 
However, despite papers frequently using dynamic compressed quadtrees as an underlying tool, there exists no dynamic compressed quadtree due to the problem of \emph{dynamic quadtree misalignment}. 
As the formal definition of dynamic quadtree misalignment is complex, we defer the definition to Appendix~\ref{app:generalisation}. The issue of dynamic quadtree misalignment was identified in the journal version of~\cite[Appendix A]{baumann2024dynamic}.
There, they describe a complex approach to maintaining connectivity despite quadtree misalignment, which relies on the fact that the disks have a diameter in~$[1, \Psi]$.

In Appendix~\ref{app:generalisation}, we give a simple and general approach that can be applied to~\cite{baumann2024dynamic, hoog2024fully, kaplan2020dynamic}.
Since $\Psi$ is fixed and we only require amortised bounds, it suffices to maintain an insertion-only disjoint collection of boxes $B_i \subset \mathbb{R}^2$ satisfying: (i) each $B_i$ defines an uncompressed quadtree storing $\{ \sigma \in \S \mid B_i \cap \sigma \neq \emptyset \}$, (ii) all $\sigma \in \S$ intersect at least one box $B_i$, and (iii) the union over all uncompressed quadtree spanners is a spanner of $\S$.  
After $N \in \Theta(n)$ updates, we simply rebuild our data structure from scratch --- deleting all $B_i$ that are not intersected by a disk in $\S$. 
As a consequence, we dynamically maintain our disk intersection graph representation with no misalignment issues and no increase in amortised bounds. 

\subparagraph{Connectivity.} 
A $(1+\eps)$-spanner preserves connectivity. By storing our spanner $G$ in the dynamic connectivity data structure of Holm, de Lichtenberg and Thorup~\cite{holm2001poly}, we can also answer connectivity queries in time $O(\log n / \log \log n)$ at no overhead. 
This immediately yields a connectivity data structure that uses less space than the state-of-the-art~\cite{baumann2024dynamic}.
However, its update time depends quadratically on $\Psi$, versus linearly in~\cite{baumann2024dynamic}.
In Appendix~\ref{sec:connectivity}, we combine the techniques in our dynamic spanner with the dynamic connectivity pipeline of~\cite{baumann2024dynamic, hoog2024fully} to obtain a linear dependence on $\Psi$ whilst maintaining the same space usage.
Incorporating our techniques into the pipeline is relatively straightforward, as it simply involves reconstructing their lemmas using our branch persistent data structure. This yields the following theorem:

\begin{restatable}{thm}{mainThree}
\label{thm:main_connectivity}
Let $\S$ be a fully dynamic set of disks in the plane with diameters in $[4,\Psi]$ for a fixed and known $\Psi$.
We can dynamically maintain $\S$ supporting \emph{connectivity queries} in time $O(\log n / \log\log n)$ using $O(\Psi\log^4 n \log\Psi)$ expected amortised update time and $O(n\log^4 n \log \Psi)$ expected space.
\end{restatable}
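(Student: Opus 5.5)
The plan is to follow the connectivity pipeline of~\cite{baumann2024dynamic, hoog2024fully}: maintain a sparse proxy graph $H$ on $\S$ with the same connected components as $\D(\S)$, and store $H$ in the dynamic connectivity structure of Holm, de Lichtenberg and Thorup~\cite{holm2001poly}. That structure gives $O(\log n/\log\log n)$ queries and polylogarithmic amortised cost per edge change. The spanner of Theorem~\ref{thm:main_small_space} already gives such an $H$, but each update changes $\Theta(\Psi^2)$ edges. The goal is therefore to build a coarser graph with only $O(\Psi\log\Psi)$ edge changes per update, while keeping the $O(\log \Psi)$-per-disk space accounting from the branch persistent construction.

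\textbf{Proxy graph.} We drop the $\eps$-cells entirely, since we only care about connectivity; equivalently, we fix $\eps$ to a constant. The proxy graph uses the quadtree $T(\S)$ and has two kinds of edges.
\begin{itemize}
\item Inside each cell $C$, the disks of $\pi(C)$ pairwise intersect by Observation~\ref{obs_intersection_storing_cell}. We connect them by a path or star, which costs $O(1)$ edge changes per cell, and hence $O(\log\Psi)$ per update over the storing family.
\item Between a cell $C$ and each same-size cell $C'$ within distance $O(\Psi)$, we maintain a maximal bichromatic matching between $\Gamma(C)$ and $\pi(C')$, following Lemma~\ref{lemma:main_big_space}. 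We add one edge to $H$ whenever the matching is nonempty.
\end{itemize}
Connectivity preservation is argued as in~\cite{baumann2024dynamic}. If $\sigma\in\Gamma(C)$ intersects $\rho$, then $\rho$ has some family cell $C'$ of the same size as $C$ nearby. That pair $(C,C')$ has a nonempty matching, and the resulting edge lies inside two cliques, so $\sigma$ and $\rho$ end up in the same component of $H$.

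\textbf{Reducing $\Psi^2$ to $\Psi$.} This is the main obstacle. A cell $C$ of side $s$ has $\Theta((\Psi/s)^2)$ relevant neighbours. Summed over the $O(\log\Psi)$ levels of a disk's family, this is dominated by small cells and gives $\Psi^2$. I would replace the full neighbourhood by the boundary argument of~\cite{baumann2024dynamic}. A disk $\sigma$ of diameter $|\sigma|$ whose storing cell has side $s\approx|\sigma|$ can only intersect disks stored in same-size cells that are within distance $O(\Psi)$ and not already covered by $\sigma$'s clique structure. Charging each interaction to the cells near the boundary of $\sigma$ (an annulus) gives $O(\Psi/s)$ cells per level; I would try to prove this first. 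Summing over levels, and over insertions and deletions, each update touches $O(\Psi\log\Psi)$ cell pairs. When a matched disk is deleted, its partner is rematched with one intersection query, and matching repairs amortise in the standard way.

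\textbf{Space.} Instead of storing one intersection structure per cell pair, each cell keeps one branch persistent disk intersection structure (Lemma~\ref{lemma:intersection_queries_branching}), with one branch per neighbouring cell. Each branch holds the disks that are unmatched with respect to that neighbour. By Lemma~\ref{lemma:bounding_size_of_matching}, the total symmetric difference over the branches is bounded by the matching sizes, which is $O(n\log\Psi)$ overall. With persistence overhead this gives $O(n\log^4 n\log\Psi)$ expected space. Each of the $O(\Psi\log\Psi)$ touched pairs costs $O(\log^4 n)$ amortised expected time for the branch update and query, which gives the stated update time.
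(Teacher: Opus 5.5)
Your proposal has a genuine gap at exactly the step you flag as the main obstacle. What is missing is not a sharper count of boundary cells but the \emph{containment} mechanism of the pipeline. Keeping matchings only for cells near the boundary of $\sigma$ does not preserve connectivity in your proxy graph. The cliques $\pi(C)$ do not ``cover'' the interior of $\sigma$: $\sigma$ lies only in $\pi(C)$ for $C\in\F(\sigma)$, that is, for cells containing its centre. Concretely, take $|\sigma|=\Psi$ (large) and a single disk $\rho$ of diameter $4$ whose centre is at distance $\Psi/4$ from the centre of $\sigma$. Then $\rho\subset\sigma$. However, $\rho$ and $\sigma$ share no clique, and $C_\rho$ is far from the boundary of $\sigma$, so $H$ has no path between them. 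There are $\Theta(\Psi^2)$ such interior unit cells, and treating them individually brings back the $\Psi^2$ cost.

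The paper resolves this by dropping the storing-family cliques altogether. It works on the quadtree of storing cells only and covers the interior of each disk $\sigma$ by its $O(\Psi)$ constituents $\C(\sigma)$: maximal quadtree cells inside $\sigma$ with a non-empty garrison below them. These are marked in a marked-ancestor tree (Lemma~\ref{lemma:connectivity_maintaining_C}). A disk whose storing cell lies below a marked constituent is then represented by the highest marked ancestor in a proxy graph on quadtree cells. Only with this in place can matchings be restricted to garrison pairs $\Gamma(C)\times\Gamma(C')$ with $C'\in\P(C)$. The number of such pairs touched per update is $O(\Psi)$ over all levels, since the per-level counts form a geometric sum (Lemma~\ref{lemma:bounding_size_of_C_and_P}). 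The pre-image of the perimeter adds only $O(\log\Psi)$ further cells (Lemma~\ref{lem:in_perimeter}).

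Your space argument is also incomplete. Lemma~\ref{lemma:bounding_size_of_matching} bounds the symmetric difference $z$ only at the moment of a rebuild. Fat-node persistence never frees space: by Lemma~\ref{lemma:intersection_queries_branching}, every root-update adds $O(B\log^4 N)$ space and every branch-update adds $O(\log^4 N)$. Without rebuilding, the space therefore grows with the number of updates. The paper rebuilds all branch persistent structures every $K=N\log\Psi/(2\Psi)$ updates. This keeps the $rB+b$ term at $O(K\Psi)=O(n\log\Psi)$, giving $O(n\log^4 n\log\Psi)$ expected space. The rebuild cost $O((n+z)\log^4 n\log\Psi)$ then amortises to $O(\Psi\log^4 n\log\Psi)$ per update.

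Finally, each branch query or update costs $O(\log^4 n\log B)$, not $O(\log^4 n)$. With your count of $O(\Psi\log\Psi)$ touched pairs, you would therefore overshoot the claimed update time by a $\log\Psi$ factor. The paper's $O(\Psi)$ perimeter pairs times $O(\log^4 n\log\Psi)$ per pair is what yields exactly $O(\Psi\log^4 n\log\Psi)$.
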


\subparagraph{Hypercube intersection graphs.}
In Appendix~\ref{sec:cubes}, we show that our techniques extend to intersection graphs of $d$-dimensional hypercubes. We maintain a $(1+\eps)$-spanner and a fully dynamic connectivity data structure for a dynamic set of $n$ $d$-dimensional hypercubes by replacing the disk intersection data structure (Lemma~\ref{lemma:intersection_queries}) with a hypercube intersection data structure (Lemma~\ref{lemma:intersection_queries_cubes}).
For our spanner, the update time depends on a factor $\Theta(\Psi^d)$ as there are $\Theta(\Psi^d)$ unit hypercubes contained in a hypercube of diameter~$\Psi$.
For connectivity, the pipeline from~\cite{baumann2024dynamic, hoog2024fully} shaves a single $\Psi$ factor from the update time (see Theorem~\ref{thm:hypercubes_connectivity}).

\begin{restatable}{thm}{hypercubesSpanner}
\label{thm:hypercubes_spanner}
Let $\S$ be a fully dynamic set of $d$-dimensional axis-aligned hypercubes in $\mathbb{R}^d$  with side lengths in $[4,\Psi]$ for a fixed and known $\Psi$ and constant dimension $d$.
For any fixed $0<\eps <1$, we can maintain a $(1+\eps)$-spanner of $\D(\S)$ of size $O(n \, \eps^{-d} \log \Psi \log (\eps^{-1}))$ using $O( \left(\frac{\Psi}{\eps}\right)^d \log^d n \log^2 \Psi \log^2 (\eps^{-1}))$ amortised update time and $O( n \, \eps^{-d} \log^d n \log \Psi)$ space.
\end{restatable}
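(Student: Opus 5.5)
The plan is to rerun the two-dimensional spanner pipeline (Lemma~\ref{lemma:main_big_space} together with the space-efficient refinement of Theorem~\ref{thm:main_small_space}) in $\mathbb{R}^d$, replacing every use of the disk intersection structure of Lemma~\ref{lemma:intersection_queries} with the hypercube structure of Lemma~\ref{lemma:intersection_queries_cubes}. First fix the geometric setup. Since hypercubes are axis-aligned, the diameter $|C|$ (the longest horizontal segment) is just the side length. The quadtree $T(\S)$ of Definition~\ref{definition:quadtree} is now built over $B=[0,\Psi^*]^d$, with storing cells and storing families defined verbatim. The analogue of Observation~\ref{obs_intersection_storing_cell} still holds: there are $O(n\log\Psi)$ cells, since each cube has $O(\log\Psi)$ cells in its storing family. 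Moreover, any two cubes stored at equal-size cells $C'\subset 3*C$ intersect, because each contains its storing cell and these cells touch or are close enough. The $\eps$-cells become a grid of $\eps^{-d}$ subcells per cell.

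Next, redo the two edge types and the stretch argument. For type~1 edges, for every pair of close cells (there are $O(3^d)$ per cell) I maintain the dynamic Euclidean spanner of Lemma~\ref{lemma:spanner} in dimension $d$, which contributes $O(\eps^{-d}\log\eps^{-1})$ edges per point per level. Summed over the $O(\log\Psi)$ levels of the storing family, this gives size $O(n\eps^{-d}\log\Psi\log\eps^{-1})$. For type~2 edges, for each pair of $\eps$-cells $(E,E')$ within the prescribed distance I maintain a maximal bichromatic matching on $\Gamma_\eps(E)\times\pi_\eps(E')$ and add one edge if the matching is nonempty. The stretch proof from the disk case uses only the triangle inequality, the fact that centres lie in $\eps$-cells of diameter $O(\eps|C|)$, and convexity/containment of the shapes. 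It therefore carries over with the Euclidean distance between centres as weights and constants depending on $d$. The charging bound on the number of type~2 edges (each $\Gamma_\eps(E)$ disk is charged $O(\eps^{-d})$ partner cells) also generalises.

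For the complexity bounds, I follow the branch persistent construction. I keep one branch persistent hypercube intersection structure per $\eps$-cell, built on Lemma~\ref{lemma:intersection_queries_cubes}, which is valid on a pointer machine by Observation~\ref{obs:pointer}. Branches correspond to the partner $\eps$-cells, and the branch sizes are bounded via the symmetric-difference argument of Lemma~\ref{lemma:bounding_size_of_matching}. An update to a cube touches $O(\log\Psi)$ cells. At each such cell it interacts with $O((\Psi/\eps)^d)$ relevant partner $\eps$-cells, since the number of unit-scale cells within distance $\Psi$ now grows as $\Psi^d$. Each interaction costs a constant number of queries and updates at $O(\log^d n)$ each, plus the persistence and rebuild overhead inherited from the 2D analysis (the extra $\log\Psi\log^2\eps^{-1}$ factors). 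Together these give the stated update time. The space becomes $O(\log^{d-1}n)$ per stored element times the $O(\log n)$ persistence/branch overhead, over $O(n\eps^{-d}\log\Psi)$ stored elements, which gives $O(n\eps^{-d}\log^dn\log\Psi)$. The hypercube structure is deterministic, so the bounds lose the word ``expected''. Finally, I remove the bounding box with the box-collection approach of Appendix~\ref{app:generalisation}, which is dimension-independent.

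I expect the main obstacle to be the persistence accounting. The 2D space bound $\log^4 n$ was an interplay between the $O(n\log n)$ size of Lemma~\ref{lemma:intersection_queries} and the $O(\log^4 n)$ update cost, which determines how many persistent nodes a branch update creates. With hypercubes, the update cost $\log^d n$ governs branch growth, while the base size is $n\log^{d-1}n$. I would therefore check carefully that a node-copying persistence scheme tolerant of unbounded in-degree (as flagged in Observation~\ref{obs:pointer}) charges $O(\log^d n)$ new nodes per branch update. Multiplied by the $O(n\eps^{-d}\log\Psi)$ total branch updates, this yields exactly the claimed space.
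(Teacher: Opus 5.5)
Your proposal is correct and takes essentially the same route as the paper. You rerun the construction of Section~\ref{sec:high_space} and Appendix~\ref{sec:small_space} in $\mathbb{R}^d$. You keep, for each $\eps$-cell, branch-persistent copies of the hypercube structure of Lemma~\ref{lemma:intersection_queries_cubes} with $B=O((\Psi/\eps)^d)$ branches; note the paper keeps two per cell, $\Q(E)$ and $\Q'(E)$, not one. You then rebuild every $K=\Theta(n/\Psi^d)$ updates: this period, rather than the 2D value $n/\Psi^2$, is what your count of $O(n\eps^{-d}\log\Psi)$ branch-updates per phase implicitly requires.

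The persistence worry in your last paragraph is already settled by Lemma~\ref{lemma:branch-persistent}. It uses fat nodes rather than node copying, so in-degree is irrelevant, and it charges $O(U(N))=O(\log^d N)$ space per branch-update. That gives the $\log^d n$ space factor directly, rather than through your $\log^{d-1}n\cdot\log n$ split.
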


\section{Dynamic spanner using $\tilde{O}(\Psi^2 n)$ space within a fixed bounding box}
\label{sec:high_space}

The input is a dynamic set $\S = (\sigma_1, \sigma_2, \ldots, \sigma_n)$ of $n$ disks in the plane, where $\S$ remains contained within the bounding box $[0, \Psi^*]^2$ and  $\forall \sigma \in \S$,  $|\sigma| \in [4, \Psi]$. We denote by $T(\S)$ the dynamic quadtree (Definition~\ref{definition:complete_quadtree}) storing $\S$ which has $O(n \log \Psi)$ space and can trivially be maintained using $O(\log \Psi)$ update time. 
We define a graph $G \subseteq \D(\S)$ that is a $(1+9 \eps)$-spanner of size $O(n \eps^{-2} \log \Psi \log (\eps^{-1}))$. By choosing $\eps' = \eps / 7$, we obtain a $(1+\eps)$-spanner of the same asymptotic size. Our spanner $G$ has two types of edges (see Figure~\ref{fig:the_three_edge_types}):

\begin{figure}[ht]
    \centering
    \includegraphics{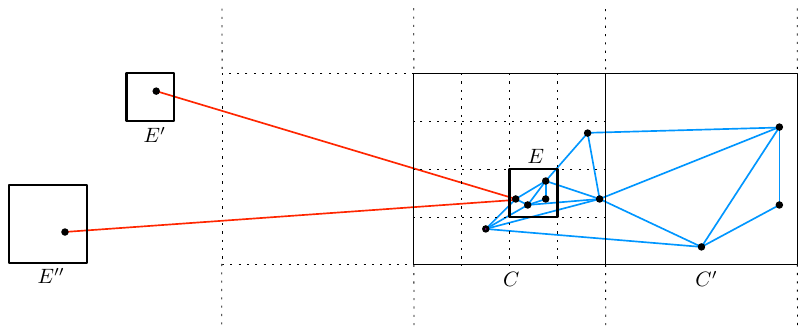}
    \caption{
    The two types of edges where blue edges are type~\ref{type:one} and red edges are type~\ref{type:two}.}
    \label{fig:the_three_edge_types}
\end{figure}

\begin{enumerate}[i.]
    \item \label{type:one} For each cell $C$ in the quadtree $T(\S)$, consider all other cells $C' \in T(\S)$ of size $|C|$ in the region $\neighb{3}$. 
For each pair of cells $(C, C')$, use Lemma~\ref{lemma:spanner} to construct a Euclidean $(1+\eps)$-spanner $\textit{Span}(C,C')$ on the centre points of disks in $\pi(C) \cup \pi(C')$. For every edge in the Euclidean $(1+\eps)$-spanner, add an edge between the pair of corresponding disks to our graph~$G$. By Observation~\ref{obs_intersection_storing_cell}, this edge lies in the disk intersection graph $\D(\S)$.
\item \label{type:two} For each cell $C$ in the quadtree and $i \in [1, \lceil \log \Psi \rceil]$, consider the cells $C'$ such that $|C'| = 2^{i-1} |C|$ and $C'$ intersects $\neighbb{\typeiiConst}{C}$. 
For every pair of $\eps$-cells $E$ of $C$ and $E'$ of $C'$, if there exists a pair $(\sigma, \sigma') \in \Gamma_\eps(E) \times  \pi_\eps(E')$ that intersect, then we select an arbitrary pair of intersecting disks $(\sigma_1, \sigma_2) \in \Gamma_\eps(E) \times \pi_\eps(E')$ and add  $(\sigma_1, \sigma_2)$ to $G$.
\end{enumerate}

\noindent
For type~\ref{type:two} edges, it will be useful to consider $(\sigma_1, \sigma_2)$ to be an ordered pair. In other words, if we write that $(\sigma_1, \sigma_2)$ is a type~\ref{type:two} edge, then we have that $\sigma_1 \in \Gamma_\eps(E)$ and $\sigma_2 \in \pi_\eps(E')$, where $E$ is an $\varepsilon$-cell of $C$, $E'$ is an $\varepsilon$-cell of $C'$, $|C'| = 2^{i-1} |C|$ and $C'$ intersects $(2^{i+5} + 1) * C$.

We show that $G$ is a $(1+9 \eps)$-spanner of $\DG(\S)$. 
Specifically, we will show that for every edge $(\sigma_1,\sigma_2)\in \DG(\S)$ there exists a path in $G$ from $\sigma_1$ to $\sigma_2$ of length at most $(1+9 \eps)\cdot d(\sigma_1,\sigma_2)$.
To this end, consider the storing cells $C_1$~of~$\sigma_1$, and $C_2$~of~$\sigma_2$. Assume without loss of generality that $|C_1| \leq |C_2|$. We have two cases: either the centre of $\sigma_2$ lies inside $\neighbb{3}{C_1}$ or not.

\begin{lemma}\label{lem:spanner_short_edge}
If the centre of $\sigma_2$ lies in $\neighbb{3}{C_1}$ then there exists a path in $G$ from $\sigma_1$ to $\sigma_2$ of length at most $(1+\eps) \cdot d(\sigma_1,\sigma_2)$.
\end{lemma}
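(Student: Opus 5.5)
The plan is to reduce the statement to a single type~\ref{type:one} Euclidean spanner. I will exhibit a quadtree cell $C'$ with $|C'| = |C_1|$, $C' \subset \neighbb{3}{C_1}$ and $\sigma_2 \in \pi(C')$. Then both centres are vertices of $\textit{Span}(C_1, C')$, and a short path in that Euclidean spanner becomes a short path in $G$.

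\emph{Step 1: construct $C'$.} Since every disk has diameter at least $4$, every storing cell has diameter at least $1$. In particular $|C_1| \geq 1$, so $|C_1|$ is one of the sizes that occurs in storing families. Cells are obtained by repeated splitting of $B$, so all cells have power-of-two sizes. Hence $|C_1| \leq |C_2|$ implies that recursively splitting $C_2$ yields a unique cell $C'$ of size $|C_1|$ that contains the centre of $\sigma_2$. If the centre lies on a cell boundary, I use the same tie-breaking convention as in the definition of the storing family. By definition, $C' \in \F(\sigma_2)$, so $C' \in T(\S)$ and $\sigma_2 \in \pi(C')$. Likewise $C_1 \in \F(\sigma_1)$, so $\sigma_1 \in \pi(C_1)$.

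\emph{Step 2: check $C' \subset \neighbb{3}{C_1}$.} Cells of a fixed size form an aligned grid, and $\neighbb{3}{C_1}$ is exactly the union of $C_1$ and its $8$ grid neighbours of the same size. The centre of $\sigma_2$ lies in $\neighbb{3}{C_1}$ and in $C'$, so $C'$ must be one of these $9$ grid cells. Therefore $C' \subset \neighbb{3}{C_1}$.

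\emph{Step 3: use the type~\ref{type:one} spanner.} By construction of type~\ref{type:one} edges, $G$ contains the edges of the Euclidean $(1+\eps)$-spanner $\textit{Span}(C_1, C')$ on the centres of $\pi(C_1) \cup \pi(C')$. Both centres are vertices of it, so there is a path between them of Euclidean length at most $(1+\eps)\, d(\sigma_1, \sigma_2)$. Each edge of this path joins two disks of $\pi(C_1) \cup \pi(C')$. By Observation~\ref{obs_intersection_storing_cell}, applied to the pairs $(C_1,C_1)$, $(C_1,C')$ and $(C',C')$ (note $C' \subset \neighbb{3}{C_1}$ and symmetrically $C_1 \subset \neighbb{3}{C'}$), these two disks intersect. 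The edge weight in $G$ is the Euclidean distance between the centres. So the path lies in $G$ and has the claimed length.

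\emph{Main obstacle.} The delicate point is the case $C' = C_1$, since type~\ref{type:one} is phrased with ``other cells''. I would read the construction as also including the pair $(C_1, C_1)$. Alternatively, I would note that any $\textit{Span}(C_1, C'')$ already spans $\pi(C_1)$. If $C_1$ has no neighbour $C''$ at all, I would add the spanner on $\pi(C_1)$ alone. This does not change the size bound.
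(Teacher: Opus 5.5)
Your proposal is correct and follows the paper's proof: descend from $C_2$ to the cell $C'$ of size $|C_1|$ containing the centre of $\sigma_2$, observe that $C' \subset 3 * C_1$ and $\sigma_2 \in \pi(C')$, and use the paths of $\textit{Span}(C_1, C')$, whose edges are intersecting pairs by Observation~\ref{obs_intersection_storing_cell}. You fill in details the paper leaves implicit. Your worry about $C' = C_1$ is already settled by the paper's set $\mathcal{C}$ in the maintenance proof, which includes pairs $(C,C)$; in any case $C_1 \neq B$ always has siblings $C''$ in $T(\S)$, so $\textit{Span}(C_1, C'')$ spans $\pi(C_1)$ and your fallback of adding a separate spanner on $\pi(C_1)$ is never needed.
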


\begin{proof}
    Since the centre of $\sigma_2$ lies in  $\neighbb{3}{C_1}$, the cell $C_2$ must have a descendent $C_2'$ of size $|C_1|$ that is contained in $\neighbb{3}{C_1}$ where $\sigma_2 \in \pi(C_2')$.
    By construction of the type~\ref{type:one} edges, $G$ contains an Euclidean $(1+\eps)$-spanner over the centre points of all the disks in $\pi(C_1) \cup \pi(C_2')$, so there is a path of length $\leq (1+\varepsilon) \cdot d(\sigma_1, \sigma_2)$ from $\sigma_1$ to $\sigma_2$ using only type~\ref{type:one} edges.
\end{proof}

\begin{lemma}\label{lem:spanner_long_edge}
   If the centre of $\sigma_2$ does not lie in $\neighbb{3}{C_1}$ then there exists a path in $G$ from $\sigma_1$ to $\sigma_2$ of length at most $(1+9 \eps) \cdot \dist(\sigma_1, \sigma_2)$. 
\end{lemma}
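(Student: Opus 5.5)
The proof goes by induction over the edges of $\DG(\S)$ in order of increasing weight. The induction hypothesis is that every edge $(\rho_1,\rho_2)\in\DG(\S)$ with $\dist(\rho_1,\rho_2)<\dist(\sigma_1,\sigma_2)$ is already $(1+9\eps)$-approximated by a path in $G$. Any edge falling under Lemma~\ref{lem:spanner_short_edge} is approximated directly, even with the better factor $(1+\eps)$. So we may assume the centre of $\sigma_2$ lies outside $\neighbb{3}{C_1}$.

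\textbf{Consequence of $\sigma_2$ lying far away.} The centre of $\sigma_1$ lies in $C_1$, so it is within $\infty$-distance $|C_1|/2$ of the centre of $C_1$. The centre of $\sigma_2$ is at $\infty$-distance at least $3|C_1|/2$ from that same point. Hence $d:=\dist(\sigma_1,\sigma_2)\ge |C_1|$.

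\textbf{Choosing the type~\ref{type:two} pair.} I pick $i\ge 1$ as the smallest index with $2^{i-1}|C_1|\ge d/2^{4}$, capped so that $2^{i-1}|C_1|\le |C_2|$. Since $|C_2|\ge|C_1|$, the storing family $\F(\sigma_2)$ then contains a cell $C'$ of size $2^{i-1}|C_1|$ with the centre of $\sigma_2$ in it, so $\sigma_2\in\pi(C')$.
\begin{itemize}
\item If the cap is not active, $d\le 2^{i+3}|C_1|$, so $C'$ intersects $\neighbb{\typeiiConst}{C_1}$, which extends $2^{i+4}|C_1|$ from the centre of $C_1$.
\item If the cap is active, $C'=C_2$. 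I use the maximality of the storing cell: a disk of diameter $|\sigma|$ whose centre lies in a cell contains a cell of size $\Omega(|\sigma|)$ around its centre. This gives $|\sigma_j|=O(|C_j|)$, hence $d\le(|\sigma_1|+|\sigma_2|)/2=O(|C_2|)$, which again lies within the allowed range.
\end{itemize}
Let $E$ be the $\eps$-cell of $C_1$ containing the centre of $\sigma_1$, and $E'$ the $\eps$-cell of $C'$ containing the centre of $\sigma_2$. Then $\sigma_1\in\Gamma_\eps(E)$ because $C_1$ is its storing cell, and $\sigma_2\in\pi_\eps(E')$. Since $(\sigma_1,\sigma_2)$ intersect, $G$ contains some type~\ref{type:two} edge $(\tau_1,\tau_2)\in\Gamma_\eps(E)\times\pi_\eps(E')$.

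\textbf{Assembling the path.} The centres of $\tau_1$ and $\sigma_1$ are within $\sqrt2\,\eps|C_1|\le\sqrt2\,\eps d$ of each other. Both disks are in $\pi(C_1)$, so they contain $C_1$ and therefore intersect. Their distance is $<d$ for $\eps$ small, so the induction hypothesis gives a path of length at most $(1+9\eps)\sqrt2\,\eps|C_1|$. Symmetrically, $\tau_2,\sigma_2\in\pi(C')$ intersect and are within $\sqrt2\,\eps|C'|$, where $|C'|\le d/8$ by minimality of $i$ (or $|C'|=|C_1|\le d$ when $i=1$). The middle edge has $\dist(\tau_1,\tau_2)\le d+\sqrt2\,\eps(|C_1|+|C'|)$ by the triangle inequality. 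Summing the three pieces gives a path of length at most
\[
d + \sqrt2\,\eps\,(|C_1|+|C'|)\,(2+9\eps).
\]

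\textbf{Main obstacle.} The main difficulty is making this final expression at most $(1+9\eps)d$. This needs $|C_1|+|C'|$ to be a sufficiently small constant fraction of $d$, but $|C_1|$ can be as large as $d$. I would first sharpen the bound on $|C_1|$ relative to $d$. Failing that, I would exploit that the statement is proved for the rescaled $\eps$ (the paper later substitutes $\eps'=\eps/7$), so that $\eps$ may be assumed small and the factor $2+9\eps$ treated as roughly $2$. I expect this constant bookkeeping, together with verifying that $C'$ really meets $\neighbb{\typeiiConst}{C_1}$ in the capped case, to be the delicate part. The rest is a standard telescoping argument.
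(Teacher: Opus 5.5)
\textbf{The proof is unfinished at its crux, and your route cannot finish it for all $\eps\in(0,1)$.} You never show that $d+\sqrt2\,\eps(|C_1|+|C'|)(2+9\eps)\le(1+9\eps)d$, and the ingredients you use cannot give it. Take the case $i=1$. Then $|C'|=|C_1|$, and $|C_1|$ can be arbitrarily close to $d$, because the far condition only yields $d\ge|C_1|$. So you would need $2\sqrt2\,(2+9\eps)\le 9$, which fails once $\eps>\tfrac{1}{2\sqrt2}-\tfrac29\approx 0.13$. Neither fallback rescues this. First, $|C_1|$ cannot be bounded by a smaller constant fraction of $d$. Second, the lemma is stated for every $\eps\in(0,1)$ with the explicit constant $9$, so the later substitution of a smaller $\eps$ cannot be used inside its proof. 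Your induction step also silently needs $\sqrt2\,\eps<1$ to guarantee $\dist(\sigma_1,\tau_1)<d$.

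\textbf{What is missing.} The two detours need no induction; they follow directly from Lemma~\ref{lem:spanner_short_edge} with factor $(1+\eps)\le 2$ instead of $(1+9\eps)$.
\begin{itemize}
\item For $(\sigma_1,\tau_1)$: since $\tau_1\in\Gamma_\eps(E)$, its storing cell is $C_1$, the same as that of $\sigma_1$. The pair intersects, so it falls directly under Lemma~\ref{lem:spanner_short_edge}.
\item For $(\sigma_2,\tau_2)$: both lie in $\pi(C')$, so their centres lie in $C'$, which is contained in both of their storing cells. Hence Lemma~\ref{lem:spanner_short_edge} applies again; equivalently, use the type~\ref{type:one} spanner covering $\pi(C')$.
\end{itemize}
Now the crude bounds $|C_1|\le d$ and $|C'|\le d$ suffice. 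Each detour costs at most $(1+\eps)\sqrt2\,\eps\, d\le 2\sqrt2\,\eps d\le 3\eps d$. The middle edge has length at most $(1+2\sqrt2\,\eps)d\le(1+3\eps)d$. The total is at most $(1+9\eps)d$ for every $\eps\in(0,1)$, which is exactly the paper's argument.

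\textbf{A smaller point on the capped case.} There, ``$d=O(|C_2|)$'' must be made quantitative. Using $|\sigma|\le 4\sqrt2\,|C_\sigma|$ and $|C_1|\le|C_2|$, one gets $d\le\tfrac12(|\sigma_1|+|\sigma_2|)\le 4\sqrt2\,|C_2|<32\,|C_2|=2^{i+4}|C_1|$. The paper avoids the cap altogether by taking $i$ one level lower, with $d\in[2^{i+3}|C_1|,2^{i+4}|C_1|)$, and proving $|C_2|\ge 2^{i-1}|C_1|$ directly.
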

\begin{proof}
Intuitively, since the centre of $\sigma_2$ is ``far away'' relative to $\sigma_1$, we will require both type~\ref{type:one} and type~\ref{type:two} edges to construct a spanning path from $\sigma_1$ to $\sigma_2$.
Recall that $C_1$ is the storing cell of $\sigma_1$ and consider the storing family $\F(\sigma_2)$ of $\sigma_2$.
The first step is to find a cell $C'$ in the storing family~$\F(\sigma_2)$ such that there exists a type~\ref{type:two} edge between $C_1$ and $C'$. We will consider two cases, depending on the relative sizes of $|C_1|$ and $\dist(\sigma_1, \sigma_2)$. Let $\Delta = \dist(\sigma_1, \sigma_2)$. 
Since the centre of $\sigma_2$ is not in $\neighbb{3}{C_1}$, we have $\Delta \geq |C_1|$, so there exists an integer $j \geq 0$ such that $\Delta \in [2^j |C_1|, 2^{j+1}|C_1|)$. 
Our two cases are $0\leq j \leq 3$ and $ j > 3$. 

\begin{itemize}

\item If $0\leq j \leq 3$ then $\Delta \in [|C_1|, 2^{4}|C_1|)$. As $|C_2| \geq |C_1|$, there exists a cell $C'\in \F(\sigma_2)$ for which $|C'| =|C_1|$. We also know that $C'$ lies in $\neighbb{(2^5+1)}{C_1}$ because $\Delta < 2^{4}|C_1|$. Therefore, the definition of type~\ref{type:two} edges applies to the cells $C_1$ and $C'$ (by setting $i=1$ in the definition), and there exists a type~\ref{type:two} edge between $C_1$ and $C'$. 

\item Now for the case of $j>3$, we consider the positive integer $i := j-3$, i.e. $\Delta \in [2^{i+3} |C_1|, 2^{i+4}|C_1|)$. The first observation is then that $\Delta$ is at most $\frac{1}{2}(|\sigma_1| + |\sigma_2|)$. 
Moreover, since $|C_2| \geq |C_1|$, $|\sigma_2| \geq \frac{1}{4} |\sigma_1|$ and so $\Delta \leq \frac{5}{2} |\sigma_2|$. Because $|C_2| \geq \frac{|\sigma_2|}{4\sqrt{2}}$ (see Figure~\ref{fig:ratiodiffcellanddisk}) it follows that $|C_2|\geq \frac{2}{20\sqrt{2}} \Delta \geq \frac{2}{20\sqrt{2}}2^{i+3}|C_1| \geq 2^{i-1}|C_1|$. This means that there exists a $C' \in \F(\sigma_2)$ with $|C'| = 2^{i-1}|C_1|$. Because $\sigma_1$ and $\sigma_2$ intersect and their centres are contained in~$C_1$ and~$C'$, the distance between $C_1$ and $C'$ is at most $\Delta$, which is upper bounded by  $2^{i+4} |C_1|$. Thus, we have that $C'$ intersects $\neighbb{(2^{i+5}+1)}{C_1}$ and $|C'| = 2^{i-1}|C_1|$, and therefore the definition of type~\ref{type:two} edges applies to the cells $C_1$ and~$C'$.
\end{itemize}

In both cases, we have found a cell $C'$ such that the definition of type~\ref{type:two} edges applies to $C_1$ and $C'$. We notice that, when applying the definition of type~\ref{type:two} edges, it provides us with a  type~\ref{type:two} edge between every pair of $\varepsilon$-cells that contain a pair of intersecting disks. 
Specifically, let $E_1$ be the $\varepsilon$-cell of $C_1$ that contains the centre of $\sigma_1$, and let $E_2$ be the $\varepsilon$-cell of $C'$ that contains the centre of $\sigma_2$. The above analysis provides a type~\ref{type:two} edge between an arbitrary pair of intersecting disks $(\sigma_1^*, \sigma_2^*)\in \Gamma_\eps(E_1) \times  \pi_\eps(E_2)$ that is in $G$. Per construction, the centres of $\sigma_1, \sigma_1^*$ lie in $E_1$, and the centres of $\sigma_2, \sigma_2^*$ lie in $E_2$, see Figure~\ref{fig:approximate_spanner_path}.

\begin{figure}
    \centering
    \includegraphics{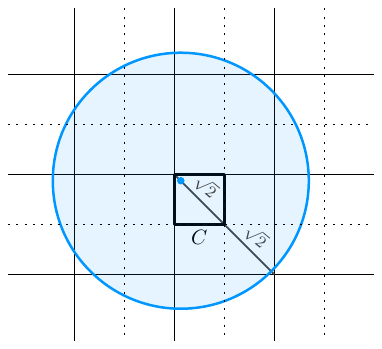}
    \caption{A disk has diameter at most $4\sqrt{2}$ times the diameter of its storing cell.}
    \label{fig:ratiodiffcellanddisk}
\end{figure}

We upper bound the length of the path $\sigma_1 \rightsquigarrow \sigma_1^* \to \sigma_2^* \rightsquigarrow \sigma_2$. Since $\sigma_1$ and $\sigma_1^*$ are both in $E_1$, we can apply Lemma~\ref{lem:spanner_short_edge} to observe that there exists a path in $G$ from $\sigma_1$ to $\sigma_1^*$ of length at most $(1+\eps) d(\sigma_1, \sigma_1^*) \leq (1+\eps) \sqrt{2} \eps |C_1|\leq 2 \sqrt{2} \eps\Delta \leq 3 \eps \Delta$. Similarly, there exists a path in~$G$ from $\sigma_2$ to $\sigma_2^*$ of length at most $(1+\eps) d(\sigma_2, \sigma_2^*) \leq (1+\eps) \sqrt{2}\eps |C'| \leq 3 \eps \Delta$. 
Since $d(\sigma_1,\sigma_2) = \Delta$,  $d(\sigma_1, \sigma_1^*) \leq \sqrt{2}\eps \Delta$, and $d(\sigma_2, \sigma_2^*) \leq \sqrt{2}\eps \Delta$. Finally, by the triangle inequality, the length of the type~\ref{type:two} edge $(\sigma_1^*, \sigma_2^*)$ is at most $(1+3\varepsilon)\Delta$.
Therefore, the length of a spanning path from $\sigma_1 \rightsquigarrow \sigma_1^* \to \sigma_2^* \rightsquigarrow \sigma_2$ is at most $3 \eps \Delta + (1+3\varepsilon)\Delta + 3 \eps \Delta \leq (1+9 \varepsilon) \Delta$.
\end{proof}

\noindent
Next, we bound the size of the spanner.
\begin{figure}
    \centering
    \includegraphics{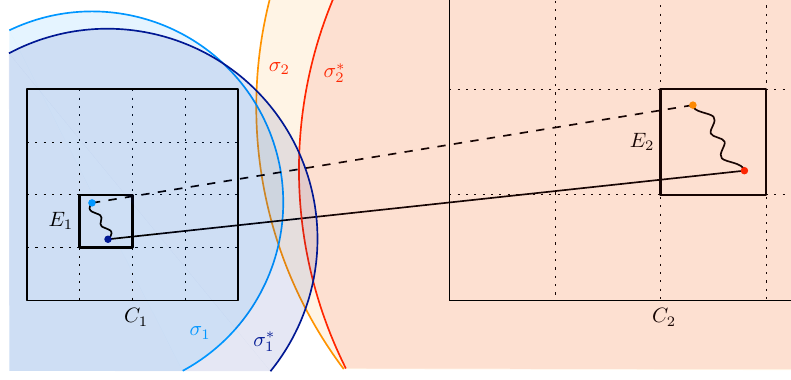}
    \caption{The path $\sigma_1 \rightsquigarrow \sigma_1^* \rightarrow \sigma_2^* \rightsquigarrow \sigma_2$ from the proof of Lemma~\ref{lem:spanner_long_edge}.
    }
    \label{fig:approximate_spanner_path}
\end{figure}

\begin{lemma}\label{thm:G_is_a_spanner}
\label{thm:spanner-size}
The graph $G$ is a $(1+\eps)$-spanner of size $O(n \eps^{-2} \log \Psi \log (\eps^{-1}))$.
\end{lemma}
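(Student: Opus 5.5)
The statement has two parts: the spanner property and the size bound. I will handle them separately.

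\textbf{Spanner property.} By Lemmas~\ref{lem:spanner_short_edge} and~\ref{lem:spanner_long_edge}, every edge $(\sigma_1,\sigma_2)$ of $\D(\S)$ has a path in $G$ of length at most $(1+9\eps)\,d(\sigma_1,\sigma_2)$. Now take any shortest path in $\D(\S)$ and replace each of its edges by such a path. Concatenating these gives $d_G \le (1+9\eps)\, d_{\D(\S)}$. Running the construction with $\eps/9$ in place of $\eps$ then yields a $(1+\eps)$-spanner, and only constant factors change in the size bound.

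\textbf{Size of type~\ref{type:one} edges.} Fix a cell $C$. There are at most $9$ cells $C'$ of equal size inside $\neighb{3}$. By Lemma~\ref{lemma:spanner}, each spanner $\textit{Span}(C,C')$ has $O((|\pi(C)|+|\pi(C')|)\eps^{-2}\log(\eps^{-1}))$ edges. Charge each term to the cell whose population it counts. Every cell is charged $O(1)$ times, so the total is $O(\eps^{-2}\log(\eps^{-1})\sum_{C}|\pi(C)|)$.

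Each disk $\sigma$ lies in $\pi(C)$ exactly for the cells $C \in \F(\sigma)$. This family has one cell per level from $C_\sigma$ down to diameter $1$, which is $O(\log\Psi)$ cells. Hence $\sum_C|\pi(C)| = O(n\log\Psi)$, and type~\ref{type:one} contributes $O(n\eps^{-2}\log\Psi\log(\eps^{-1}))$ edges.

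\textbf{Size of type~\ref{type:two} edges.} This is the step I expect to be the main obstacle. A naive count over all pairs of $\eps$-cells $(E,E')$, summed over every $i$, blows up by factors of $\Psi$ and $\eps^{-2}$. The charging must use that each such pair contributes at most one edge, and only if $\Gamma_\eps(E)$ is nonempty.

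So I charge each type~\ref{type:two} edge to the pair $(E,i)$, where $E$ is an $\eps$-cell with $\Gamma_\eps(E)\neq\emptyset$. Since $\Gamma_\eps$ uses storing cells, each disk lies in exactly one subpopulation. Thus at most $n$ $\eps$-cells $E$ are nonempty, over all cells.

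Now fix such an $E$ in cell $C$ and a level $i$. The cells $C'$ with $|C'|=2^{i-1}|C|$ that intersect $\neighbb{(2^{i+5}+1)}{C}$ are $O(1)$ in number, because the region has side $O(2^i|C|)$ and each candidate has side $\Theta(2^i|C|)$. Each such $C'$ has $\eps^{-2}$ $\eps$-cells $E'$. So the pair $(E,i)$ accounts for $O(\eps^{-2})$ edges. Summing over $i\in[1,\lceil\log\Psi\rceil]$ and over the at most $n$ nonempty $E$ gives $O(n\eps^{-2}\log\Psi)$.

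Adding the two contributions gives the claimed total $O(n\eps^{-2}\log\Psi\log(\eps^{-1}))$.
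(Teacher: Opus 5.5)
Your proposal is correct and follows essentially the same argument as the paper. It uses the $(1+9\eps)$ bound from Lemmas~\ref{lem:spanner_short_edge} and~\ref{lem:spanner_long_edge} with $\eps$ rescaled by $1/9$, the $O(\log\Psi)$-cells-per-disk charging for the type~\ref{type:one} Euclidean spanners, and the count of at most one type~\ref{type:two} edge per pair $(E,E')$ over $O(\log\Psi)$ levels and $O(\eps^{-2})$ $\eps$-cells. Charging type~\ref{type:two} edges to the at most $n$ nonempty subpopulations $\Gamma_\eps(E)$ is just a slightly cleaner phrasing of the paper's per-disk count.
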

\begin{proof}
Lemmas~\ref{lem:spanner_short_edge} and \ref{lem:spanner_long_edge} imply that $G$ is a $(1+9 \eps)$-spanner. 
By computing a $(1+\eps')$ spanner with $\eps' = \frac{\eps}{9}$, we thus obtain a $(1+\eps)$-spanner at no asymptotic size overhead.
It remains to bound the size of $G$, i.e. the number of type~\ref{type:one} and type~\ref{type:two} edges in~$G$.
We first bound the number of type~\ref{type:one} edges. Every cell $C \in T(\S)$ has at most $O(1)$ cells $C'$ of size $|C|$ such that $C' \in \neighbb{3}{C}$ or $C \in \neighbb{3}{C'}$.  
For each pair $(C,C')$, by Lemma~\ref{lemma:spanner}, we add $O( \left(|\pi(C)| + |\pi(C')| \right) \eps^{-2} \log (\eps^{-1})  )$ type~\ref{type:one} edges to the subgraph $\pi(C) \cup \pi (C')$. If we uniformly charge the edges in the subgraph, then each disk $\sigma \in \pi(C) \cup \pi (C')$ is charged $O(\eps^{-2} \log (\eps^{-1}))$ times. Each $\sigma$ can participate in $O(\log \Psi)$ cells~$C$ in the quadtree, so each disk~$\sigma$ can be charged at most $O(\eps^{-2} \log \Psi \log (\eps^{-1}))$ times. Therefore, the total number of type~\ref{type:one} edges is $O(n \eps^{-2} \log \Psi \log (\eps^{-1}))$.

We also bound for each disk $\sigma \in \S$ the number of  type~\ref{type:two} edges $(\sigma, \sigma')$ that $\sigma$ participates in, where we recall that type~\ref{type:two} edges $(\sigma, \sigma')$ are an ordered pair.
For each disk $\sigma$ with storing cell $C$, there is a unique $\eps$-cell $E$ of $C$ that contains the centre point of $\sigma$. 
For each index $i \in [1,\lceil \log \Psi \rceil]$, there are constantly many cells $C'$ of size $2^{i-1}|C|$ that intersect $\neighbb{\typeiiConst}{C}$.
For every such cell $C'$, there are $O(\eps^{-2})$ $\eps$-cells $E'$ of $C'$. 
For $(E, E')$ there exists at most one type~\ref{type:two} edge $(\sigma_1, \sigma_2) \in \Gamma_\eps(E) \times \pi_\eps(E')$ (where $\sigma_1$ \emph{may} be $\sigma$).
Therefore, there can be at most $O(\eps^{-2} \log \Psi)$ edges $(\sigma, \sigma')$ of type~\ref{type:two} for fixed $\sigma$.
\end{proof}

\begin{lemma}
    Given a set of disks $\S$ contained in the bounding box $[0, \Psi^*]^2$, we can construct the spanner $G$ in $O(n \eps^{-2} \log^4 n \log \Psi \log^2 (\eps^{-1}))$~time. 
\end{lemma}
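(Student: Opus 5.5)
The construction splits into three parts: building the quadtree, building the type~\ref{type:one} edges, and building the type~\ref{type:two} edges. The time bound comes from the type~\ref{type:two} edges, so I charge everything to them.

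\emph{Quadtree.} First insert every disk into $T(\S)$. For each $\sigma$ I walk from $B$ down to the storing cell $C_\sigma$ and then further down to diameter~$1$ along the cells containing the centre. This creates the storing family $\F(\sigma)$ and records $\sigma$ in $\pi(C)$ for every $C\in\F(\sigma)$. It takes $O(\log\Psi)$ time per disk, so $O(n\log\Psi)$ in total, and the tree has $O(n\log\Psi)$ cells (Observation~\ref{obs_intersection_storing_cell}). In the same pass I bucket each disk into its subpopulation $\Gamma_\eps(E)$, where $E$ is the $\eps$-cell of $C_\sigma$ containing its centre. I also bucket it into its population $\pi_\eps(E')$ for every cell of $\F(\sigma)$. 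This adds $O(n\log\Psi)$ bucket entries. To keep this within budget I store only non-empty $\eps$-cells, in a dictionary keyed by cell.

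\emph{Type~\ref{type:one} edges.} For each cell $C$ I enumerate the $O(1)$ same-size cells $C'\subset 3*C$ by looking up neighbours in the tree. I then build $\textit{Span}(C,C')$ on $\pi(C)\cup\pi(C')$ with Lemma~\ref{lemma:spanner}, inserting the points one by one. This costs $O(\eps^{-2}\log n\log^2(\eps^{-1}))$ per point. By the same charging argument as in Lemma~\ref{thm:spanner-size}, each disk takes part in $O(\log\Psi)$ such pairs. The total is therefore $O(n\eps^{-2}\log n\log\Psi\log^2(\eps^{-1}))$, which is within the bound.

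\emph{Type~\ref{type:two} edges.} This is the dominant and most delicate step. Enumerating all pairs of $\eps$-cells for every $C$ and every $i$ and testing each pair would cost time in the number of $\eps$-cell pairs, which can be $\Theta(\eps^{-4})$ per cell pair. I avoid this by charging to the subpopulation side. For every non-empty $E$ (an $\eps$-cell of $C$ with $\Gamma_\eps(E)\neq\emptyset$), every $i\in[1,\lceil\log\Psi\rceil]$, every one of the $O(1)$ cells $C'$ of size $2^{i-1}|C|$ meeting $(2^{i+5}+1)*C$, and every $\eps$-cell $E'$ of $C'$ with $\pi_\eps(E')\neq\emptyset$, I have to decide whether some pair in $\Gamma_\eps(E)\times\pi_\eps(E')$ intersects. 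I do this by building the intersection structure of Lemma~\ref{lemma:intersection_queries} on $\pi_\eps(E')$ once per non-empty $E'$. The total build cost is $O(\log^4 n)$ per bucket entry, so $O(n\log\Psi\log^4 n)$ over all entries. Then I query it with each disk of $\Gamma_\eps(E)$ and stop at the first hit, which gives the edge.

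The cost is one query of $O(\log^2 n)$ per triple $(\sigma,i,E')$ with $\sigma\in\Gamma_\eps(E)$. As in the size bound of Lemma~\ref{thm:spanner-size}, each $\sigma$ belongs to exactly one subpopulation and has $O(\eps^{-2}\log\Psi)$ relevant $E'$. The total is therefore $O(n\eps^{-2}\log\Psi\log^2 n)$.

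The remaining work is bookkeeping. Finding the $O(1)$ cells $C'$ is a walk up from $C$ by $i$ levels and then across, and costs $O(\log\Psi)$ per $(E,i)$. Listing the non-empty $E'$ of $C'$ uses the dictionary. Adding everything up stays within $O(n\eps^{-2}\log^4 n\log\Psi\log^2(\eps^{-1}))$. The main obstacle I expect is precisely to avoid iterating over empty $\eps$-cell pairs, which would cost $\eps^{-4}$; the per-disk charging described above resolves it.
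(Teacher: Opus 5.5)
Your proposal is correct and follows essentially the same route as the paper's proof: bucket the disks into populations and subpopulations in $O(n\log\Psi)$ time, build each $\textit{Span}(C,C')$ by point insertions, store every non-empty $\pi_\eps(E')$ in the structure of Lemma~\ref{lemma:intersection_queries}, and query it with the disks of $\Gamma_\eps(E)$, charging the $O(\eps^{-2}\log\Psi)$ queries per disk to the unique subpopulation that contains it. One small accounting point is that paying $O(\log\Psi)$ per pair $(E,i)$ to locate the cells $C'$ totals $O(n\log^2\Psi)$, which is not dominated by the stated bound when $\Psi$ is large; you can make this $O(1)$ per pair $(E,i)$ by stepping to the next ancestor of $C$ as $i$ increases and storing at each cell of $T(\S)$ pointers to the $O(1)$ same-size cells in its constant-size neighbourhood.
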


\begin{proof}
We can construct the quadtree and compute for each cell $C$ the set $\pi(C)$ in $O(n \log \Psi)$ total time.
There are at most $O( n \log \Psi)$ $\eps$-cells $E$ with non-empty population $\pi_\eps(E)$.
In $O(n \log \Psi)$ time, we can compute the population $\pi_\eps(E)$ and the subpopulation $\Gamma_\eps(E)$ for all of these cells. 
Next, we consider all $O(n \log \Psi)$ cells $C$ for which $\pi(C)$ is non-empty and the $O(1)$ cells $C'$ for which there are type~\ref{type:one} edges between disks in $C$ and $C'$. 
Constructing the Euclidean spanner between $C$ and $C'$ takes $O( \left( |\pi(C)| + |\pi(C')| \right) \eps^{-2} \log n \log^2 (\eps^{-1}))$ time. Since each $\sigma \in S$ appears in $O(\log \Psi)$ sets $\pi(C^*)$ for $C^* \in T(\S)$, doing this for all such pairs $(C, C')$ takes $O(n \eps^{-2} \log n \log \Psi \log^2 (\eps^{-1})) $ total time. 

Next, we consider all $\eps$-cells $E$ for which the population $\pi_\eps(E)$ is non-empty, and we store $\pi_\eps(E)$ in the data structure of Lemma~\ref{lemma:intersection_queries}, which takes $O(n \log^4 n \log \Psi)$ expected total time. Let $E$ be an $\eps$-cell of a quadtree cell $C$. We consider for all $i \in [1, \lceil \log \Psi \rceil]$ all $O(1)$ cells $C'$ of size $2^{i - 1}|C|$ that intersect $\neighbb{\typeiiConst}{C}$.
For each of the $O(\eps^{-2})$ $\eps$-cells $E'$ of $C'$, we query for each disk $\sigma \in \Gamma_\eps(E)$ whether it intersects a disk $\sigma'$ in $\pi_\eps(E')$ in $O(\log^2 n)$ time.
If so, then we add $(\sigma, \sigma')$ to $G$ and continue to some other $\eps$-cell $E''$.
Since each disk appears in exactly one subpopulation $\Gamma_\eps(E)$ this takes $O(n \eps^{-2} \log^2 n \log \Psi)$ total time. 
\end{proof}

\subsection{Dynamically maintaining our spanner }\label{sec:dynamically_maintaining_the_spanner}

We show how to maintain~$G$ dynamically under disk insertions and deletions. We will maintain the type~\ref{type:one} and type~\ref{type:two} edges independently. Maintaining the type~\ref{type:one} edges will be relatively straightforward, as it essentially only requires us to invoke the dynamic updates from Lemma~\ref{lemma:spanner}. Maintaining the type~\ref{type:two} edges will require more work.

Following the definition of type~\ref{type:two} edges, let us consider pairs of $\varepsilon$-cells $(E,E')$ defined as follows: for every cell~$C$ in the quadtree $T(\S)$, for every $i \in [1,\lceil \log \Psi \rceil]$, and for every cell~$C'$ where $|C'| = 2^{i-1} |C|$ and $C'$ intersects $(2^{i+5}+1) * C$, consider pairs $(E,E')$ where $E$ is an $\varepsilon$-cell of $C$ and $E'$ is an $\varepsilon$-cell of $C'$. Recall that there is a type~\ref{type:two} edge between $E$ and $E'$ if and only if there exists a pair of disks $(\sigma, \sigma') \in \Gamma_\eps(E) \times \pi_\eps(E')$ that intersect. Our approach will be to maintain a maximal matching in the bichromatic disk intersection graph of disks in  $\Gamma_\eps(E) \times \pi_\eps(E')$. To this end, we will construct, separately for every pair of $\varepsilon$-cells $(E,E')$, the following three auxiliary data structures (see Figure~\ref{fig:auxiliary_ds}).

\begin{definition}\label{def:auxiliary_ds} For every pair of $\eps$-cells $(E,E')$, where for their quadtree cells $C$ and $C'$ it holds that $|C'| = 2^{i-1} |C|$ for some $i \in [1,\lceil \log \Psi \rceil]$ and $C'$ intersects $(2^{i+5}+1) * C$, we:
\begin{itemize}
    \item Store a maximal bichromatic matching $\MBM(E,E')$ of the bipartite graph $\Gamma_\eps(E) \times \pi_\eps(E')$, where there is an edge $(\sigma, \sigma') \in \Gamma_\eps(E) \times \pi_\eps(E')$ if and only if $\sigma$ and $\sigma'$ intersect. 
    \item Store a disk intersection query structure $Q(E,E')$, defined in Lemma~\ref{lemma:intersection_queries}, for all disks in $\Gamma_\eps(E)$ that do not appear in the matching $\MBM(E,E')$. 
    \item Store a disk intersection query structure $Q'(E,E')$, defined in Lemma~\ref{lemma:intersection_queries}, for all disks in $\pi_\eps(E')$ that do not appear in the matching $\MBM(E,E')$. 
\end{itemize}
\end{definition}

\begin{figure}
    \centering
    \includegraphics{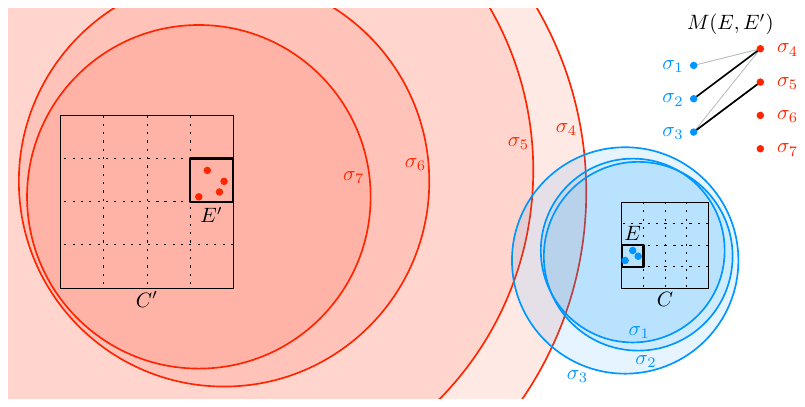}
    \caption{The bichromatic disk intersection graph of disks in $\Gamma_\eps(E)$ (blue) and $\pi_\eps(E')$ (red) and the maximal matching $M(E,E')$ (black). $Q(E,E')$ contains $\sigma_1$, and $Q'(E,E')$ contains $\sigma_6$ and $\sigma_7$.}
    \label{fig:auxiliary_ds}
\end{figure}

Now that we have set up the three auxiliary structures, we are ready to prove the main theorem of the section, which is to dynamically maintain the spanner~$G$.

\mainone*

\begin{proof}
We assume that $\S$ remains inside $[0, \Psi^*]^2$ (we remove this assumption in Appendix~\ref{app:generalisation}). 
As we maintain the same spanner as in Lemma~\ref{thm:spanner-size}, the spanner size bound follows.

\subparagraph{Notation.} Consider the quadtree cells of at least unit size obtained by recursively splitting the bounding box. 
We let $\mathcal C$ be the set of pairs of such cells $(C,C')$ satisfying: $|C| = |C'|$ and $C' \subset 3*C$. Recall that $\textit{Span}(C,C')$ is a dynamic Euclidean spanner on the points $\pi(C) \cup \pi(C')$. Let $\mathcal E$ be the set of pairs of $\varepsilon$-cells $(E,E')$ satisfying: $E$ is an $\varepsilon$-cell of $C$, $E'$ is an $\varepsilon$-cell of $C'$, and there is an $i \in [1,\lceil \log \Psi \rceil]$ such that $|C'| = 2^{i-1} |C|$ and $C' \subset (2^{i+5}+1) * C$. 

\subparagraph{Disk insertions.} When inserting a disk $\sigma$ into $\S$ we perform the following operations:
\begin{enumerate}
    \item We consider all quadtree cells, down to unit size, that are contained in $\sigma$ and contain the centre of $\sigma$. This is the storing family $\F(\sigma)$ of $\sigma$. For each cell $C \in \F(\sigma)$ we either identify $C$ in $T(\S)$ or add $C$ to $T(\S)$. 
    \item We consider all quadtree cells $C \in \F(\sigma)$. For every quadtree cell $C'$ such that $(C,C') \in \mathcal C$, we insert the centre of $\sigma$ into the Euclidean spanner $\mathit{Span}(C,C')$. 
    \item We consider the unique storing cell $C_\sigma$ of $\sigma$ and the unique $\eps$-cell $E$ of $C_\sigma$ that contains the centre of $\sigma$. 
    We consider all $E'$ such that $(E,E') \in \mathcal E$ and test whether $\sigma$ intersects a disk~$\sigma'$ in $\pi_\eps(E')$ that is not already in the matching  $\MBM(E,E')$ by querying $Q'(E,E')$.
    \begin{itemize}
        \item If so, we add $(\sigma, \sigma')$ to $\MBM(E,E')$ and delete $\sigma'$ from $Q'(E,E')$. \item Otherwise, we add $\sigma$ to $Q(E,E')$.
    \end{itemize}
    \item To better match Definition~\ref{def:auxiliary_ds}, we now add a prime to our notation and consider all $C' \in \F(\sigma)$. There is a unique $\eps$-cell $E'$ of $C'$ that contains the centre of $\sigma$. We consider all $\eps$-cells $E$ such that $(E,E') \in \mathcal E$.
    We test whether $\sigma$ intersects a disk~$\sigma''$ in  $\Gamma_\eps(E)$ that is not already in the maximal bichromatic matching  $\MBM(E,E')$ by querying $Q(E,E')$.
    \begin{itemize}
        \item If so, we add $(\sigma'', \sigma)$ to $\MBM(E,E')$ and delete $\sigma''$ from $Q(E,E')$.  
        \item Otherwise, we add $\sigma$ to $Q'(E,E')$.
    \end{itemize}
    \item For every edge updated in any Euclidean spanner $\mathit{Span}(C,C')$, we update the corresponding type~\ref{type:one} edge in $G$ and whenever an empty bichromatic matching~$\MBM(E,E')$ becomes non-empty, we also add the corresponding type~\ref{type:two} edge to~$G$.
\end{enumerate}

\subparagraph{Disk deletions.}
When deleting a disk $\sigma$ from $\S$ we perform the following operations:
\begin{enumerate}
        \item We consider all quadtree cells, down to unit size, that are contained in $\sigma$ and contain the centre of $\sigma$. This is the storing family $\F(\sigma)$ of $\sigma$. If every cell $C \in T(\S)$ records $|\pi(C)|$, then we can simply iterate over all cells $C \in \F(\sigma)$ and decrement this counter.
        At the end of the procedure, we then remove all cells $C \in F(\sigma)$ for which we set the counter to zero, to update $T(\S)$. 
    \item For all cells $C \in \F(\sigma)$, for every quadtree cell $C'$ such that $(C,C') \in \mathcal C$, we delete the centre of $\sigma$ from the Euclidean spanner $\mathit{Span}(C,C')$. 
    \item We consider the unique storing cell $C_\sigma$ of $\sigma$ and its $\eps$-cell $E$ that contains its centre.  We consider all $E'$ such that $(E,E') \in \mathcal E$ and query whether $\sigma$ is in the matching $M(E,E')$.
    \begin{itemize}
        \item If there is an edge $(\sigma,\sigma') \in M(E,E')$,  we delete $(\sigma,\sigma')$ from the matching. To maintain a maximal matching, we need to test whether we can re-match $\sigma'$. We query whether $\sigma'$ intersects a disk~$\sigma''$ in $\Gamma_\eps(E)$ that is not already in $\MBM(E,E')$ by querying $Q(E,E')$.
        \begin{itemize}
            \item If so, we add $(\sigma'', \sigma')$ to $M(E,E')$ and delete $\sigma''$ from $Q(E,E')$. 
            \item Otherwise, we insert $\sigma'$ into $Q'(E,E')$.
        \end{itemize}
        \item Otherwise, if there is no edge $(\sigma,\sigma') \in M(E,E')$ then we must delete $\sigma$ from $Q(E,E')$.
    \end{itemize}
    \item To better match Definition~\ref{def:auxiliary_ds}, we now add a prime to our notation and consider all $C' \in \F(\sigma)$. There is a unique $\eps$-cell $E'$ of $C'$ which contains the centre of $\sigma$. We consider all $\eps$-cells $E$ such that $(E,E') \in \mathcal E$ and query whether $\sigma$ is in the matching $M(E,E')$.
    \begin{itemize}
        \item If there is an edge $(\sigma'', \sigma)$ in $M(E,E')$ then we delete $(\sigma'', \sigma)$ from the matching. To maintain a maximal matching, we consider re-matching $\sigma''$ by querying whether $\sigma''$ intersects a disk~$\sigma' \in \pi_\eps(E')$ that is not already in  $\MBM(E,E')$, by querying $Q'(E,E')$.
        \begin{itemize}
            \item If so, then we add $(\sigma'', \sigma')$ to $M(E,E')$ and delete $\sigma'$ from $Q'(E,E')$. 
            \item Otherwise, we insert $\sigma''$ into $Q(E,E')$.
        \end{itemize}
        \item If $\sigma$ is not matched in $\MBM(E,E')$ then we delete $\sigma$ from $Q'(E,E')$.
    \end{itemize}
    \item For every updated Euclidean spanner $\mathit{Span}(C,C')$, we update the corresponding type~\ref{type:one} edge in $G$ and whenever a matching~$\MBM(E,E')$ becomes empty (or, when its corresponding edge in $G$ had $\sigma$ as an endpoint) we update the corresponding type~\ref{type:two} edge in~$G$. 
\end{enumerate}\vspace{-0.3cm}
\subparagraph{Update time.} Let $\sigma$ be either an inserted or a deleted disk. 
There are $O(\log \Psi)$ quadtree cells $C \in \F(\sigma)$. For each $C$ there are $O(1)$ quadtree cells $C'$ such that $(C,C') \in \mathcal C$ for which we maintain a Euclidean spanner $\textit{Span}(C,C')$. Inserting or deleting a point from this spanner takes $O(\eps^{-2} \log n \log^2 (\eps^{-1}))$ time. In total, updating the Euclidean spanners and the type~\ref{type:one} edges of $G$ (step 2 and 5) takes $O(\eps^{-2} \log n \log \Psi \log^2 (\eps^{-1}))$ time per update in $\S$.

We note that steps 3 and 4 execute the same procedure, but the more expensive operation is step 4, which first iterates over all $O(\log \Psi)$ cells $C' \in \F(\sigma)$.
Every such cell $C'$ has a unique $\varepsilon$-cell $E'$ containing the centre of $\sigma$, and for each fixed $E'$, there are $O(\Psi^2 \varepsilon^{-2})$ $\varepsilon$-cells $E'$ such that $(E,E') \in \mathcal E$. 
Indeed, the definition of $\mathcal{E}$ involves a pair of quadtree cells $(C, C')$ where $C$ is smaller than $C'$. 
For a fixed quadtree cell, $C'$, for the bottom level of the quadtree (the level where all quadtree cells have unit size) there are $O(\Psi^2)$ cells that can lie within distance $O(\Psi)$ from $C$ (every such cell $C$  generates at most $O(\eps^{-2})$ pairs  $(E,E') \in \mathcal E$).
For the level above, there are half as many, and so by a geometric series we obtain that in Step 4 there are $O(\Psi^2 \varepsilon^{-2})$ pairs $(E, E')$. For each pair $(E,E')$ we perform $O(1)$ queries and updates on the three auxiliary data structures. The running time for these queries and updates is dominated by the $O(\log^4 n)$ amortised expected time to update $Q(E,E')$ and $Q'(E,E')$ (Lemma~\ref{lemma:intersection_queries}). In total, updating $M(E,E')$, $Q(E,E')$, $Q'(E,E')$ and the type~\ref{type:two} edges of $G$ (step 3, 4, and 5) takes $O(\Psi^2 \varepsilon^{-2} \log^4 n \log \Psi)$ expected amortised time per update.

\subparagraph{Space usage.} We analyse the space usage of the three main components (the Euclidean spanner, the maximal bichromatic matchings, and the intersection data structures) separately.

The size and space usage of the Euclidean spanner are asymptotically the same. As before, the total space of all Euclidean spanners is thus $O(n \eps^{-2} \log \Psi \log (\eps^{-1}))$.
Next, we analyse the total size of all bichromatic matchings. Consider a cell $C$ in the quadtree and an $\eps$-cell $E$ of $C$. There are $O(\eps^{-2} \log \Psi)$ cells $E'$ such that $(E,E') \in \E$. We thus consider only $O(\eps^{-2}\log \Psi)$ maximal bichromatic matchings for $E$. As each disk appears in the subpopulation $\Gamma_\eps(E)$ of exactly one $\eps$-cell $E$, the total number of edges in all matchings is $O(n\eps^{-2}\log \Psi)$.

Finally, we analyse the size of the disk intersection data structures. A disk appears in disk intersection structures $Q'(E,E')$ for $O(\log \Psi)$ $\eps$-cells $E'$, at most once per quadtree level. For such an $\eps$-cell $E'$, there are $O(\Psi^2 \eps^{-2})$ $\eps$-cells $E$ for which we store a $Q'(E,E')$ data structure. Note again that by a geometric series over all quadtree levels, it follows that for a fixed $\eps$-cell $E'$ there are a total of $O(\Psi^2 \eps^{-2})$ $\eps$-cells $E$ for which we store a $Q'(E,E')$ data structure.
As the size of the disk intersection data structure is $O(n \log n)$, the total size of all of the $Q'(E,E')$ data structures is $O(n \Psi^2 \eps^{-2} \log n \log \Psi)$. The total space of the $Q(E,E')$ data structures is dominated by the $Q'(E,E')$ data structures. 
\end{proof}

\bibliographystyle{plainurl}
\bibliography{references}
\appendix
\newpage




\section{Achieving $\tilde{O}(n)$ space within a fixed bounding box}\label{sec:small_space}

The data structure of Lemma~\ref{lemma:main_big_space} has a quadratic size-dependency on the maximum disk size $\Psi$.
This is primarily due to the fact that 
the data structure requires for every $\eps$-cell that stores a set of disks $\S'$,  $O(\Psi^2\eps^{-2})$ copies of an intersection data structure, each of which includes some subset of $\S'$.
We argue that in many of these cases, these copies are near-identical. 
We show that we can obtain better space-bounds by making the data structure sensitive to the number of edges across all maximal bichromatic matchings instead.

We will achieve this through a form of data structure persistence. 
Persistence is traditionally a way to retain older versions of a dynamic data structure. \textit{Partial persistence} allows queries to all previous versions of the data structure, but only allows updates to the latest version. \textit{Full persistence} allows both queries and updates to any version of the data structure. There are various ways to make a data structure persistent, as described by Driscoll, Sarnak, Sleator, and Tarjan~\cite{driscoll1986persistent}.
The type of persistence we require slightly differs from both partial and full persistence: it is more general than partial persistence, but because our auxiliary intersection data structures cannot natively be implemented on a pointer machine whilst maintaining constant in-degree, we cannot easily use full persistence.
We therefore define an alternative type of persistence, called \emph{branch persistence}. To be precise, our goal is to make the intersection data structure of Lemma~\ref{lemma:intersection_queries} branch persistent. 

We adapt the \emph{fat node persistence} technique from~\cite{driscoll1986persistent}, which is applicable to all data structures that work in the pointer machine model (in \cite{driscoll1986persistent}, this is called a \textit{linked data structure}). When using fat node persistence, each node in the pointer machine is made \emph{fat}, meaning that they not just store one value, but at most one value for each version that we keep track of. Whenever an update is performed on the node, the current value is not replaced, but a new value is added, including a version stamp. This way, each old version can still be accessed by searching for the corresponding version value in each node.



\subsection{Branch persistence}\label{sec:applying_branching_persistence}
Let $I(\S)$ be a dynamic pointer machine data structure storing the $n$ objects in $\S$. 
Instead of defining \emph{versions}, which are typically used in persistence, we will distinguish between the \emph{root} version of the data structure and different \emph{branches}. 
Each branch $i$ stores a subset $\S_i \subseteq \S$. A query then specifies a branch $i$, and performs the given query on the set $\S_i$. Similarly, an update specifies in which branch the update should be performed. To ensure that $\S_i$ remains a subset of $\S$, we only allow insertions of objects that are in $\S$. Additionally, one can perform a \emph{root-update} that performs the update on the root that stores the data structure on $\S$ \emph{and} then performs the same update on each branch. We illustrate this principle in Figure~\ref{fig:branching}.

\begin{figure}[h]
    \centering
    \includegraphics[page={1}]{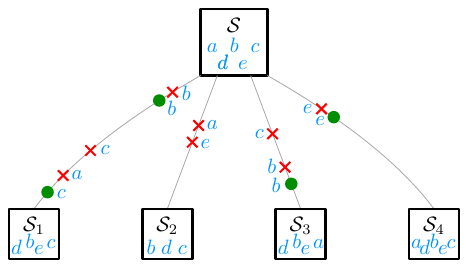}
    \caption{After creating four branches and making the indicated \emph{branch-updates} (each branch update either deletes or inserts the indicated object), the sets $\S_i$ consist of the indicated objects.}
    \label{fig:branching}
\end{figure}

\begin{definition}
    \label{definition:branch_persistence}
    Given a dynamic pointer machine data structure $I(\S)$ storing a (dynamic) set of $n$ objects~$\S$, a \emph{branch persistent} version is a data structure where:
    \begin{itemize}
        \item the \emph{root} (with label 0) corresponds to the current set $\S$,
        \item each \emph{branch}, with integer label $i$, corresponds to a subset $S_i \subseteq \S$.
    \end{itemize} Let $B$ denote the current number of branches. The following operations are supported:
    \begin{itemize}
    \item \emph{query}$(i, \rho)$: perform a query with $\rho$ on $I(\S_i)$.  
    \item \emph{branch-update}$(i, \sigma, \texttt{Insert} / \texttt{Delete})$: if $\sigma \in \S$ insert $\sigma$ into or delete $\sigma$ from $\S_i$.
        \item \emph{root-update}$(\sigma, \texttt{Insert} / \texttt{Delete})$: insert $\sigma$ into or delete $\sigma$ from $\S$, then perform the same update on each branch $i$ using a \emph{branch-update}. 
    \item \emph{branch}$(i)$: if there is no branch $i$, create a new branch with label $i$ that has $\S_i = \S$.
    \item \emph{rebuild}: rebuild the data structure from scratch. 
\end{itemize}
\end{definition}

\begin{lemma}\label{lemma:branch-persistent} A dynamic pointer machine data structure $I(\S)$ with $Q(n)$ query time and $U(n) \in \Omega(\log n)$ update time can be made branch persistent, supporting $B$ branches and with an upper bound of $N$ on $|\S|$ between two rebuilds, such that:
\begin{itemize}
    \item \emph{Queries} are supported in  $O(Q(N)\log B)$ time.
    \item  \emph{Root-updates} are supported in $O(U(N) B \log B)$ time and increase the space  by $O(U(N) B)$.
    \item \emph{Branch-updates} are supported in $O(U(N) \log B)$ time and increase the space  by $O(U(N))$. 
    \item \emph{Rebuilds} are supported in $O(( \,|\S|\, +z) U( \,|\S|\, ) \log B)$ time, where  $z$ is the size of the symmetric difference $z := \sum\limits_{i \in [B]} \, |\S - \S_i|$.  The space after the rebuild is $O((\,|\S|\,+z) U(\,|\S|\,))$. 
\end{itemize}
\end{lemma}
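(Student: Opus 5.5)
We adapt the fat node technique of Driscoll et al.\ to the branch setting. Each pointer node of $I(\S)$ becomes a \emph{fat node} that stores a root value plus, for each branch $i$ on which the node has been modified independently of the root, a value stamped with label $i$. These branch-specific values sit in a balanced search tree keyed by branch label. Reading a field of a node while executing a program on branch $i$ works as follows: search the node's tree for label $i$; if it is present, use that value, and otherwise fall back to the root value. A program on branch $i$ therefore follows the same pointer structure as $I(\S_i)$ would, and each pointer traversal or field read costs $O(\log B)$ extra. The invariant that makes this correct is that, for every branch $i$, the pointer structure obtained by resolving every field for label $i$ is exactly a valid instance of $I(\S_i)$. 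Fat nodes avoid the in-degree problem of Observation~\ref{obs:pointer}, because we never copy nodes; we only add stamped values. The cost is that a lookup is a search tree query instead of a constant-time access.

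Given this invariant, the operations are handled as follows.
\begin{itemize}
\item \emph{branch}$(i)$ creates nothing: with no stamped values, branch $i$ resolves to the root, so $\S_i = \S$.
\item \emph{query}$(i,\rho)$ runs the original query program with branch-$i$ resolution, taking $O(Q(N)\log B)$ time.
\item \emph{branch-update}$(i,\sigma,\cdot)$ runs the update program of $I$ on the resolved instance. Every field write inserts or overwrites a stamped value for label $i$, and every newly allocated node is a fresh fat node that is valid only for branch $i$. This gives $O(U(N)\log B)$ time and $O(U(N))$ new space. We use $U(n)\in\Omega(\log n)$ only to absorb bookkeeping, such as locating $\sigma$'s data node.
\item \emph{root-update} is the delicate step. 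Writing a root value directly would silently change every branch that has no stamp on that field, and those branches may have diverged elsewhere, which would break the invariant. So before modifying a field with the root program, we first push the old root value down as an explicit stamp for every branch lacking one. We then execute the root update, and after that we run a separate branch-update for each of the $B$ branches. The pushes cost $O(B)$ per touched field, so the update costs $O(U(N)B\log B)$ time and $O(U(N)B)$ space.
\end{itemize}

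The main obstacle is arguing that this push-down scheme keeps every branch's resolved structure consistent. The hard part is showing that a root-update, followed by the per-branch updates, gives each branch exactly the structure that $I$ would produce when applying that update to $I(\S_i)$. I would prove this by induction over operations, using the invariant above.

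For \emph{rebuild}, we build $I(\S)$ from scratch as the new root, with no stamps. For each branch $i$, we then replay $|\S-\S_i|$ branch-deletions, since $\S_i\subseteq\S$. Building the root costs $O(|\S|\,U(|\S|))$ by repeated insertion. The replays cost $O(z\,U(|\S|)\log B)$ time and add $O(z\,U(|\S|))$ space. Together this matches the stated bounds. The rebuild also discards the stale stamps that accumulated, which is why the space after a rebuild depends only on $|\S|+z$. Finally, $N$ bounds $|\S|$ between rebuilds, so all per-operation costs can be stated in terms of $U(N)$ and $Q(N)$.
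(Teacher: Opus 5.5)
Your construction follows the paper's proof. It uses fat nodes whose per-branch versions sit in a balanced tree keyed by label, and reads fall back to the root version $\nu_0$. Branch-updates write only label-$i$ versions, and root-updates are performed as one update per branch. Rebuilds reinsert $\S$ into a fresh root and replay the deletions of $\S\setminus\S_i$ on every branch.

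Your push-down step is a refinement of the paper's argument, not a detour. The paper simply runs a branch-update on all branches ``including the root branch''. It justifies the invariant by noting that an update on branch $i$ only writes $\nu_i$. That reasoning does not cover label $0$: every branch lacking its own version of $\nu$ reads $\nu_0$, so a root write (e.g.\ a rebalancing step) can corrupt a diverged branch. Copying the old root value into each unstamped branch before every root write costs $O(B\log B)$ time and $O(B)$ space per written field. This stays within the claimed $O(U(N)B\log B)$ time and $O(U(N)B)$ space. After it your induction is immediate, since each subsequent per-branch update acts on an unchanged, correct copy of $I(\S_i)$.

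The step that is missing is in the rebuild. To replay $|\S-\S_i|$ branch-deletions you must be able to list $\S\setminus\S_i$ in $O(|\S\setminus\S_i|)$ time, and nothing in your structure records this set. A generic pointer-machine structure $I(\S_i)$ need not support enumeration. Even if it did, extracting the difference would take at least $|\S_i|$ time per branch, which can sum to $\Theta(B|\S|)$ instead of $O(|\S|+z)$.

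The paper closes this gap by keeping, for every branch, a balanced search tree (its ``difference tree'') on $\S\setminus\S_i$. This tree is updated in $O(\log N)$ time whenever $\S_i$ or $\S$ changes. This is exactly where the hypothesis $U(n)\in\Omega(\log n)$ is needed: it absorbs this bookkeeping into the branch- and root-update bounds. It is not needed for locating data nodes. With that addition your argument is complete.
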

\begin{proof}
    We support branch persistence via the \emph{fat node persistence} technique from~\cite{driscoll1986persistent}.
The core idea is that each node in the pointer machine has up to $B$ versions, at most one per branch. These versions are labelled and stored in a balanced tree so that given any integer $i$ one can find the corresponding version of that pointer node in $O(\log B)$ time. 
This `version lookup' follows the following principle to access $I(\S_i)$: let $\nu$ be a pointer machine node, if there exists a version $\nu_i$ of $\nu$ then read this version, otherwise, read $\nu_0$.  We maintain the \emph{invariant} that reading the pointer machine in this manner recovers the data structure $I(\S_i)$ for all branches $i$.  As a consequence, queries take $O(Q(N) \log B)$ time. As \emph{branch}$(i)$ creates a new branch with $\S_i = \S$, this operation is supported in constant time by incrementing $B$.

\subparagraph{Updating the data structure.}
Immediately after creating a new branch $i$, the invariant holds as $\S_i = \S$. 
 We support the remaining updates as follows.

To support \emph{branch-update}$(i, \sigma, \texttt{Insert} / \texttt{Delete})$ we note that by our invariant, we can access $I(\S_i)$ with $O(\log B)$ overhead per operation.
Because the data structure $I(\S_i)$ can be updated in $U(N)$ time, such an update changes values (or pointers) in $U(N)$ nodes in the pointer structure. For every such change to a node $\nu$, we introduce (or overwrite) only $\nu_i$ representing its state in $I(\S_i)$. If $\nu$ did not exist before the update, we simply create a new node with version $\nu_i$. A \emph{branch-update} thus takes $U(N) \log B$ time and adds $U(N)$ space.
As the update on branch $i$ only changes $\nu_i$ for any node $\nu$, the invariant is immediately maintained for all other branches which use a different integer to read their node versions.

To support \emph{root-update}$(\sigma, \texttt{Insert} / \texttt{Delete})$ we execute  \emph{branch-update}$(i, \sigma, \texttt{Insert} / \texttt{Delete})$ for all $B$ branches, including the root branch, at a factor $B$ overhead. 



\subparagraph{Rebuilding.}
To support rebuilds we additionally maintain for all branches a balanced binary tree storing all objects in $\S$ that are \emph{not} in $\S_i$, i.e. $\S \setminus \S_i$, in some arbitrary fixed order. We call this tree the \emph{difference tree}.
When updating $\S_i$, the difference tree tree can be updated in $O(\log N)$ time and, as we assumed that $U(n) \in \Omega(\log n)$, this does not incur any overhead.
When we create a new branch, we create an empty difference tree in constant time.

A rebuild is executed as follows.
First, we construct given $\S$ the root data structure $I(\S)$ in $O(\,|\S|\, U(\,|\S|\,))$ time by performing an insertion for all  $\sigma \in \S$. 
Then, for each each old branch $i$, i.e. each branch that existed before the rebuild, we create a new branch in constant time. We then iterate over all elements $\sigma$ in the corresponding difference tree and perform  \emph{branch-update}$(i, \sigma, \texttt{Delete})$.  
Since the sum of the difference tree sizes equals $z$, the total reconstruction time is $O( (|\S| + z) U(|\S|) \log B)$ and the total space is $O((\,|\S|\,+z)U( \,|\S|\,))$.
\end{proof}

\subsection{Applying branch persistence}

We now apply branch persistence to the data structure maintaining our spanner, so that we no longer  store $O(\Psi^2 \eps^{-2})$ copies of the disk intersection data structure for each $\eps$-cell. Instead, we store a branch persistent data structure of each intersection query data structure. The data structure from Lemma~\ref{lemma:intersection_queries} for intersection queries can be made branch persistent using amortised expected update time and \emph{expected space} as follows.

\begin{lemma}\label{lemma:intersection_queries_branching} 
The data structure from Lemma~\ref{lemma:intersection_queries},  storing a dynamic set of disks $\S$ where between any two rebuilds $|\S| \leq N$, supports branch persistence such that: 
\begin{itemize}
    \item Queries are supported in $O(\log^2 N \log B)$ time.
    \item  Let $r$ and $b$ be the number of \emph{root-} and \emph{branch-updates} since the last rebuild and let  $z$ be the size of the symmetric difference $z := \sum\limits_{i} \, |\S - \S_i|$ at this rebuild. The rebuild and these updates take $O((\overline{n} + z + rB+b) \log^4 N \log  B)$ expected total time and the expected space usage is $O( (\overline{n} + z + rB + b) \log^4 N)$, where $\overline{n}$ denotes the size of $\S$ at the rebuild.
    
\end{itemize}
\end{lemma}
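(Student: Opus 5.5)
The plan is to plug the structure of Lemma~\ref{lemma:intersection_queries} into the generic transformation of Lemma~\ref{lemma:branch-persistent}, with $Q(N)=O(\log^2 N)$ and $U(N)=O(\log^4 N)$. The first step is to check the hypotheses. By Observation~\ref{obs:pointer}, the intersection structure is a pointer machine structure. Also, $U(n)=\Omega(\log n)$ holds trivially. The catch is that nodes need not have constant in-degree. This is harmless here, because the fat node technique in the proof of Lemma~\ref{lemma:branch-persistent} stores, per node, only a balanced search tree of versions indexed by branch label. It never updates back-pointers, which is where full persistence needs bounded in-degree. The query bound $O(\log^2 N\log B)$ is then immediate from the query clause of Lemma~\ref{lemma:branch-persistent}.

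The main obstacle is that $U(N)=O(\log^4 N)$ is only an \emph{amortised expected} bound. Lemma~\ref{lemma:branch-persistent} charges per update the number of nodes actually changed, so a single expensive update could change many nodes in many branches. I would therefore not bound the updates individually. Instead I will bound the whole epoch between two rebuilds. I view the epoch as a family of update sequences, one per branch $i$ (including the root, label $0$). Branch $i$ starts from the same rebuilt state, is reconstructed by at most $|\S\setminus\S_i|$ deletions from its difference tree, and then receives its root-updates and branch-updates.

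For a single branch $i$, let $r$, $b_i$, and $z_i$ be the relevant counts. The operation sequence on that version is a valid sequence of operations on one copy of the structure of Lemma~\ref{lemma:intersection_queries}, starting from the build of $\overline n$ insertions. The amortised expected guarantee then bounds the total number of node changes (and hence time) on branch $i$ by $O((\overline n_i + z_i + r + b_i)\log^4 N)$. Here I charge the shared build of $\overline n$ insertions to the root only, since non-root branches reuse the root's nodes and only pay for their own modifications. I need the amortisation to be valid per sequence; the randomness is fresh per sequence in expectation, and the root's potential is not exceeded, because each branch begins in a state equal to the root's post-rebuild state.

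Summing over $i\le B$ gives time $O(\overline n\log^4 N)$ for the root build, plus $\sum_i z_i = z$ for the difference deletions, plus $rB$ for root-updates propagated to every branch, plus $\sum_i b_i=b$, all multiplied by $\log^4 N$. Each modification costs an extra $O(\log B)$ for the version lookup. Space is at most the number of node versions created, so it is the same sum without the $\log B$ factor. Linearity of expectation yields the stated expected time $O((\overline n+z+rB+b)\log^4 N\log B)$ and expected space $O((\overline n+z+rB+b)\log^4 N)$.
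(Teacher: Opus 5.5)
\textbf{Same route as the paper, but the amortisation step you add does not hold.}

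Your outline matches the paper's proof. You instantiate Lemma~\ref{lemma:branch-persistent} with $Q(N)=O(\log^2 N)$ and $U(N)=O(\log^4 N)$, read off the query bound, and add up the build, the $z$ difference-tree deletions, the $rB$ propagated root-updates and the $b$ branch-updates, with the $O(\log B)$ lookup factor paid in time only. The paper's proof does no more than this. It turns the amortised expected bound into a total bound for $k$ updates on an empty structure and invokes Lemma~\ref{lemma:branch-persistent}, without saying how amortisation interacts with branches that share a prefix.

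The step you add to handle this fails. The guarantee of Lemma~\ref{lemma:intersection_queries} bounds the expected cost of the \emph{whole} sequence on branch $i$ by $O((\overline n+z_i+r+b_i)\log^4 N)$. That sequence is the $\overline n$ build insertions followed by the branch's own $z_i+r+b_i$ operations. The guarantee says nothing about the suffix alone. The build may cost less than its amortised charge and leave credit behind, and every branch, starting from that same state, can spend this credit independently.

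In potential terms, let $\Phi_0$ be the potential just after the rebuild. Summing over branches gives up to $O((\overline n+z+rB+b)\log^4 N)+B\,\Phi_0$. The black-box bound allows $\Phi_0=\Theta(\overline n\log^4 N)$. For example, an internal global-rebuilding counter may sit just below its threshold after the build. Then the first operation in \emph{each} branch triggers a size-$\overline n$ rebuild, which writes fresh node versions for that branch. Your sentence ``the root's potential is not exceeded, because each branch begins in a state equal to the root's post-rebuild state'' therefore has it backwards. The shared starting state is exactly what lets the same potential be spent $B$ times.

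\textbf{Why it matters and how to close it.} This is not cosmetic. Summed over all $\varepsilon$-cells, an extra $B\overline n\log^4 N$ term can bring back the $\Psi^2\varepsilon^{-2}$ space factor that Theorem~\ref{thm:main_small_space} is meant to remove.

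To close the step you need a property beyond the amortised expected bound. One option is that the rebuild leaves the root with no deferred work: $\Phi_0=0$, or at least $\Phi_0=O(\overline n\log^4 N/B)$. A from-scratch construction that resets all internal rebuild counters would typically give this. The insertion-by-insertion build in the proof of Lemma~\ref{lemma:branch-persistent} does not. Another option is a worst-case update bound.

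If $\Phi_0$ is small, branches spawned mid-epoch are also covered. The root's potential after $r$ root-updates is at most $\Phi_0+O(r\log^4 N)$, and this is absorbed by the $rB$ term. The rest of your accounting then goes through, including the space bound without the $\log B$ factor.
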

\begin{proof}
    We apply Lemma~\ref{lemma:branch-persistent} to the dynamic disk intersection data structure of Lemma~\ref{lemma:intersection_queries}. The disk intersection data structure has $Q(n) =  O(\log^2 n)$ query time  and $U(n) = O(\log^4 n)$ amortized expected update time. Lemma~\ref{lemma:branch-persistent} directly implies the stated query time of the branch persistent version of the data structure. The amortized expected insertion time of the data structure implies that $k$ updates on an initially empty data structure take $O(k \log^4 k \log B)$ total expected time. Since each update introduces space proportional to the original update time, the space usage is expected $O(k \log^4 k)$. The stated running time and space after a rebuild and some updates then follows from Lemma~\ref{lemma:branch-persistent}.
\end{proof}

\subparagraph{Auxiliary data structures.} In our original data structure, we stored two dynamic intersection data structures $Q(E,E')$ and $Q'(E,E')$ (see Definition~\ref{def:auxiliary_ds}) for \emph{every pair} of $\eps$-cells $(E,E')$, where $E'$ is bigger than $E$ and is ``close'' to $E$, where ``close'' is defined relative to only the size of $E'$. The structure $Q(E,E')$ stored all disks in the subpopulation $\Gamma_\eps(E)$ excluding disks that appeared in the maximal bichromatic matching $M(E,E')$, while $Q'(E,E')$ stored the disks in the population $\pi_\eps(E')$, again excluding any disks in $M(E,E')$. In our augmented data structure we will only store two data structures $\Q(E)$ and $\Q'(E)$ for each $\eps$-cell~$E$. These data structures are branch persistent, such that for each relevant pair $(E,E')$ we can store an additional branch (if necessary) instead of an almost identical copy of the data structure. Formally, we maintain the following structures.


\begin{definition}\label{def:auxiliary_ds_persistent}
For each $\eps$-cell $E$ in the  quadtree with non-empty population $\pi_\eps(E)$ we store the following two auxiliary data structures:
\begin{enumerate}
    \item $\Q(E)$ stores the subpopulation $\Gamma_\eps(E)$ in the branch persistent data structure of Lemma~\ref{lemma:intersection_queries_branching}.
    \item $\Q'(E)$ stores the population $\pi_\eps(E)$ in the branch persistent data structure of Lemma~\ref{lemma:intersection_queries_branching}.
\end{enumerate}
Furthermore, for every pair of $\eps$-cells $(E,E')$, where for their quadtree cells $C$ and $C'$ it holds that $|C'| = 2^{i-1} |C|$ for some $i \in [1,\lceil \log \Psi \rceil]$ and $C'$ intersects $(2^{i+5}+1) * C$, we:
\begin{enumerate}
    \item Store a maximal bichromatic matching $\MBM(E,E')$ of the bipartite graph $\Gamma_\eps(E) \times \pi_\eps(E')$, where there is an edge $(\sigma, \sigma') \in \Gamma_\eps(E) \times \pi_\eps(E')$ if and only if $\sigma$ and $\sigma'$ intersect. 
    \item If $\MBM(E, E')$ is non-empty, create a branch in $\Q(E)$ storing $
    \{ \sigma \in \Gamma_\eps(E) \mid \sigma \not \in \MBM(E, E') \}$.
    \item If $\MBM(E, E')$ is non-empty, create a branch in $\Q'(E')$ storing $\{ \sigma \in \pi_\eps(E') \mid \sigma \not \in \MBM(E, E') \}$.
\end{enumerate}
\end{definition}

 Note that, like with the original data structure, we risk (locally) having $O(\eps^{-2} \Psi^2)$ branches. However, the following lemma shows that this cannot occur too often, and that we can bound the \emph{global} symmetric difference between all persistent data structures and their branches.

\begin{lemma}\label{lemma:bounding_size_of_matching}
Let $\M$ be the set of all edges across all bichromatic matchings for all pairs of $\eps$-cells. Consider, for a fixed $\eps$-cell $E$, the set of disks $\S^{\Q(E)}$ and $\S^{\Q'(E)}$ stored in $\Q(E)$ and $\Q'(E)$ respectively and denote by
$z_E := \sum_i |\S^{\Q(E)} - \S^{\Q(E)}_i| + \sum_j |\S^{\Q'(E)} - \S^{\Q'(E)}_{j}| $ the sum over all symmetric differences across all branches of these data structures. Then we can upper bound the total symmetric difference as follows:
\[z := \sum_{\text{all } \eps\text{-cells } E} z_E  =  2 |\M| \in O(n \eps^{-2} \log \Psi) \]
\end{lemma}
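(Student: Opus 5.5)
The plan is a double-counting argument that charges every unit of symmetric difference to an endpoint of a matching edge, followed by the same per-disk counting used for type~\ref{type:two} edges in Lemma~\ref{thm:spanner-size}.

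\textbf{Step 1: read off the difference of a single branch.} By Definition~\ref{def:auxiliary_ds_persistent}, branches exist only for pairs $(E,E')$ with $\MBM(E,E')\neq\emptyset$. Each such pair owns exactly one branch in $\Q(E)$ and one in $\Q'(E')$.

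The branch in $\Q(E)$ stores $\Gamma_\eps(E)$ minus the disks matched in $\MBM(E,E')$. Every matched disk on that side lies in $\Gamma_\eps(E)$ and appears in exactly one matching edge, so the difference $\S^{\Q(E)}-\S^{\Q(E)}_i$ has size exactly $|\MBM(E,E')|$. Symmetrically, the branch of $\Q'(E')$ differs from $\pi_\eps(E')$ in exactly $|\MBM(E,E')|$ disks. Pairs with an empty matching contribute nothing, since they create no branch.

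\textbf{Step 2: sum over all $\eps$-cells.} Summing $z_E$ over all $E$, each pair $(E,E')$ contributes $|\MBM(E,E')|$ through $\Q(E)$ and $|\MBM(E,E')|$ through $\Q'(E')$. This gives $z=\sum_{(E,E')}2|\MBM(E,E')|=2|\M|$. The one point to state carefully is that a branch is indexed by the pair $(E,E')$, so no branch is counted twice and no pair is missed.

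\textbf{Step 3: bound $|\M|$.} This is the only non-mechanical step, and I would reuse the counting from Lemma~\ref{thm:spanner-size}. Charge each matching edge $(\sigma,\sigma')\in\MBM(E,E')$ to its first endpoint $\sigma\in\Gamma_\eps(E)$.

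Each disk $\sigma$ lies in the subpopulation of exactly one $\eps$-cell $E$, namely the one in its storing cell $C_\sigma$ containing its centre. For this fixed $E$, the number of $E'$ with $(E,E')\in\E$ is bounded as follows: there are $O(\log\Psi)$ choices of $i$, then $O(1)$ cells $C'$ of size $2^{i-1}|C|$ intersecting $(2^{i+5}+1)*C$, then $O(\eps^{-2})$ $\eps$-cells per $C'$. So there are $O(\eps^{-2}\log\Psi)$ such $E'$.

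Within a single matching $\MBM(E,E')$, the disk $\sigma$ is matched at most once. Hence $\sigma$ is charged $O(\eps^{-2}\log\Psi)$ times in total, which gives $|\M|=O(n\eps^{-2}\log\Psi)$.

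Charging to the first endpoint is the right choice here. The second endpoint lies in populations at $O(\log\Psi)$ levels, and each of those can be paired with $\Theta(\Psi^2\eps^{-2})$ smaller cells, so charging to it would only give a bound polynomial in $\Psi$.
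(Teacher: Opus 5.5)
Your proposal is correct and follows essentially the same route as the paper's proof. You identify each branch's difference with the disks matched in the corresponding $\MBM(E,E')$, so each matching edge is counted once in $z_E$ and once in $z_{E'}$, giving $z = 2|\M|$. You then bound $|\M|$ by charging each edge to its endpoint in the unique subpopulation $\Gamma_\eps(E)$, which has only $O(\eps^{-2}\log\Psi)$ partner cells $E'$.
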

\begin{proof}
Fix some $\eps$-cell $E$. For any branch $i$ of $\Q(E)$, the symmetric difference between the set $\S^{\Q(E)}$ and $\S^{\Q(E)}_i$  are exactly those disks that are in the matching $\M(E,E')$ where $E'$ is the $\eps$-cell corresponding to branch $i$. Similarly, the symmetric difference between $\S^{\Q'(E)}$ and $\S^{\Q'(E)}_{j}$ are exactly those disks that are in the matching $\M(E'',E)$ where $E''$ is the $\eps$-cell corresponding to branch $j$.
It follows that an edge $(\sigma,\sigma') \in \MBM(E,E') \subseteq \M$ is counted twice, once for $z_{E}$ and once for $z_{E'}$. So, $z$ equals $2|\M|$.

What remains is to upper bound $|\M|$. Consider a cell $C$ in the quadtree $T(\S)$ and an $\eps$-cell $E$ of $C$. 
We store edges across maximal bichromatic matchings between $\Gamma_\eps(E) \times \pi_\eps(E')$ for some `sufficiently close' $\eps$-cell $E'$. 
In particular, each disk in $\sigma$  appears in only one unique subpopulation $\Gamma_\eps(E)$ which corresponds to the $\eps$-cell $E$ of its storing cell $C$. There are subsequently $O(\log \Psi)$ cells $C'$ such that $|C'| = 2^{i-1}|C|$ for some $i \in [1,\lceil \log \Psi \rceil ]$ and $C' \subset (2^{i+5} + 1)  * C$. 
For every such cell $C'$, there are $O(\eps^{-2})$ $\eps$-cells $E'$ for which we consider a maximal bichromatic matching between $\Gamma_\eps(E) \times \pi_\eps(E')$.
It follows that each disk $\sigma$ is part of some ordered pair $(\sigma, \sigma') \in \mathcal{M}$ at most $O(\eps^{-2}\log \Psi)$ times which implies the lemma.
\end{proof}

\mainTwo*

\begin{proof}
    As before, we leave the generalisation to the case without bounding box to Appendix~\ref{app:generalisation}. We thus assume that at all times $\S$ remains contained in a bounding box $[0, \Psi^*]^2$ where $\Psi^*$ is the smallest power of two that is larger than $\Psi$.

    \subparagraph{Notation.} We repeat the notation introduced in the proof of Lemma~\ref{lemma:main_big_space}.
    Consider all quadtree cells that are at least unit size that can be obtained by recursively splitting the bounding box. 
    We denote by $\mathcal C$ the set of all pairs of such cells $(C,C')$ satisfying: $|C| = |C'|$ and $C' \subset 3*C$. We denote by $\mathcal E$ the set of all pairs of $\varepsilon$-cells $(E,E')$ satisfying: $E$ is a $\varepsilon$-cell of $C$, $E'$ is a $\varepsilon$-cell of $C'$, and there is an $i \in [1,\lceil \log \Psi \rceil]$ such that $|C'| = 2^{i-1} |C|$ and $C' \subset (2^{i+5}+1) * C$. We use $\MBM(E,E')$, $\Q(E)$ and $\Q'(E)$ as defined in Definition~\ref{def:auxiliary_ds_persistent}.

    \subparagraph{Disk insertions.} Inserting a disk $\sigma$ is done in the same way as in the original data structure. The difference is that we work on branches of the persistent data structure instead of copies of the disk intersection data structure. To be precise, we perform the following operations:
    \begin{enumerate}
        \item We first consider the storing family $\F(\sigma)$ of $\sigma$. For each cell $C \in \F(\sigma)$ we either identify the corresponding cell in $T(\S)$ or add this cell to $T(\S)$.
        \item 
        For all cells $C \in \F(\sigma)$, for every quadtree cell $C'$ such that $(C,C') \in \mathcal C$, we insert the centre of $\sigma$ into the Euclidean spanner $\mathit{Span}(C,C')$. 
        \item We consider the unique storing cell $C_\sigma$ of $\sigma$ and its $\eps$-cell $E$ that contains the centre of~$\sigma$. We perform a \emph{root-update}$(\sigma, \texttt{Insert})$ on $\Q(E)$ to insert $\sigma$ in the root version and all branches. We then consider all pairs  $(E,E') \in \mathcal E$.
        \begin{itemize}
        \item Consider the data structure $\Q(E)$.
        \begin{itemize}
            \item The $\eps$-cell $E'$ corresponds to a unique branch $i$ of $\Q(E)$.
        \end{itemize}
        \item Similarly, consider the data structure $\Q'(E')$.
        \begin{itemize}
            \item If the subpopulation $\Gamma_\eps(E)$ was empty before adding $\sigma$, then $\Q'(E')$ does not have a branch corresponding to $E$ yet. We add this branch and obtain its id $j$, storing it in the pair $(E, E')$.
            \item If $\Gamma_\eps(E)$ was not empty before, then we instead obtain the branch id $j$ from $(E, E')$.
        \end{itemize}
        \end{itemize}
        We then test whether $\sigma$ intersects a disk~$\sigma'$ in $\pi_\eps(E')$ that is not already in the maximal bichromatic matching  $\MBM(E,E')$ by querying the branch $j$ of $\Q'(E)$.
        \begin{itemize}
            \item If so, then we add $(\sigma, \sigma')$ to $\MBM(E,E')$. We delete $\sigma$ from the branch $i$ of $\Q(E)$, and delete $\sigma'$ from the branch $j$ of $\Q'(E')$ --- maintaining the invariant that these branches store all disks in $\Gamma_\eps(E)$ or $\pi_\eps(E')$ that do not appear in $\MBM(E,E')$.
        \end{itemize}
        \item To match the notation of Definition~\ref{def:auxiliary_ds_persistent}, we now introduce primes in our notation and consider all quadtree cells $C' \in \F(\sigma)$. A fixed cell $C'$ has a unique $\eps$-cell $E'$ which contains the centre of $\sigma$. We perform \emph{root-update}$(\sigma, \texttt{Insert})$ on $\Q'(E')$ to insert $\sigma$ in the root version and all branches. 
        We then consider all pairs  $(E,E') \in \mathcal E$:
        \begin{itemize}
        \item Consider the data structure $\Q(E)$.
        \begin{itemize}
        \item If the population $\pi_\eps(E')$ was empty before adding $\sigma$, then $\Q(E)$ does not have a branch corresponding to $E'$ yet. We add this branch and obtain its id $i$ and store it in the pair $(E, E')$.
            \item If $\pi_\eps(E')$ was not empty before, then we instead obtain the branch id $i$ from $(E, E')$.
        \end{itemize}
        \item Similarly, consider the $\eps$-cell $E'$ and  $\Q'(E')$.
        \begin{itemize}
                       \item Then $E$ corresponds to a unique branch $j$ of $\Q'(E')$ which we obtain from $(E, E')$.
        \end{itemize}
        \end{itemize}
        We test whether $\sigma$ intersects a disk~$\sigma''$ in $\Gamma_\eps(E)$ that is not already in the maximal bichromatic matching  $\MBM(E,E')$ by querying the branch $i$ of $\Q(E)$.
        \begin{itemize}
            \item If so, then we add $(\sigma'', \sigma)$ to the matching $M(E,E')$.
            We delete $\sigma$ from the branch $j$ of $\Q'(E')$, and we delete $\sigma''$ from the branch $i$ of $\Q(E)$ --- maintaining the invariant that these branches store all disks in $\Gamma_\eps(E)$ or $\pi_\eps(E')$ that do not appear in $M(E, E')$.
        \end{itemize}
        \item For every edge updated in any Euclidean spanner $\mathit{Span}(C,C')$, we update the corresponding type~\ref{type:one} edge in $G$. Similarly, whenever an empty bichromatic matching becomes non-empty, we add the corresponding type~\ref{type:two} edge to~$G$.
    \end{enumerate}
    
    \subparagraph{Disk deletions.} To delete a disk $\sigma$, we adapt the deletion in a similar way as the insertion by using the persistent data structures instead of the copies of the disk intersection data structure. We perform the following operations to delete $\sigma$:
    \begin{enumerate}
            \item We consider all quadtree cells, down to unit size, that are contained in $\sigma$ and contain the centre of $\sigma$. This is the storing family $\F(\sigma)$ of $\sigma$. If every cell $C \in T(\S)$ records $|\pi(C)|$, then we can simply iterate over all cells $C \in \F(\sigma)$ and decrement this counter.
        At the end of the procedure, we then remove all cells $C \in F(\sigma)$ for which we set the counter to zero, to update $T(\S)$. 
    \item We consider all quadtree cells $C$ that contain the centre of $\sigma$. For every quadtree cell $C'$ such that $(C,C') \in \mathcal C$, we delete the centre of $\sigma$ from the Euclidean spanner $\mathit{Span}(C,C')$.
    \item The disk $\sigma$ has a unique storing cell $C_\sigma$, which has a unique $\eps$-cell $E$ containing the centre of $\sigma$. We perform \emph{root-update}$(\sigma, \texttt{Delete})$ on $\Q(E)$ to delete $\sigma$ from the root version and all branches.
    We then consider all $E'$ such that $(E,E') \in \mathcal E$.
    For a fixed pair $(E, E')$ we note, just as with insertions, that the $\eps$-cell $E'$ corresponds to a unique branch $i$ of $\Q(E)$.
    Similarly, $E$ corresponds to a unique branch $j$ of $\Q'(E')$. We query whether $\sigma$ is in the maximal bichromatic matching $M(E,E')$.
    \begin{itemize}
        \item If there is a $(\sigma,\sigma') \in M(E,E')$, we delete $(\sigma,\sigma')$ from the matching. 
        To maintain a maximal matching, we need to test whether it is possible to reconnect $\sigma'$. We query the branch $i$ of $\Q(E)$ to test whether $\sigma'$ intersects a disk~$\sigma''$ in $\Gamma_\eps(E)$ that is not already in the maximal bichromatic matching  $\MBM(E,E')$.
        \begin{itemize}
            \item If so, then we add $(\sigma'', \sigma')$ to $M(E,E')$.
            We delete $\sigma''$ from the branch $i$ of $\Q(E)$ --- maintaining the invariant that the branch $i$ stores all disks in $\Gamma_\eps(E)$ that are not in the maximal matching $\MBM(E, E')$.
            \item Otherwise, we insert $\sigma'$ into the branch $j$ of $\Q'(E')$  --- maintaining the invariant that the branch $j$ stores all disks in $\pi_\eps(E')$ that are not in the maximal matching $\MBM(E, E')$. 
        \end{itemize}
    \end{itemize}
    \item
    To match the notation of Definition~\ref{def:auxiliary_ds_persistent}, we now introduce primes in our notation and consider all quadtree cells $C' \in \F(\sigma)$. 
    For a fixed $C'$, there is a unique $\eps$-cell $E'$ which contains the centre of $\sigma$. We perform \emph{root-update}$(\sigma, \texttt{Delete})$ on $\Q'(E')$ to delete $\sigma$ from the root version and all branches. 
    We consider all $E$ such that $(E,E') \in \mathcal E$. 
      For a fixed pair $(E, E')$ we note, just as above, that the $\eps$-cell $E'$ corresponds to a unique branch $i$ of $\Q(E)$.
    Similarly, $E$ corresponds to a unique branch $j$ of $\Q'(E')$.
    We query whether $\sigma$ is in the maximal bichromatic matching $M(E,E')$.
    \begin{itemize}
        \item If there is a $(\sigma'', \sigma) \in M(E,E')$, then we delete $(\sigma'', \sigma)$. To maintain a maximal matching, we need to test whether we can reconnect $\sigma''$. We query the branch $j$ of $\Q'(E')$ to test whether $\sigma''$ intersects a disk~$\sigma'$ in $\pi_\eps(E')$ that is not already in the maximal bichromatic matching  $\MBM(E,E')$.
        \begin{itemize}
            \item If so, we add $(\sigma'', \sigma')$ to $M(E,E')$. We delete $\sigma'$ from the branch $j$ of $\Q'(E')$ --- maintaining the invariant that the branch $j$ stores all disks in $\pi_\eps(E')$ that are not in the maximal matching $\MBM(E, E')$. 
            \item Otherwise, we insert $\sigma''$ into the branch of $i$ of $\Q(E)$ --- maintaining the invariant that the branch $i$ stores all disks in $\Gamma_\eps(E)$ that are not in  $\MBM(E, E')$. 
        \end{itemize}
    \end{itemize}
    \item For every edge updated in any Euclidean spanner $\mathit{Span}(C,C')$, we update the corresponding type~\ref{type:one} edge in $G$. Similarly, whenever a bichromatic matching~$M(E,E')$ becomes empty, we delete the corresponding type~\ref{type:two} edge from~$G$.
\end{enumerate}
Note that when we insert a disk in a specific branch of $\Q(E)$ or $\Q'(E')$, this disk is already present in the data structure, and thus present in the root version of the persistent data structure. This indeed allows us to use our defined \emph{branch-update} interface.

    \subparagraph{Global rebuild.} 
    We maintain a global counter which tracks the number of updates in $\S$. 
    When this counter hits a specific value $K$, we rebuild the branch persistent data structures $\Q(E)$ and $\Q'(E)$ for all $\eps$-cells $E$ (with a non-empty population $\pi_\eps(E)$) at the same time.  We set $K := N  / (2\Psi^2)$, where $N$ is $|\S|$ at the time of the last global rebuild. Note that when rebuilding like this, we maintain that the current number of disks $n$ is in $\Theta(N)$.
    
    \subparagraph{Update time.} We analyse the total time to perform a rebuild and the following $K$ updates. 
    
    A rebuild does not affect the Euclidean spanners, so we only need to consider the total update time for these spanners. Updating the Euclidean spanners $\mathit{Span}(C,C')$ and the type~\ref{type:one} edges of $G$ (step 2 and half of step 5), is done in the exact same manner as before. These updates thus take $O(K \cdot \eps^{-2}\log n \log \Psi \log^2(\eps^{-1}))$ total time (see the proof of Lemma~\ref{lemma:main_big_space}).

    Next, we analyse the time to rebuild and perform steps 3, 4, and 5 for the $K$ updates.
    Note that these $K$ updates may trigger branch and \emph{root-updates} in our branch persistent data structures. To bound the total running time, we will apply Lemma~\ref{lemma:intersection_queries_branching}, which requires the following input for each branch persistent data structure $\Q(E)$ (or $\Q'(E)$): 
    \begin{itemize}
        \item 
    $n_E$ as the number of disks stored in $E$ at the time of the rebuild,
    \item $z_E$ as the symmetric difference as defined in Lemma~\ref{lemma:bounding_size_of_matching},
    \item $r_E$ as the number of \emph{root-updates} performed on $\Q(E)$ and $\Q(E')$,
    \item  the number of branches $B$ which we always upper bound by $O(\frac{\Psi^2}{\eps^2})$,
    \item and $b_E$ as the number of \emph{branch-updates} performed on $\Q(E)$ and $\Q(E')$.
    \end{itemize}    
    The total time spent rebuilding and updating these data structures is then upper bound by expected:
    \[
    O\left( \sum_{\eps-\text{cells } E} (n_E + z_E + r_E \frac{\Psi^2}{\eps^2} + b_E) \log^4 n \log \Psi \log(\eps^{-1}) \right).
    \]

    In our proof, we split this sum into three sums and bound each sum separately. 
    To this end, we first observe that at the time of rebuild the total number of disks stored in these data structures is $O(n \log \Psi)$ and so:
    \[
    O\left( \sum\limits_{\eps-\text{cells } E} n_E \log^4 n \log \Psi \log(\eps^{-1}) \right) = O \left( n \log^4 n \log^2 \Psi \log \eps^{-1} \right). \]

    Moreover, we apply Lemma~\ref{lemma:bounding_size_of_matching}, to conclude that   $\sum_E z_E \in O(n \eps^{-2}\log \Psi)$ and so:
      \[
    O\left( \sum\limits_{\eps-\text{cells } E} z_E \log^4 n \log \Psi \log(\eps^{-1}) \right) = O \left( n \eps^{-2}  \log^4 n \log^2 \Psi \log \eps^{-1} \right). \]

    What remains is to bound the number of \emph{root}- and \emph{branch-updates} after $K = O(N / \Psi^2)$ updates to $\S$.
    Note that for each update in $\S$ step 3 and 4 execute the same steps, but the running time of step 4 dominates because it is executed for all $O(\log \Psi)$ cells in $\F(\sigma)$. 
    So, we consider inserting a disk $\sigma$ into $\S$ (or deleting a disk $\sigma$ from $\S$) and consider the operations performed by step 4.
    This step iterates over all $O(\log \Psi)$ cells in $\F(\sigma)$.     
    Such a fixed cell has a unique $\eps$-cell where we perform a \emph{root-update} on the $\Q'$ (of $\Q$ for step 3) data structure.
    It follows that after $K$ updates in $\S$, we can upper bound the total number of \emph{root-updates} by $\sum\limits_{\eps-\text{cells} E} r_E \in O(K \log \Psi)$.

    Our update procedure then continues, considering  $O(\Psi^2 / \eps^2)$ pairs $(E,E') \in \E$. For each pair, we perform a constant number of \emph{branch-updates} and queries in the corresponding $\Q$ and $Q'$ data structures. 
    It follows that after $K$ updates in $\S$, we can upper bound the total number of \emph{branch-updates} by $\sum\limits_{\eps-\text{cells} E} b_E \in O( \Psi^2 \eps^{-2} K \log \Psi)$.
    Since the branch update time dominates the time we spend on queries, we can upper bound the total expected time spent on \emph{root}- and \emph{branch-updates} after $K$ updates as follows:
        \[
    O\left( \sum\limits_{\eps-\text{cells } E} (r_E \eps^{-2} \Psi^2 + b_E) \log^4 n \log \Psi \log\eps^{-1} \right) = O \left( \Psi^2 \eps^{-2} K \log^4 n \log^2 \Psi \log \eps^{-1} \right). \]

    The running time is therefore dominated by the third sum. Amortised over $K$ updates, this yields an amortized expected running time of $O \left( \Psi^2 \eps^{-2} \log^4 n \log^2 \Psi \log \eps^{-1} \right).$ Note that the spanner update time carries a factor $\log^2 \eps^{-1}$, and so for ease of reading, we inflate the running time by a factor $\log \eps^{-1}$ and obtain the theorem. 

    \subparagraph{Space usage.} The space usage of the Euclidean spanners and the bichromatic matchings is equivalent to those in Lemma~\ref{lemma:main_big_space}. The analysis of the space usage of the $\Q$ and $\Q'$ data structures is analogous to the analysis of the update time, except that the result is a factor $O( \log  \Psi \log(\eps^{-1}))$ smaller because of the $O(\log B)$ difference between the update time and space in Lemma~\ref{lemma:intersection_queries_branching}. It follows that the space usage of our data structure is expected $O((n\eps^{-2} \log \Psi + K \Psi^2 \eps^{-2} \log \Psi) \log^4 n)$. As we rebuild after $K = N /\Psi^2$ updates, the expected space is bounded by $O(n\eps^{-2} \log^4 n \log \Psi)$.
    \end{proof}

\section{Removing the bounding box assumption}
\label{app:generalisation}

Throughout the paper we assumed that $\S$ is a fully dynamic set of $n$ objects (disks in the plane or $d$-dimensional hypercubes in $\mathbb{R}^d$) that at all times lie in a bounding box $B = [0, \Psi^*]^d$ where $\Psi^*$ is the smallest power of two bigger than $\Psi$. 
In this section, we explain the reasoning for this assumption and how to remove it (allowing $\S$ to lie arbitrarily in $\mathbb{R}^d$). 

\subparagraph{Static quadtrees and compression.}
Our core approach is to store a set of objects (disks or $d$-dimensional hypercubes) $\S$ in a dynamic $d$-dimensional quadtree. A quadtree takes as input a bounding hypercube $B$. For any disk $\sigma \in \S$, we recursively subdivide $B$ until we find the first hypercube $C$ that contains the centre of $\sigma$ and is fully contained in $\sigma$. However, if $B$ is unbounded, then the number of required subdivisions to create or identify such a hypercube $C$ is itself unbounded.

In the static setting, one can compute a bounding hypercube $B$ of $\S$ of finite size in linear time. However, even if all objects in $\S$ have bounded size, the size of $B$ may still be arbitrarily large, as the objects may be far apart. Since all objects in $\S$ have diameter at least~$4$, their storing cells have side length at least $1$. Consequently, the number of subdivisions required to create or identify any storing cell is $\Theta(\log |B|)$, where $|B|$ denotes the side length of the bounding hypercube. A naive quadtree storing $n$ objects in $B$ therefore requires $O(2^d n \log |B|)$ space and can be constructed in $O(2^d n \log |B|)$ time.

In the static setting, this spatial overhead can be avoided by using a \emph{compressed quadtree}~\cite{har-peled2011geometric}. While the standard definition applies to a quadtree that stores a set of $n$ points, we paraphrase it to its natural extension that stores a set of $n$ objects as follows. We start with a bounding hypercube $B$ and subdivide it into its $2^d$ children. If there exist two objects in $\S$ whose centres lie in different children of $B$, then we partition $\S$ according to these children and recurse. Otherwise, all centres lie in a single child, and we apply \emph{compression}. Ideally, we would identify the largest descendant $C'$ of $B$ for which this condition no longer holds and make $C'$ a child of $B$. However, under the real RAM model, where centres may have arbitrary real coordinates, such a cell $C'$ cannot in general be computed, as this requires integer division. Instead, we define $C'$ to be the minimum bounding hypercube containing all centres in $\S$. We then make $C'$ the (compressed) child of $B$ and recurse on $C'$, which, by construction, does not trigger further compression. The resulting tree has size $O(n)$ and, for constant dimension $d$, can be constructed in $O(n \log n)$ time.

\subparagraph{Dynamic misalignment.}
There is currently no known algorithm for dynamically maintaining a compressed quadtree, whether storing a set of points, disks, or more general geometric objects. Two main complications arise when attempting to maintain such a structure dynamically.
The first issue concerns updates that affect the bounding hypercube $B$. As $\S$ changes, so may its bounding box. When inserting an object whose centre lies outside $B$, this can be handled straightforwardly: for constant dimension $d$, we can compute in constant time a new bounding hypercube $B'$ that contains both $B$ and the new object, and make $B$ a compressed child of $B'$. Similarly, upon deletion, the bounding hypercube can only shrink if the root has a compressed child, in which case this child becomes the new root.

\begin{figure}
    \centering
    \includegraphics{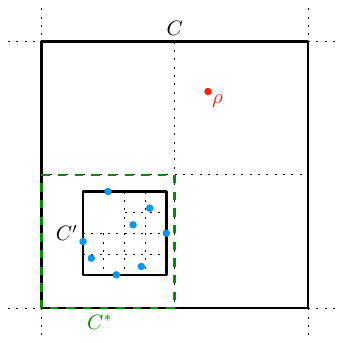}
    \caption{When the red centre point is inserted, the entire subtree of the compressed child $C'$ has to be realigned.}
    \label{fig:misaligned}
\end{figure}

The second issue is more fundamental, and we call it the problem of \emph{misalignment}. Consider a quadtree cell $C$ that has a compressed child $C'$. The cell $C$ has a compressed child because splitting $C$ normally would result in all centres of disks in $\S$ lying in a single child $C^*$ of $C$ (see Figure~\ref{fig:misaligned}).  By definition, $C'$ is the minimum-size bounding hypercube containing all centres in $\S$. Now suppose that the subtree rooted at $C'$ is large and contains no further compression.
If we now insert a new disk whose centre $\rho$ does not lie inside $C^{*}$, the compression invariant at $C$ is violated. In particular:
\begin{itemize}
\item $C$ no longer satisfies the condition for having a compressed child and should instead have $2^d$ children, one of which is $C^*$;
\item if we then attempt to recursively rebuild the quadtree starting from $C^*$ according to the compression rules, we encounter the problem that the resulting subtree is entirely different from the existing subtree rooted at $C'$ (thus, this takes $O(n)$ time). 
\end{itemize}

We refer to this phenomenon as \emph{misalignment}, as the quadtree rooted at $C'$ is not aligned with the quadtree that would arise from recursively subdividing $C^*$. The issue stems from the specific invariant used to define compression. In the fully dynamic setting, no suitable invariant is known that avoids this problem.
Although this particular example has a trivial amortisation argument, there is no amortised (or worst-case) known solution to this overall problem. 
In particular, even in dimension $d = 2$, it is unknown how to maintain a linear-size quadtree $T$ storing a set of geometric objects that simultaneously satisfies the following:
\begin{itemize}
\item each node $C$ in $T$ has either a single compressed child or is partitioned into $O(2^d)$ children;
\item each quadtree cell is contained in all of its ancestor cells;
\item each point in the $\mathbb{R}^d$ is assigned to exactly one quadtree leaf; and
\item no two quadtree cells overlap if their sizes differ by less than a factor of $2$.
\end{itemize}

These properties are, to the best of our knowledge, necessary for using a quadtree as a black-box data structure in geometric algorithms. Moreover, we are not aware of any subset of these properties that is both sufficient for such applications and achievable without encountering the misalignment problem.

\subsection{A static well-defined $(1+\eps)$-spanner}

We present a domain-specific solution to the misalignment problem. Specifically, we introduce a two-layer quadtree tailored to the maintenance of our spanner (and later connectivity) data structures. The key idea is to separate the problem into local structures which have fixed-size bounding boxes.
We then heavily exploit amortisation to maintain these two layers. 

Our data structure consists of what we call \emph{focused spanners}.
A focused spanner $G$ is defined with respect to a centre point $c$, and it has some associated set of objects $\S_G \subseteq \S$. Each focused spanner is assigned a unique ID. It is associated with two regions: a \emph{focal area}, defined as the hypercube $[0, \frac{1}{2}\Psi]^d$ centred at $c$, and a \emph{connection area}, 
defined as the hypercube $[0, 6\Psi]^d$ centred at $c$. 
In addition, we maintain a collection $T_G$ of $(1+\eps)$-spanners, where each spanner in $T_G$ is a $(1+\eps)$-spanner defined by two things: the bounding box $[0, 18\Psi]^d$ centred at $c$ and an intersection graph 
$D(\S')$ where $\S_G \subseteq \S' \subseteq \S$.
The top-level structure is a collection of focused spanners:

\begin{figure}
    \centering
    \includegraphics{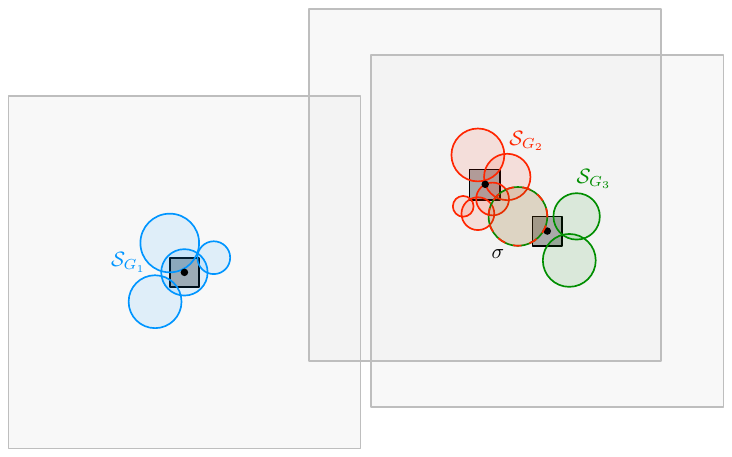}
    \caption{A focused spanner decomposition $\G$ of $\S$ with three focused spanners $G_1,G_2,G_3$ with their centre, focal area (black), and connection area (grey). Disk $\sigma$ is in both $\S_{G_2}$ and $\S_{G_3}$. The set $T_{G_1}$ contains a spanner on $\S_{G_1} \cup \S_{G_2}$, $T_{G_2}$ contains a spanner on $\S_{G_2} \cup \S_{G_3}$, and $\T_{G_3} = \emptyset$.}
    \label{fig:focused_spanner}
\end{figure}

\begin{definition}[Figure~\ref{fig:focused_spanner}]
\label{def:focused_spanner_def}
A \emph{focused spanner decomposition} $\mathcal{G}$ of $\S$ is a collection of focused spanners (each with a unique ID $i$) satisfying the following properties:
\begin{itemize}
\item For any distinct $G_1, G_2 \in \mathcal{G}$, the focal areas of $G_1$ and $G_2$ are disjoint;
    \item For every $\sigma \in \S$, there exists at least one $G \in \mathcal{G}$ such that $\sigma$ intersects the focal area of $G$;
    \item For each $G \in \mathcal{G}$, we maintain a set $\S_G$ containing a copy of every element of $\S$ that intersects its focal area. Moreover, we store $O(1)$ copies $\S_G^1, \S_G^2, \ldots$ of $\S_G$, where the constant depends only on the $d$-dimensional packing number;
    \item Consider any $G_1, G_2 \in \mathcal{G}$ (possibly $G_1 = G_2$) with centres $c_1$ and $c_2$ such that their connection areas intersect, and assume that the ID of $G_1$ is smaller than that of $G_2$. Let $B$ denote the hypercube $[0, 18\Psi]^d$ centred at $c_1$. Then all objects in $\S_{G_1} \cup \S_{G_2}$ are contained in $B$, and we require that one of the spanners in $T_{G_1}$ is a $(1+\eps)$-spanner constructed on the fixed bounding box $B$ with input $\S' = \S_{G_1}^i \cup \S_{G_2}^j$, where $\S_{G_1}^i$ and $\S_{G_2}^j$ are arbitrary copies as defined above.
\end{itemize}

\end{definition}

\begin{lemma}
    \label{lem:data_structure_size}
    Consider any data structure that, for a set of objects $\S'$ within a bounding box~$B$, stores a $(1+\eps)$-spanner of the intersection graph of $\S'$.
    The focused spanner decomposition using this data structure to store a set of objects $\S$ has the same asymptotic size and space usage. Moreover, the union of all focused spanners in the decomposition is a $(1+\eps)$-spanner of $\S$.  
\end{lemma}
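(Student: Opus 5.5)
The proof splits into two parts: a counting argument for size and space, and a path argument for the spanner property.

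\textbf{Size and space.} I would charge everything to objects. Focal areas are disjoint hypercubes of side $\frac12\Psi$, and every object has diameter at most $\Psi$. So by a packing argument each object intersects the focal areas of only $O(1)$ focused spanners, with a constant depending on $d$. Hence $\sum_{G} |\S_G| = O(n)$, and storing $O(1)$ copies keeps this $O(n)$.

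Next, connection areas have side $6\Psi$ and focal areas are disjoint with side $\frac12\Psi$. So for a fixed $G_1$, only $O(1)$ focused spanners $G_2$ have a connection area intersecting that of $G_1$; again this is a packing argument with centres at distance $O(\Psi)$. Each such pair contributes exactly one spanner in $T_{G_1}$, built on $\S^i_{G_1}\cup\S^j_{G_2}$. If the underlying data structure has size $f(m)$ on $m$ objects, with $f$ superadditive as for all the bounds in the paper, the total is
\[\sum_{(G_1,G_2)} f(|\S_{G_1}|+|\S_{G_2}|) \le f\Bigl(O(1)\sum_G |\S_G|\Bigr) = O(f(n)).\]
This uses that each $\S_G$ participates in $O(1)$ pairs and that the bounds are at least linear and polynomial-like, so constant-factor blowups are absorbed. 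The bounding-box condition from the definition guarantees that each underlying structure is applicable.

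\textbf{Spanner property.} It suffices to handle an edge $(\sigma_1,\sigma_2)$ of $\D(\S)$: a spanner path for every edge, concatenated along a shortest path, gives a spanner path for every pair. By the covering property, $\sigma_1$ intersects the focal area of some $G_1$ and $\sigma_2$ that of some $G_2$, so $\sigma_1\in\S_{G_1}$ and $\sigma_2\in\S_{G_2}$.

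I then show that the connection areas of $G_1$ and $G_2$ intersect. The two objects intersect and each has diameter at most $\Psi$, so any point of $\sigma_1$ and any point of $\sigma_2$ are within distance $2\Psi$ (in the max-norm, up to the paper's diameter convention). Each focal area has half-side $\frac14\Psi$, so the centres $c_1,c_2$ are within $2\Psi+\frac12\Psi<6\Psi$ per coordinate. Two connection areas, each of half-side $3\Psi$, therefore intersect.

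Order the IDs so that $G_1$ has the smaller ID; the case $G_1=G_2$ is allowed. Then $T_{G_1}$ contains a $(1+\eps)$-spanner of $\D(\S')$ with $\S'\supseteq\S_{G_1}\cup\S_{G_2}\ni\sigma_1,\sigma_2$. Since $\D(\S')$ is an induced subgraph of $\D(\S)$, the edge $(\sigma_1,\sigma_2)$ lies in $\D(\S')$. That spanner thus gives a path of length at most $(1+\eps)d(\sigma_1,\sigma_2)$, and its edges are edges of $\D(\S)$. Copies of the same object are identified with that object in the union.

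\textbf{Main obstacle.} I expect the hardest step to be the geometric constants: checking that the distance between $c_1$ and $c_2$ fits within the connection areas under the horizontal-segment diameter convention for disks versus cubes. I would first do the computation in the max-norm with a $\sqrt d$ slack and confirm that the $6\Psi$ margin absorbs it. The packing-number bounds also need to be made explicit, to justify that the number of copies per $\S_G$ is $O(1)$.
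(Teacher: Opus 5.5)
Your proof is correct and follows the same route as the paper. For the spanner property you argue edge by edge that both endpoints of every edge of $\D(\S)$ land in a common spanner in some $T_{G_1}$, and for size and space you use packing to show each object lies in only $O(1)$ sets and spanners.

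You are more explicit than the paper in two places. The paper simply asserts the case distinction on focal and connection areas; your max-norm computation $\|c_1-c_2\|_\infty \le \tfrac52\Psi \le 6\Psi$ justifies it. The paper also states the constant-factor size bound without argument, while your superadditivity remark supplies one. Finally, the step you flagged as the main obstacle is not one: under the paper's diameter convention, both disks and cubes have max-norm diameter at most $\Psi$, so no $\sqrt d$ slack is needed.
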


\begin{proof}
Assume that we are given a focused spanner decomposition $\mathcal{G}$ of $\S$ constructed using the data structure from the lemma statement. We first argue this gives a $(1+\eps)$-spanner.
Consider any two objects $\sigma_1, \sigma_2 \in \S$ that intersect. Then one of the following cases holds:
\begin{enumerate}
    \item $\sigma_1$ and $\sigma_2$ both intersect the focal area of some focused spanner $G \in \mathcal{G}$; or
    \item $\sigma_1$ and $\sigma_2$ intersect the focal areas of two focused spanners $G_1, G_2 \in \mathcal{G}$, respectively, whose connection areas intersect.
\end{enumerate}
In both cases, by Definition~\ref{def:focused_spanner_def}, there exists a focused spanner $G^* \in \mathcal{G}$ and a $(1+\eps)$-spanner $T_{G^*}$ of $\D(\S')$ where both $\sigma_1$ and $\sigma_2$ are in $\S'$.
Since $\sigma_1$ and $\sigma_2$ intersect, this spanner must contain a path between these two of length at most $(1+\eps)\, d(\sigma_1, \sigma_2)$. We can apply this argument to all edges $(\sigma_1, \sigma_2)$ along a path in $\D(\S)$ between any two objects $\sigma$ and $\rho$, and it follows that the union of all spanners in our focused spanner decomposition contains a path from $\sigma$ to $\rho$ whose length is at most a factor $(1+\eps)$ larger than the shortest path in $\D(\S)$. Hence, the union of all spanners is a $(1+\eps)$-spanner for $\S$.

We now analyse the spanner size and data structure space usage. For constant dimension~$d$, the packing number is constant, implying that any object $\sigma \in \S$ intersects only constantly many focal areas. Furthermore, each focused spanner intersects the connection area of only constantly many other focused spanners. Therefore, each $\sigma \in \S$ participates in only constantly many $(1+\eps)$-spanners. It follows that the total size of all maintained spanners, as well as the overall space usage, increases only by a constant factor.
\end{proof}

\subsection{Maintaining our spanner}

We show that a focused spanner decomposition can be maintained dynamically by relying on standard amortised techniques. Instead of using \emph{static} $(1+\eps)$-spanners, we assume that these spanners are dynamic, i.e. allow for objects to be inserted and deleted.
Furthermore, we rebuild the entire data structure after $N$ updates, or whenever $|\S| = \tfrac{1}{2}N$ or $|\S| = 2N$, where $N$ is $|\S|$ at the most recent rebuild. It follows that rebuilds occur only after $\Theta(N)$ updates and that, at all times, $|\S| \in [\tfrac{1}{2}N, 2N]$.

By Lemma~\ref{lem:data_structure_size}, a focused spanner decomposition incurs no asymptotic overhead in input size, as each object in $\S$ is stored only $O(1)$ times. Consequently, a rebuild can be performed using $O(N)$ data structure insertions. These can be charged to the $\Theta(N)$ updates that triggered the rebuild, implying that the rebuilding scheme introduces no asymptotic overhead in either running time or space usage.
It remains to describe how to maintain a focused spanner decomposition between rebuilds. Let $\mathcal{G}$ denote the current decomposition. We maintain the invariant that $|\mathcal{G}| \leq 2N \in \Theta(|\S|)$ at all times. The focal areas of the spanners in $\mathcal{G}$ are stored in a dynamic hypercube intersection data structure, and we maintain $\S$ in a similar structure. Since any dynamic spanner data structure must inherently support $d$-dimensional hypercube intersection queries (or, worse, disk intersection queries), this introduces no asymptotic overhead.

\subparagraph{Supporting deletions.}
Deletions are straightforward under our amortisation scheme. When deleting an object $\sigma \in \S$, we observe that $\sigma$ is stored in only $O(1)$ spanner data structures and sets $\S_{G}^i$. We remove $\sigma$ from each of these structures, incurring no asymptotic overhead. We do not update the set of focused spanners in $\mathcal{G}$ during deletions, and so there can exist focused spanners $G \in \mathcal{G}$ where $\S_{G}$ is empty. 

\subparagraph{Supporting insertions.}
When inserting an object $\sigma$ into $\S$, we first query the intersection data structure to identify the $O(1)$ focused spanners $G \in \mathcal{G}$ whose focal areas intersect $\sigma$. For each such $G$, we insert $\sigma$ into all $O(1)$ maintained copies $\S_G^1, \S_G^2, \ldots$ of $\S_G$. Some of these copies $\S_G^{i}$ participate in $(1+\eps)$-spanner data structures on $D(\S')$ with $\S' = \S_G^{i} \cup \S_{G'}^{j}$ for some other focused spanner $G' \in G$ whose connection area intersects that of $G$.
By maintaining a pointer between $\S_G^{i}$ and the sets $\S'$ that it is a part of, we can find these in constant time. 
We insert $\sigma$ into each of these $O(1)$ $(1+\eps)$-spanners with a constant factor overhead. 

Next, we consider the hypercube $F = [0, \tfrac{1}{2}\Psi]^d$ centred at the centre of $\sigma$. We use our hypercube intersection query data structure to query whether there exists a focused spanner $G' \in \mathcal{G}$ whose focal area intersects $F$. If such a focused spanner $G'$ exists, then we already maintain the invariant that each object in $\S$ intersects one focal area of a focused spanner in $\mathcal{G}$. Otherwise, if no such $G'$ exists, we create a new focused spanner $G$ with focal area $F$ and assign it a unique ID. 
We then have to update $\mathcal{G}$. We first compute the set $\S_G$ by reporting all objects in $\S$ that intersect $F$ using a hypercube intersection query. This has an additive overhead of $O(k)$, where $k$ denotes the number of objects in $\S$ that intersect $F$. We then construct the $O(1)$ copies $\S_G^1, \S_G^2, \ldots$ in $O(k)$ time. 
Next, we identify all focused spanners $G' \in \mathcal{G}$ whose connection areas intersect those of $G$, again using a hypercube intersection query. For each such $G'$, we select unused copies $\S_G^i$ and $\S_{G'}^j$, and build a new $(1+\eps)$-spanner on $\S' = \S_G^i \cup \S_{G'}^j$ using the bounding box associated with the spanner of smaller identifier, as required by Definition~\ref{def:focused_spanner_def}. Repeating this for all such $G'$ restores the focused spanner decomposition.
We note that this latter construction step can be expensive: firstly, the factor $O(k)$ may be considerable. Even worse, there may be $O(1)$ focused spanners $G'$ whose connection area intersects that of $G$, and each of these $G'$ may store a considerable number of objects. 
However, we argue that all this time can be charged to an amortisation scheme. 
Indeed, all objects in $\S$ have diameter at most $\Psi$, and at all times $|\S| \in \Theta(N)$. Furthermore, focal areas are only created and never deleted between rebuilds. By a packing argument, each object in $\S$ intersects only $O(1)$ focal areas between any two rebuilds. Consequently, each object participates in only $O(1)$ newly constructed $(1+\eps)$-spanners during $\Theta(N)$ updates.
We can therefore charge the total cost of these rebuild operations to the $\Theta(N)$ updates between rebuilds, which implies that insertions incur no asymptotic overhead.

\subparagraph{Applying the focused spanner decomposition to our spanners.} The dynamic focused spanner decomposition can be directly applied to our dynamic $(1+\eps)$-spanners to lift the restriction that all objects should be contained in a bounding box $B = [0, \Psi^*]^d$ where $\Psi^*$ is the smallest power of two that is bigger than $\Psi$. The only caveat is that for the spanners of the focused spanner decomposition, we cannot restrict the bounding size to be a power of two. However, this can be fixed by relaxing the requirement that the smallest quadtree cells have unit size, but instead letting them have size $u$, where $u$ is the size of the first quadtree cell of at most unit size that is obtained by recursively splitting the given bounding box.

\section{Connectivity}\label{sec:connectivity}

The input is again a dynamic set of disks $\S$ where every disk has diameter between $4$ and a fixed and known $\Psi$.
The connectivity data structure allows for both insertion and deletion of disks, and furthermore, it can answer queries of the form: Given a pair of disks $\sigma, \rho \in \S$, do they belong to the same connected component in the disk graph $\D(\S)$?

As previously stated in Section~\ref{sec:overview}, by plugging in our dynamic spanner into the dynamic connectivity data structure for general graphs~\cite{holm2001poly}, we obtain a connectivity data structure for dynamic disk graphs. Compared to the state-of-the-art~\cite{baumann2024dynamic}, this plug-in approach gives better bounds on the space, i.e. logarithmic in $\Psi$ compared to linear~\cite{baumann2024dynamic}, but gives worse bounds on the update time, i.e. quadratic in $\Psi$ compared to linear~\cite{baumann2024dynamic}.

In this section, we obtain the best of both worlds, i.e. a dynamic connectivity data structure matching (i) the space of the plug-in approach and (ii) the update time of~\cite{baumann2024dynamic}, up to logarithmic factors. To achieve these bounds, our approach is to combine and incorporate the techniques from our dynamic spanner into the dynamic connectivity pipeline of~\cite{baumann2024dynamic, hoog2024fully}. We will formally introduce the connectivity pipeline in Section~\ref{sec:connectivity_pipeline}. 


At a high level, the connectivity pipeline of Baumann, Kaplan, Klost, Knorr, Mulzer, Roditty, and Seiferth~\cite{baumann2024dynamic} combines a quadtree with four data structures to answer connectivity queries. 
This construction was then modified by van der Hoog, Nusser, Rotenberg, and Staals~\cite{hoog2024fully} (we forward reference Figure~\ref{fig:pipeline}).
In this section, we note that the bottleneck for this pipeline is the maintenance of maximal bichromatic matchings for bichromatic intersection graphs between quadtree cells $(C, C')$.
We observe that this task is very similar to the maintenance of a maximal bichromatic matching between $\eps$-cells $(E, E')$ in Section~\ref{sec:high_space} and Appendix~\ref{sec:small_space}.
In fact, we show that we can replace the disk intersection data structure in the pipeline with our branch persistent variant to obtain a better space bound.

\subsection{Bounding box assumption}

Note that by applying the same technique as in Appendix~\ref{app:generalisation}, we do not \emph{have} to assume that~$\S$ remains contained in a fixed bounding box $[0, \Psi^*]^2$. However, for ease of exposition, we do make this assumption so that we can store $\S$ in a singular uncompressed quadtree $T(\S)$. 
We note that in previous sections, we defined $T(\S)$ as the uncompressed quadtree that stores for each disk $\sigma \in \S$ the \emph{storing family}.
For maintaining connectivity, it suffices to use the minimum-size quadtree that contains only the storing cells $C_\sigma$ for $\sigma \in \S$. With slight abuse of notation, we call a cell $C$ a storing cell if there is a $\sigma \in \S$ with $C$ as its storing cell. We warn the reader that, henceforth, we overload the notation $T(\S)$:

\begin{definition}
    \label{def:quadtree_sparse}
    Given a fixed bounding box $B = [0, \Psi^*]^2$ and the set of disks $\S$, we define $T(\S)$ as the minimum-size quadtree that contains for all $\sigma \in \S$ the storing cell $C_\sigma$ (the largest cell $C$ that can be obtained by recursively splitting $B$, that is contained in $\sigma$).
    We define, for any $C \in T(\S)$, its \emph{garrison} $\Gamma(C)$ to be the set of all disks in $\S$ that have $C$ as their storing cell.     
\end{definition}

\subsection{The existing pipeline}
\label{sec:connectivity_pipeline}

We first formally define the existing pipeline and then explain which data structures to replace within this pipeline. 
To this end, we introduce two important definitions from~\cite{hoog2024fully}:

\begin{definition}[Based on definitions in~\cite{hoog2024fully}]\label{def:connectivity_sets}
    Let $\sigma \in \S$ be a disk.
    \begin{itemize}
        \item The constituents $\C(\sigma)$ of $\sigma$ consists of all cells $C$ such that  (see Figure~\ref{fig:illustratingC}):
        \begin{enumerate}
            \item $C$ is contained in $\sigma$, and
            \item either $\Gamma(C) \neq \emptyset$ or a descendant $C'$ of $C$ has $\Gamma(C')\neq \emptyset$, and
            \item the parent of $C$ is not contained in $\sigma$.
        \end{enumerate}
        \item The \emph{perimeter} $\P(\sigma)$ of $\sigma$ consists of all cells $C$ with diameter at most $|\sigma|$ such that $(7 * C)$ intersects the \emph{boundary} of $\sigma$ (see Figure~\ref{fig:illustratingP}).  
    \end{itemize} 
\end{definition}

\begin{figure}
    \centering
    \includegraphics[page=4]{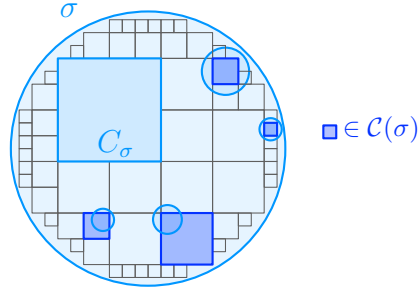}
    \caption{For a disk $\sigma$, the maximal quadtree cells (in the complete quadtree) contained in $\sigma$ and in dark blue the subset that are the constituents $\mathcal{C}(\sigma)$.}
    \label{fig:illustratingC}
\end{figure}

\begin{figure}
    \centering
    \includegraphics[page=5]{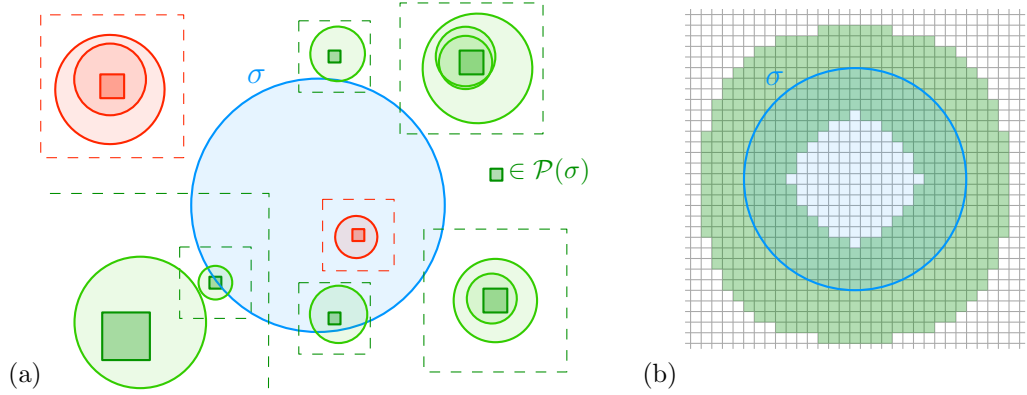}
    \caption{(a) For a disk $\sigma$, the storing cells in the perimeter $\mathcal{P}(\sigma)$ in green, the storing cells that are not in the perimeter $\mathcal{P}(\sigma)$ in red, and the dashed boundary of $(7 * C)$ for each of the shown cells. (b) On a fixed quadtree level, the green cells are potentially on the perimeter of $\sigma$.}
    \label{fig:illustratingP}
\end{figure}


\begin{definition}
    For $C \in T(\S)$, we define $\C(C) := \bigcup_{\sigma \in \Gamma(C)}\C(\sigma)$ and $\P(C) := \bigcup_{\sigma \in \Gamma(C)}\P(\sigma)$.
\end{definition}

\subparagraph{The proxy graph and connectivity.}
At a high-level, connectivity in the connectivity pipeline is maintained via a \emph{proxy graph}.
This is a graph where the vertices are the quadtree cells.
There exists an edge between quadtree cells $(C, C')$ if and only if $C'$ lies in the perimeter $\mathcal{P}(C)$ of $C$ \emph{and} there exists an intersection between two disks in the bichromatic intersection graph $\Gamma(C) \times \Gamma(C')$. 
By marking for every cell $C$, all cells $C' \in \mathcal{C}(C)$, any connectivity query between disks $\sigma \in \Gamma(C_1)$ and $\rho \in \Gamma(C_2)$ can then be answered by querying connectivity in the proxy graph between the two highest marked ancestors of $C_1$ and $C_2$.

Formally, the pipeline from~\cite{hoog2024fully} has five components,  see Figure~\ref{fig:pipeline}.
We describe the pipeline using our definitions, and assume that $\S$ is a set of disks instead of squares:\footnote{In particular, see the pipeline of Section 8 in~\cite{hoog2024fully}. Note that items 3 and 4 are reversed w.r.t.~\cite{hoog2024fully}.}

\begin{enumerate}
    \item\label{pipeline1} We store all disks in $\S$ in a quadtree $T(\S)$ from Definition~\ref{def:quadtree_sparse}.
    \item\label{pipeline2} For each storing cell $C \in T(S)$, we \emph{mark} all quadtree cells $C' \in \C(C)$ in the marked-ancestor tree (MAT) by Alstrup, Husfeldt, and Rauhe~\cite{AlstrupHR98}.
    \item\label{pipeline3} For each storing cell $C \in T(\S)$, we maintain for every storing cell $C'$ in its perimeter $\mathcal{P}(C)$ a maximal bichromatic matching $\MBM(C, C')$ in the bichromatic graph $\Gamma(C) \times \Gamma(C')$.
    \item\label{pipeline4} For each storing cell $C \in T(\S)$, we maintain for every storing cell $C'$ in its perimeter $\mathcal{P}(C)$ two disk intersection data structures:
    \begin{itemize}
        \item A disk intersection data structure storing all disks in $\Gamma(C)$ that are not in  $\MBM(C, C')$, 
        \item A disk intersection data structure storing all disks in $\Gamma(C')$ that are not in $\MBM(C, C')$.
    \end{itemize}
    \item\label{pipeline5} We store the proxy graph where the vertex set consists of the quadtree cells in $T(\S)$ and there is an edge between quadtree cells $(C, C')$ if their maximal bichromatic matching $\MBM(C, C')$ is well-defined and non-empty. 
    All edges of this graph are stored in the connectivity query data structure by Holm, de Lichtenberg, and Thorup~\cite{holm2001poly}.
\end{enumerate}

\begin{figure}
    \centering
    \includegraphics[page=2]{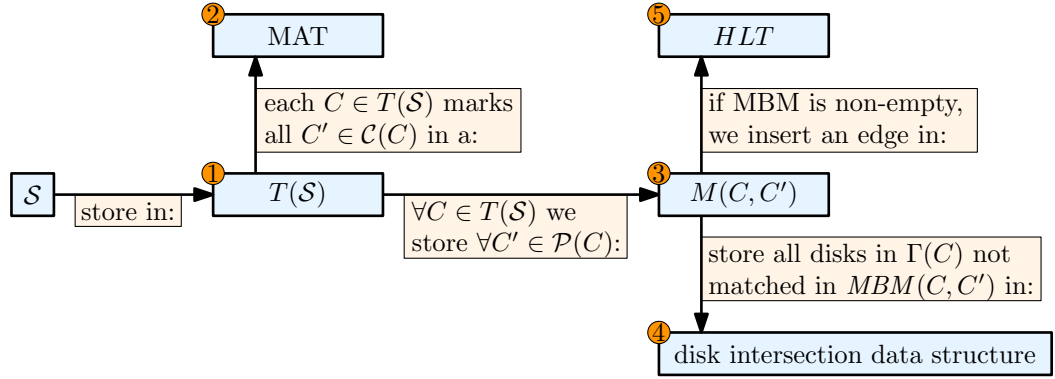}
    \caption{An overview of the pipeline used in~\cite{baumann2024dynamic, hoog2024fully}. The quadtree $T(\S)$ stores all storing cells. For a cell $C \in T(\S)$ we consider the constituents $\C(C) \subseteq T(\S)$ and the perimeter $\P(C)\subseteq T(\S)$.}
    \label{fig:pipeline}
\end{figure}

\subsection{Modifying and implementing the pipeline}

First, we note that items~\ref{pipeline1}, \ref{pipeline3} and~\ref{pipeline5} of this pipeline are shape-independent, i.e. they work the same for disks and squares. 
They are also identical across the pipelines from~\cite{baumann2024dynamic} and~\cite{hoog2024fully}.
We do not need to modify these item definitions, nor their dynamic maintenance. 

Items~\ref{pipeline2} and~\ref{pipeline4} are shape-dependent; the definitions of the sets $\mathcal{C}(\sigma)$ and $\mathcal{P}(\sigma)$ change based on whether $\sigma$ is a disk or a square. Moreover, there is a big difference between disk and square intersection data structures. 
As a consequence, items~\ref{pipeline2} and~\ref{pipeline4} are defined slightly differently between the original pipeline of~\cite{baumann2024dynamic} and the pipeline from~\cite{hoog2024fully}.
Even though our setting is more similar to~\cite{baumann2024dynamic}, our techniques will be more similar to~\cite{hoog2024fully}. For ease of exposition, we thus fully adopt the above pipeline from~\cite{hoog2024fully} and show how to maintain item~\ref{pipeline2} and~\ref{pipeline4} under insertions and deletions of disks. 
We wish to provide maximum credit to~\cite{baumann2024dynamic}, and note that their approach would suffice to maintain item~\ref{pipeline2}. We only show how to maintain item~\ref{pipeline2} for completeness.

\subparagraph{Maintaining pipeline item~\ref{pipeline2} and~\ref{pipeline4}.}
We continue by showing how to maintain pipeline items~\ref{pipeline2} and~\ref{pipeline4} for a set of disks $\S$ with their diameters in $[4, \Psi]$. 
To this end, we first recreate the lemmas from~\cite{hoog2024fully} that bound the size of the constituents $\mathcal{C}(\sigma)$ and the perimeter $\mathcal{P}(\sigma)$, for disks.

\begin{lemma}[Lemma 4.4 in~\cite{baumann2024dynamic}]\label{lemma:bounding_size_of_C_and_P}
    For each $\sigma \in \S$ there are $O(\Psi)$ cells in $|\C(\sigma)|$ and~$|\P(\sigma)|$.
\end{lemma}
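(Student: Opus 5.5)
We bound $|\C(\sigma)|$ and $|\P(\sigma)|$ separately, in both cases summing a per-level count over the $O(\log \Psi)$ quadtree levels between unit size and $|\sigma|$ (cells in $\C(\sigma)$ are contained in $\sigma$, and cells in $\P(\sigma)$ have diameter at most $|\sigma|$ by definition). Throughout, $|\sigma| \le \Psi$ and every cell has side length at least $1$.

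\textbf{Perimeter.} Fix a level with cell size $s \le |\sigma|$. A cell $C$ of size $s$ is in $\P(\sigma)$ only if $7*C$ meets $\partial\sigma$, i.e.\ only if $C$ lies within distance $O(s)$ of the circle $\partial\sigma$. These cells all lie in an annulus of width $O(s)$ around a circle of circumference $O(|\sigma|)$. The annulus has area $O(s\,|\sigma|)$, and the cells are disjoint with area $s^2$, so there are $O(|\sigma|/s)$ such cells (using $s \le |\sigma|$, so the annulus area estimate is valid). Summing over $s = 1,2,4,\dots,|\sigma|$ gives a geometric series bounded by $O(|\sigma|) = O(\Psi)$.

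\textbf{Constituents.} A cell $C \in \C(\sigma)$ is contained in $\sigma$ but its parent $\hat C$ is not. Hence $\hat C$ intersects $\sigma$ (it contains $C$) and also the complement of $\sigma$, so it meets $\partial\sigma$. Therefore $C$ lies within distance $2|C|$ of $\partial\sigma$, and the same annulus packing argument shows that there are $O(|\sigma|/s)$ constituents of size $s$. Summing geometrically again gives $O(\Psi)$. Condition 2 of the definition only removes cells, so we may ignore it. Cells of size below $1$ do not occur, because the quadtree is built only down to unit size; this is what makes the series terminate and yields a linear rather than larger bound.

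\textbf{Main obstacle.} The only delicate step is the annulus packing at levels where $s$ is comparable to $|\sigma|$, since there the width-$O(s)$ annulus is no longer thin relative to the circle. There I would instead bound the count trivially: all relevant cells lie in $O(1)*$ a bounding square of side $O(|\sigma|)$, giving $O((|\sigma|/s)^2) = O(1)$ cells, which is still $O(|\sigma|/s)$. The rest is routine constant bookkeeping involving the factor $7$ in $7*C$ and the horizontal-diameter convention.
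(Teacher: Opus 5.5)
Your proof is correct and is the standard per-level packing argument: cells of side $s$ in $\C(\sigma)$ or $\P(\sigma)$ lie in an $O(s)$-neighbourhood of $\partial\sigma$, so there are $O(|\sigma|/s)$ of them, and the geometric sum over $1\le s\le|\sigma|$ gives $O(\Psi)$. The paper does not reprove this lemma but cites it from Baumann et al., and its informal remarks (interior-disjoint constituents shrinking toward $\partial\sigma$, and the fixed-level perimeter figure) point to this same argument; one small point is that $s\ge 1$ for constituents also follows directly from condition~2, since every constituent is an ancestor-or-self of a storing cell, and storing cells have side at least $1$ because all diameters are at least $4$.
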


\begin{figure}
    \centering
    \includegraphics[page=2]{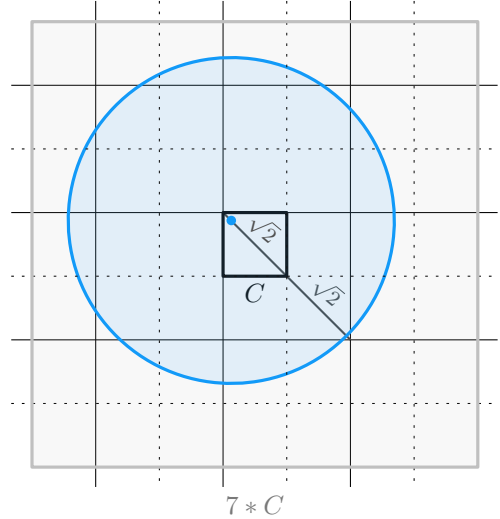}
    \caption{Illustrating why $7*C$ is needed to contain a disk stored in a storing cell $C$. In this figure, the storing cell $C$ has unit size.}
    \label{fig:whycandPneedsSevenNeighb}
\end{figure}

\begin{lemma}[Lemma 6 in~\cite{hoog2024fully} ]\label{lem:in_perimeter}
    Let $C$ be a storing cell. Denote by $Z$ any cell, such that there could exist a disk $\sigma$ with storing cell $Z$ and $C \in\P(\sigma)$. There are $O(\log \Psi)$ such cells. 
\end{lemma}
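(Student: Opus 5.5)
We prove the statement by fixing the storing cell $C$ and counting, level by level, the candidate storing cells $Z$ that could belong to a disk $\sigma$ with $C \in \P(\sigma)$. Two constraints do the work. First, the definition of $\P(\sigma)$ requires $|C| \le |\sigma|$ and that $7 * C$ meets the boundary of $\sigma$. Second, the relation between a disk and its storing cell (Figure~\ref{fig:ratiodiffcellanddisk}) gives $|Z| \le |\sigma| \le 4\sqrt{2}\,|Z|$, up to the constant coming from the fact that $Z$ is maximal. In addition, the centre of $\sigma$ lies in $Z$.

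\textbf{Step 1: bounding the number of levels.} Disk diameters lie in $[4,\Psi]$ and cells have side length a power of two in $[1,\Psi^*]$. Since $|Z| = \Theta(|\sigma|)$, the cell $Z$ can lie on at most $O(\log \Psi)$ levels in total. We could additionally use $|\sigma| \ge |C|$, which gives $|Z| \ge |C|/(4\sqrt 2)$, so $Z$ is at most a constant number of levels below $C$. This refinement is not needed for the bound.

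\textbf{Step 2: bounding the candidates on a fixed level.} Fix a level with cell size $s$, and let $Z$ be a candidate of size $s$, so $|\sigma| = \Theta(s)$. The centre of $\sigma$ lies in $Z$ and $\sigma$ has radius $\Theta(s)$. Because $7 * C$ meets the boundary of $\sigma$, the distance from the centre of $\sigma$ to $7*C$ is at most $|\sigma|/2 = O(s)$. Hence $Z$ lies inside the region obtained by expanding $7 * C$ by $O(s)$.

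We then count size-$s$ grid cells in this region. The region has side length $O(|C| + s)$, so it contains $O(((|C|+s)/s)^2)$ cells of size $s$. For $s \ge |C|/(4\sqrt 2)$ this is $O(1)$, using $|C| = O(s)$ from Step 1.

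\textbf{Step 3: summing.} Summing $O(1)$ candidates over $O(\log \Psi)$ levels gives $O(\log \Psi)$ cells $Z$.

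\textbf{Main obstacle.} The crux is Step 2: the count per level must be constant rather than growing with $|C|/s$. This requires combining $|C| \le |\sigma|$ with $|\sigma| = \Theta(|Z|)$. If the storing-cell ratio does not hold in the form needed for a lower bound on $|Z|$, the fallback is to bound $|Z| \ge |\sigma|/(4\sqrt2)$ directly from the maximality of the storing cell. That bound again keeps the number of relevant levels and the number of cells per level under control.
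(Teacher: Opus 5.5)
Your proof is correct and follows essentially the same route as the paper. You confine $Z$ to the $O(\log\Psi)$ levels with $|Z| \ge |C|/(4\sqrt{2})$ (from $|C| \le |\sigma| < 4\sqrt{2}\,|Z|$) and get $O(1)$ candidates per level because $Z$ lies near $7 * C$; the paper phrases the same per-level count as $Z$ having to meet $49 * C_j$ for the ancestor $C_j$ of $C$ on that level. One correction: the lower bound on $|Z|$ that you call unnecessary in Step~1 is exactly what makes your per-level count in Step~2 constant, so you should keep it rather than dismiss it.
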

\begin{proof}
    Let $\ell$ be the level of the storing cell $C$; that is, $|C| = 2^\ell$. By the definition, every $Z$ such that there possibly exists a $\sigma$ with storing cell $Z$ such that $C\in \P(\sigma)$ must be on the same level as $C$ or on a level $j \geq \ell -1$. At level $j \geq \ell$ in the quadtree, there is an ancestor of $C$, $C_j$, with size $2^j$. As $Z$ is the storing cell of $\sigma$, the diameter of $\sigma$ is less than $4\sqrt{2}|Z|$ and thus $\sigma$ is contained in $7 * Z$ (see Figure~\ref{fig:whycandPneedsSevenNeighb}). If $C \in \P(\sigma)$, then by definition the boundary of $\sigma$ intersects $(7 * C)$. In particular, this means that $(7 * C_j)$ and $(7 * Z)$ must overlap. Equivalently, the cell $Z$ needs to intersect $(49 \,* C_j)$. Therefore at every level $j \geq \ell$ there are only  $O(1)$ number of candidates for $Z$.  For level $j = \ell -1$ there are even fewer such cells than for level $j=\ell$. We conclude that there are at most $O(\log \Psi)$ cells with the properties of~$Z$.
\end{proof}

Having bounded the complexities of $\mathcal{C}(\sigma)$, $\mathcal{P}(\sigma)$ and the pre-image of $\mathcal{P}(\sigma)$, we show how to maintain item 2 and 4:

\begin{lemma}[Maintaining item~\ref{pipeline2}, similar to Theorem 7 in \cite{hoog2024fully}]\label{lemma:connectivity_maintaining_C}
   We can augment $T(\S)$ with an $O(n)$-size data structure that dynamically marks for all storing cells $C \in T(\S)$ all cells $C' \in \mathcal{C}(C)$ in a marked-ancestor tree using $O(\Psi \log \log n)$ update time.
\end{lemma}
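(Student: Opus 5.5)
The plan is to maintain, for every cell $C'$ of the quadtree, a counter $m(C')$. It records the number of pairs $(\sigma, C')$ with $\sigma \in \S$ and $C' \in \C(\sigma)$. A cell is marked in the marked-ancestor tree of Alstrup, Husfeldt and Rauhe~\cite{AlstrupHR98} exactly when $m(C')>0$. Since $\C(C)=\bigcup_{\sigma\in\Gamma(C)}\C(\sigma)$, the marked cells are then precisely $\bigcup_{C}\C(C)$.

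The counters, and the marked-ancestor tree itself, live only on cells of $T(\S)$. Condition 2 of the definition of $\C(\sigma)$ guarantees that every constituent is an ancestor of (or equal to) a storing cell, so it lies in $T(\S)$. Hence the space is $O(|T(\S)|)=O(n)$, given that $T(\S)$ as in Definition~\ref{def:quadtree_sparse} is stored with its unary paths compressed. If only the uncompressed version is available, I would accept an $O(n\log\Psi)$ bound and note it.

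\textbf{Inserting $\sigma$.} First insert its storing cell $C_\sigma$ into $T(\S)$; this may create new cells on the path to $C_\sigma$. Next, enumerate the candidate set $\C(\sigma)$. These are the maximal quadtree cells inside $\sigma$, that is, cells contained in $\sigma$ whose parent is not. At each level there are $O(\Psi/2^\ell)$ of them, all near the boundary (Lemma~\ref{lemma:bounding_size_of_C_and_P}), so summing over levels gives $O(\Psi)$ cells. For each candidate we check whether it exists in $T(\S)$, which is exactly condition 2. Each existing candidate has its counter incremented, and it is marked if the counter becomes $1$.

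There is one subtlety. Inserting $C_\sigma$ may create new ancestors $A$ that now satisfy condition 2 for other disks $\rho$ with $A\in\C(\rho)$. A naive treatment would require scanning many disks $\rho$. I would handle this as follows. Any new ancestor $A$ lies on the root-to-$C_\sigma$ path, and $A\in\C(\rho)$ forces $\rho$ to contain $A$ while its storing cell lies close to $A$ (within $7*$ of the corresponding level). By a packing argument of the type used in Lemma~\ref{lem:in_perimeter}, only $O(\log\Psi)$ relevant storing cells $Z$ need inspection. For each pair $(A,Z)$ we then need the count of disks $\rho\in\Gamma(Z)$ with $A\in\C(\rho)$.

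This count is the main obstacle. My first attempt would avoid counting individual disks and instead maintain per-cell counts. Let $m(A)$ count storing cells $Z$ with $A\in\C(Z)$, and let each $Z$ keep a multiset counter of its constituents. Then adding $A$ needs only one test per nearby $Z$, namely whether some $\rho\in\Gamma(Z)$ contains $A$ with the parent of $A$ not contained. I would try to reduce this to a constant number of extremal queries per $Z$, but I expect this step to need the most care.

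\textbf{Deletion} is symmetric: decrement the $O(\Psi)$ counters, unmark at zero, and remove cells whose subtree no longer contains storing cells. The latter unmarks them, which is consistent because they then fail condition 2.

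\textbf{Time.} Each update performs $O(\Psi)$ mark or unmark operations plus $O(\log\Psi)$ structural changes. Each marked-ancestor operation costs $O(\log\log n)$, for $O(\Psi\log\log n)$ in total. Locating each candidate cell in $T(\S)$ is done by a walk from neighbouring candidates, or by a hash on (level, coordinates), in $O(1)$ amortised time per cell. This yields the claimed update time.
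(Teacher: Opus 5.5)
\textbf{Where your proposal matches the paper.} Your main mechanism is exactly the paper's proof. The paper keeps a counter on each node of $T(\S)$ counting the storing cells in its subtree, updated along the root path in $O(\log\Psi)$ time. It enumerates the $O(\Psi)$ maximal cells of $\sigma$ by walking through $T(\S)$ from $C_\sigma$, and tests condition~2 via that counter. Each qualifying cell then costs one marked-ancestor operation of $O(\log\log n)$ time, with a per-cell counter deciding when to unmark. The paper's proof stops there: it only updates the marks for $\C(\sigma)$ of the disk $\sigma$ being inserted or deleted.

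Two small corrections. First, a split creates all $2^d$ children, so mere existence of a cell in $T(\S)$ is not condition~2: empty sibling leaves exist. You therefore need the subtree counter, or you must store only occupied cells. Second, hashing is not available on the pointer machine or real RAM the paper targets, but the walk suffices.

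\textbf{The cascade is a genuine issue.} The lemma promises that the marked set is $\bigcup_C \C(C)$ at all times, and $\C(\rho)$ depends on $\S$ through condition~2. Inserting $\sigma$ can therefore turn the $O(\log\Psi)$ cells between $C_\sigma$ and its lowest previously occupied ancestor into constituents of arbitrarily many older disks $\rho$. A deletion can undo this. The paper's proof does not address it.

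\textbf{Your remedy does not close it.} The gap is exactly at the step you flagged, for two reasons.
\begin{itemize}
\item The test ``some $\rho\in\Gamma(Z)$ has $A\subseteq\rho$ and $\mathrm{parent}(A)\not\subseteq\rho$'' is a containment query with an exclusion condition. The intersection structure of Lemma~\ref{lemma:intersection_queries} does not answer it, and you construct nothing that does.
\item Existence is not enough for counters. When $A$ starts satisfying condition~2, the entry for $A$ in the multiset of $Z$ must be initialised with the exact number of $\rho\in\Gamma(Z)$ having $A$ as a maximal cell. Otherwise, deleting one such $\rho$ later can unmark $A$ while other witnesses remain.
\end{itemize}
Moving the counters from cells to storing cells thus relocates the counting problem rather than solving it. You could avoid re-initialisation by keeping entries alive for maximal cells that currently fail condition~2. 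However, that costs $\Theta(n\Psi)$ space in the worst case, contradicting the $O(n)$ bound.

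As written, your proposal establishes what the paper's proof establishes. The additional issue you correctly identified remains open.
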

\begin{proof}
    We augment the quadtree $T(\S)$ by keeping a counter for each node that indicates the number of storing cells in the subtree rooted at that node. Additionally, as described in item~\ref{pipeline2}, we store a marked-ancestor tree~\cite{AlstrupHR98} in which a quadtree cell $C$ is marked if there is a cell $C'$ for which $C \in \C(C')$. We again implement this using a counter that keeps track of the number of quadtree cells $C'$ for which $C \in \C(C')$. 

    We maintain our data structures as follows. When a disk $\sigma$ is inserted (or deleted), we increase (or decrease) the counter for each node on its path to the root in $O(\log \Psi)$ time. To update the marked-ancestor tree, we first find the set of constituents $\mathcal{C}(\sigma)$.
    All cells in $\C(\sigma)$ are contained in $\sigma$ and have a parent that is not contained in $\sigma$.
    That is, the set of all possible cells in the \emph{complete quadtree} that could be in $\mathcal{C}(\sigma)$ are interior disjoint and get smaller the closer they are to the boundary of $\sigma$, see Figure~\ref{fig:illustratingC}. There are $O(\Psi)$ of these cells. We can find these cells by walking through $T(\S)$, starting at the storing cell $C_\sigma$ of $\sigma$. For each of these $O(\Psi)$ cells, we simply check if its counter is greater than zero, indicating that there is a storing cell in its subtree. If this is the case, then the cell is in $\mathcal{C}(\sigma)$ and we thus mark (or unmark) it in the marked-ancestor tree in $O(\log \log n)$ time~\cite{AlstrupHR98}. It follows that we can maintain the data structure in $O(\Psi \log \log n)$ time.  
\end{proof}

\begin{lemma}[Maintaining item~\ref{pipeline4}, similar to Theorem 9 in \cite{hoog2024fully}]\label{lemma:connectivity_maintaining_P}
We can augment $T(\S)$ with a data structure that, for every storing cell $C \in T(\S)$, stores all disks in $\Gamma(C)$ in a branch-persistent disk intersection data structure.
In particular: 
\begin{itemize}
    \item For every quadtree cell $C' \in \mathcal{P}(C)$, there exists a branch storing all disks in $\Gamma(C)$ that are not in the matching $\MBM(C, C')$, and
    \item For every quadtree cell $C''$ such that $C \in \mathcal{P}(C'')$, there exists a branch storing all disks in $\Gamma(C)$ that are not in the matching $\MBM(C'', C)$.
\end{itemize}
We can maintain this augmented data structure using $O(n \log^4 n \log \Psi)$ space. Inserting/deleting a disk requires $O(\Psi \log^4 n\log \Psi)$ amortised expected time. 
\end{lemma}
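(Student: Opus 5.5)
We mirror the proof of Theorem~\ref{thm:main_small_space}, with storing cells $C$ and garrisons $\Gamma(C)$ in place of $\eps$-cells and (sub)populations. For every storing cell $C$ we keep one branch persistent intersection structure $\Q(C)$ (Lemma~\ref{lemma:intersection_queries_branching}) on $\Gamma(C)$. For every pair $(C,C')$ with $C'\in\P(C)$, we open one branch of $\Q(C)$ and one branch of $\Q(C')$. These branches store the disks of $\Gamma(C)$, respectively $\Gamma(C')$, that are unmatched in $\MBM(C,C')$. This is exactly the invariant that item~\ref{pipeline4} needs. Two bounds control the number of branches of a fixed $\Q(C)$: there are $O(\Psi)$ cells in $\P(C)$ by Lemma~\ref{lemma:bounding_size_of_C_and_P}, and only $O(\log\Psi)$ cells $C''$ can have $C\in\P(C'')$ by Lemma~\ref{lem:in_perimeter}. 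Hence each structure has $B=O(\Psi)$ branches, and $\log B=O(\log\Psi)$.

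\textbf{Updates.} When $\sigma$ is inserted with storing cell $C=C_\sigma$, we first perform \emph{root-update}$(\sigma,\texttt{Insert})$ on $\Q(C)$, costing $O(B)$ branch updates. We then enumerate the $O(\Psi)$ cells $C'\in\P(\sigma)$ that are storing cells, together with the $O(\log\Psi)$ candidate cells $C''$ from Lemma~\ref{lem:in_perimeter} whose garrison contains a disk $\rho$ with $C\in\P(\rho)$. Branches are created lazily when a pair first becomes relevant. For each relevant pair we run the matching update exactly as in Lemma~\ref{lemma:main_big_space}:
\begin{itemize}
  \item query the opposite branch for an unmatched intersecting disk;
  \item if one is found, match $\sigma$ to it and delete that partner from its branch by a branch update;
  \item otherwise, leave $\sigma$ in its own branch.
\end{itemize}
Deletions are symmetric: a root-update removes $\sigma$, and each freed partner is rematched via one query plus $O(1)$ branch updates. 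Every update therefore costs $O(1)$ root-updates and $O(\Psi)$ branch-updates and queries. By Lemma~\ref{lemma:intersection_queries_branching}, the cost per update is $O((B+\Psi)\log^4 n\log\Psi)=O(\Psi\log^4 n\log\Psi)$ expected amortised. Matching changes are forwarded to items~\ref{pipeline3} and~\ref{pipeline5}, which are untouched.

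\textbf{Rebuilds and space.} We rebuild all structures globally after $K=\Theta(N/\Psi)$ updates, where $N=|\S|$ at the last rebuild. The analogue of Lemma~\ref{lemma:bounding_size_of_matching} gives $z=2|\M|$, where $\M$ is the union of all matchings $\MBM(C,C')$ with $C'\in\P(C)$. Each matching edge is charged to its $\Gamma(C)$-endpoint $\sigma$, and $\sigma$ lies in $O(\log\Psi)$ matchings as the $\Gamma(C)$-side. Bounding the $\Gamma(C')$-side without an extra $\Psi$ factor is the step I expect to be the main obstacle. The first approach to try is this: charge each matching edge $(\sigma,\sigma')$ to the endpoint whose storing cell is smaller, or to an endpoint via Lemma~\ref{lem:in_perimeter}, so that every disk is charged $O(\log\Psi)$ times. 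If that fails, fall back on $|\MBM(C,C')|\le\min(|\Gamma(C)|,|\Gamma(C')|)$ combined with the $O(\log\Psi)$ pre-image bound. Either way the target is $z=O(n\log\Psi)$.

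Given that bound, the space after a rebuild and $K$ updates is
\[
O\bigl((n+z+K\Psi)\log^4 n\bigr)=O(n\log^4 n\log\Psi).
\]
The rebuild cost is
\[
O\bigl((n+z)\log^4 n\log\Psi\bigr)=O(n\log^4 n\log^2\Psi).
\]
Amortised over the $K=\Theta(N/\Psi)$ updates, this is $O(\Psi\log^4 n\log^2\Psi)$ per update. To drop one $\log\Psi$ factor, I would refine the estimate of $\log B$: only $O(\log\Psi)$ branches come from pre-images, so the root-update overhead is dominated by the $O(\Psi)$ branch operations. Alternatively, the $n_E$ term can be charged more carefully, since each disk lies in only one garrison. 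Matching the stated $O(\Psi\log^4 n\log\Psi)$ bound on the update time is therefore the second delicate point.
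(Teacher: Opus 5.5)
\textbf{Gap: the update-time bound.} Your setup is the paper's: one branch persistent structure per storing cell, branches for the pairs with $C'\in\P(C)$ and for the $O(\log\Psi)$ pre-image cells, and $O(1)$ root-updates plus $O(\Psi)$ branch-updates and queries per disk update. As written, however, it does not reach the stated update time, and both repairs you suggest aim at the wrong term. The per-update operations already cost $O(\Psi\log^4 n\log\Psi)$, so the excess comes only from the amortised rebuild. That rebuild costs $O((n+z)\log^4 n\log B)=O(n\log^4 n\log^2\Psi)$. The extra $\log\Psi$ comes from $z=\Theta(n\log\Psi)$. It does not come from the $n_E$ term, which is already $O(n)$ here because each disk lies in one garrison. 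Refining the branch count cannot remove it either, since a structure can have $\Theta(\Psi)$ branches, so $\log B=\Theta(\log\Psi)$ in the worst case.

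The missing step is to lengthen the rebuild period. Take $K=\Theta(N\log\Psi/\Psi)$ instead of $\Theta(N/\Psi)$; the paper uses $K=N\log\Psi/(2\Psi)$. The rebuild then amortises to $O(n\log^4 n\log^2\Psi/K)=O(\Psi\log^4 n\log\Psi)$. The space $O((n+z+K\Psi)\log^4 n)$ stays $O(n\log^4 n\log\Psi)$, because $K\Psi=\Theta(N\log\Psi)$ has the same order as $z$. Put differently, the space budget already pays for $z=\Theta(n\log\Psi)$, so you can afford $\Theta(n\log\Psi)$ branch operations between rebuilds. Mirroring the spanner proof's choice of $K$, without the $\log\Psi$ factor, is what costs you here.

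\textbf{The bound on $z$.} Your first ``obstacle'' is not real, but you have the two sides reversed. A disk of $\Gamma(C)$ can be matched on the $\Gamma(C)$-side of $\MBM(C,C')$ for every $C'\in\P(C)$, which is $O(\Psi)$ matchings. It is on the $\Gamma(C')$-side that Lemma~\ref{lem:in_perimeter} gives multiplicity $O(\log\Psi)$, because only $O(\log\Psi)$ cells $C$ have $C'\in\P(C)$. Since $z=2|\M|$, you only need to count each matching edge once. Charging every edge of $\MBM(C,C')$ to its endpoint in $\Gamma(C')$ gives $|\M|=O(n\log\Psi)$ directly; your fallback via $|\MBM(C,C')|\le\min(|\Gamma(C)|,|\Gamma(C')|)$ is the same argument. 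With these two corrections your proof coincides with the paper's.
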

\begin{proof}
The way the maximal bichromatic matching $\MBM(C,C')$ is maintained is almost identical to the maintenance of $\MBM(E,E')$ for our spanner in Section~\ref{sec:applying_branching_persistence}. The first difference is that we maintain matchings between cells instead of $\eps$-squares. Secondly, we consider the garrison $\Gamma(C)$ (equivalent to the subpopulation $\Gamma_\eps(E)$) for both cells of the matching, instead of the full population $\pi_\eps(E')$ for one of the cells. Finally, we only store the bichromatic matching between two cells $(C,C')$ if $C$ is in the perimeter of $C'$, or the other way around. I.e. $C \in \P(C')$ or $C' \in \P(C)$. Together with Lemma~\ref{lemma:bounding_size_of_C_and_P}, this implies that for every cell $C$ we store at most $|\P(C)| \in O(\Psi)$ matchings $\MBM(C,C')$ where $|C'| \leq |C|$. Furthermore, Lemma~\ref{lem:in_perimeter} implies that we consider only $O(\log \Psi)$ matchings $\MBM(C'',C)$ where$|C| \leq |C''|$ .

We store the two branch-persistent data structures as described in the lemma statement in the data structure of Lemma~\ref{lemma:intersection_queries_branching}. As before, we rebuild all of these data structures after~$K$ updates in $\S$. We set $K:= N \log \Psi / (2\Psi)$, where $N$ is $|\S|$ at the time of the last global rebuild.

To obtain the running time for rebuilding and performing the following $K$ updates, we again apply Lemma~\ref{lemma:intersection_queries_branching}. In the notation of the proof of Theorem~\ref{thm:main_small_space} we now have that $\sum_{C \in T(\S)} N_C \in O(n)$, the symmetric difference $\sum_{C \in T(\S)} z_C \in O(n\log \Psi)$ (Lemma~\ref{lemma:bounding_size_of_matching}), and the number of branches $B$ for each data structure is bounded by $O(\Psi)$ (Lemmas~\ref{lemma:bounding_size_of_C_and_P} and~\ref{lem:in_perimeter}). When inserting or deleting a disk, we only perform a single root update on each of the two data structures of the storing cell $C$ of $\sigma$. So, after $K$ updates, the total number of \emph{root-updates} is bounded by $O(K)$. For each update in $\S$, we additionally perform $O(\Psi)$ \emph{branch-updates} and queries. As in Theorem~\ref{thm:main_small_space}, it then follows by Lemma~\ref{lemma:intersection_queries_branching} that the total expected running time to rebuild and perform $K$ updates is $O((n \log \Psi + K \Psi) \log^4 n \log \Psi)$ and the expected space usage is $O((n \log \Psi + K \Psi)\log^4 n)$. Using that $K = N \log \Psi / (2\Psi)$, we obtain an amortised expected update time of $O(\Psi \log^4 n \log \Psi)$ and an expected space usage of $O(n \log^4 n \log \Psi)$.
\end{proof}

Having recovered Theorem 7 and Theorem 9 from \cite{hoog2024fully} (corresponding to Lemma \ref{lemma:connectivity_maintaining_C} and Lemma \ref{lemma:connectivity_maintaining_P}), and bounded the size of $\C(\sigma)$ and $\P(\sigma)$ by Lemma \ref{lemma:bounding_size_of_C_and_P}, we can use the identical pipeline of van der Hoog, Nusser, Rotenberg, and Staals~\cite{hoog2024fully}. 

\mainThree*

\section{Spanners and connectivity for $d$-dimensional hypercubes}\label{sec:cubes}
Throughout this section, the input $\S$ is a dynamic set of $d$-dimensional axis-aligned hypercubes with side lengths all in $[4,\Psi]$ and constant dimension $d$. We again assume that all hypercubes are contained within a bounding box $[0, \Psi^*]^d$. These assumptions can be removed by applying the technique of Appendix~\ref{app:generalisation}.

We first introduce a hypercube intersection data structure, similar to the intersection data structure for disks in Lemma~\ref{lemma:intersection_queries}. 

\hypercubelemma*

\begin{proof}
    Each $d$-dimensional hypercube induces an interval along every coordinate axis. A hypercube $\sigma \in \S$ intersects a query hypercube $\rho$ if and only if, in every coordinate direction, the corresponding intervals intersect.
We exploit this by building a multi-level structure analogous to a range tree, based on interval trees. Recall that an interval tree is a balanced decomposition of a set of intervals into elementary intervals, stored at the leaves in sorted order. Each internal node represents the union of the intervals in its subtree.

Fix an ordering of the coordinate directions (e.g., $x_1, x_2, \dots, x_d$). At the top level, we construct an interval tree on the $x_1$-intervals of the hypercubes in $\S$. For each node $\nu$ of this tree, let $\S_\nu$ denote the subset of hypercubes whose $x_1$-intervals are stored at $\nu$. At $\nu$, we recursively build a $(d-1)$-dimensional structure on the remaining coordinates, that is, an interval tree on the $x_2$-intervals of $\S_\nu$, whose nodes in turn store structures for the $x_3$-intervals, and so on. By standard range searching arguments, this hierarchical construction uses $O(n \log^{d-1} n)$ space and supports insertions and deletions in $O(\log^d n)$ time.

To answer an intersection query with a hypercube $\rho$, we proceed as follows. First, we query the top-level interval tree with the $x_1$-interval of $\rho$, which yields $O(\log n)$ nodes which together cover exactly the hypercubes whose $x_1$-intervals intersect that of $\rho$. For each such node, we recursively query its associated $(d-1)$-dimensional structure using the remaining intervals of $\rho$. At each level, this introduces an additional factor of $O(\log n)$, leading to a total query time of $O(\log^d n)$.
\end{proof}

In Theorem~\ref{thm:hypercubes_spanner}, we combine the above lemma with our techniques from Section~\ref{sec:high_space} and Appendix~\ref{sec:small_space} to obtain a dynamic $(1+\eps)$-spanner.

\hypercubesSpanner*

\begin{proof}
Recall that for our spanner results, the quadtree $T(\S)$ is defined as the minimum-size quadtree that contains the storing families $\F(\sigma)$ for all $\sigma \in \S$ (Definition~\ref{definition:quadtree}).

Before arguing about the size of the spanner, the update time, and the space usage, we make some observations with regard to $d$-dimensional hypercubes. The number of $d$-dimensional quadtree cells that a single hypercube (with side length at most~$\Psi$) can intersect is bounded by $O(\Psi^d)$. As before, each cell $C$ in $T(\S)$ is partitioned into a $d$-dimensional grid of hypercubes of side length $\eps |C|$. This means that the number of $\eps$-squares a hypercube can intersect is $O(\left( \frac{\Psi}{\eps}\right)^d)$. This is a lower bound on the update time, see Observation~\ref{observation:lower_bound}.

Since our construction is essentially the same as the one presented in Section~\ref{sec:high_space} and Appendix~\ref{sec:small_space}, we will focus on the differences. There are two main differences. The first is that the number of $\eps$-cells $E'$ that we have to consider for an $\eps$-cell $E$ has an exponential dependence on $d$, so all of our bounds will have an exponential dependence on $d$. For example, in our branch persistent data structure for hypercube connectivity, the number of branches will be $O(\left(\frac{\Psi}{\eps}\right)^d)$. The second difference is that we replace the disk intersection data structure (Lemma~\ref{lemma:intersection_queries}) with a hypercube intersection data structure (Lemma~\ref{lemma:intersection_queries_cubes}). 


\subparagraph{Update time.} As previously mentioned, the update procedure to insert or delete a hypercube~$\sigma$ is essentially identical to that of the disk graph spanner described in the proof of Theorem~\ref{thm:main_small_space}. The only difference is that for hypercube intersection, we are using the hypercube intersection data structure of Lemma~\ref{lemma:intersection_queries_cubes}. We will separately analyse the updates to the~\ref{type:one} edges corresponding to the Euclidean spanners (step 2 in Theorem~\ref{thm:main_small_space}) and \ref{type:two} edges corresponding to the pairs of $\eps$-cells (steps 3 and 4 in Theorem~\ref{thm:main_small_space}).

\begin{itemize}
    \item \textbf{Type~\ref{type:one} edges}: For a cell $C$ the area $(3 * C)$ contains $3^d$ quadtree cells in dimension~$d$. We store $O(3^d)$ Euclidean spanners for every cell $C$. As updating a single Euclidean spanner requires $O(\eps^{-d} \log n \log^2(\eps^{-1}))$ time, updating the Euclidean spanners for all cells in the storing family $\F(\sigma)$ takes $O(\eps^{-d} \log n \log \Psi \log^2(\eps^{-1}))$ time in total.
    \item \textbf{Type~\ref{type:two} edges}: For type~\ref{type:two} edges again apply Lemma~\ref{lemma:branch-persistent} to obtain a result similar to Lemma~\ref{lemma:intersection_queries_branching}. Recall that the number of branches is bounded by $O(\left(\frac{\Psi}{\eps}\right)^d)$. We obtain the following query and update times for our branch persistent version of the hypercube intersection data structure (Lemma~\ref{lemma:intersection_queries_cubes}), where $N$ is the maximum number of hypercubes stored between two rebuilds:
    \begin{itemize}
        \item Queries are supported in $O(\log^d N \log \Psi \log \eps^{-1})$ time.
        \item Let $r$ and $b$ be the number of \emph{root}- and \emph{branch-updates} since the last rebuild and $z$ the size of the symmetric difference at this rebuild. Then the rebuild and these updates take $O((\overline{n} + z + r\left(\frac{\Psi}{\eps}\right)^d) +b)\log^d N \log \Psi \log \eps^{-1} )$ total time and use $O((\overline{n} + z + r\left(\frac{\Psi}{\eps}\right)^d) +b)\log^d N)$ space, where $\overline{n}$ denotes the size of $\S$ at the rebuild.
    \end{itemize}
    
    When performing $K$ updates in $\S$, the number of \emph{root-updates} is bounded by $O(K \log \Psi)$ and the number of \emph{branch-updates} is bounded by $O(K \Psi^d \eps^{-d} \log \Psi)$. 
    
    The total time to perform a rebuild on all branch persistent data structures and perform the following $K$ updates in $\S$ is then $O((n\eps^{-d}\log \Psi + K\Psi^{d}\eps^{-d}\log\Psi)\log^d n\log\Psi\log(\eps^{-1}))$. 
\end{itemize}
By rebuilding after $K := n/(2\Psi^d)$ updates in $\S$, we obtain an amortised update time of $ O(\Psi^d\eps^{-d}\log^d n\log^2 \Psi \log^2(\eps^{-1}))$. 
\subparagraph{Size of the spanner.}
The number of type~\ref{type:one} edges is bounded by the size of the Euclidean spanners, which is $O(n\eps^{-d}\log(\eps^{-1}))$. The number of type~\ref{type:two} edges is bounded by the size of the union of all the maximal bichromatic matchings, which is $O(n\eps^{-d}\log \Psi)$. The size of the $(1+\eps)$-spanner is then $O(n\eps^{-d}\log \Psi\log(\eps^{-1}))$. 

\subparagraph{Space usage.} 
Using $K :=n/(2\Psi^d)$ in Lemma~\ref{lemma:intersection_queries_branching}, and the space bounded noted in the update time analysis for type~\ref{type:two} edgse, we find the space usage is $O((n\eps^{-d}\log\Psi+K\Psi^d\eps^{-d}\log\Psi)\log^d n)$, which simplifies to $O(n\eps^{-d}\log^d n \log \Psi)$. 
\end{proof}

In Theorem~\ref{thm:hypercubes_connectivity}, we construct a fully dynamic connectivity data structure using the techniques of Section~\ref{sec:connectivity}.

\begin{restatable}{thm}{hypercubesConnectivity}
\label{thm:hypercubes_connectivity}
Let $\S$ be a fully dynamic set of $d$-dimensional axis-aligned hypercubes in~$\mathbb{R}^d$  with diameter in $[4,\Psi]$ for a fixed and known $\Psi$ and constant dimension $d$.
We can dynamically maintain $\S$ supporting \emph{connectivity queries} in time $O(\log n / \log\log n)$ using $O(\Psi^{d-1}\log^d n \log\Psi)$ amortised update time and $O(n\log^d n \log \Psi)$ total space.
\end{restatable}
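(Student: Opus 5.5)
We plan to follow the proof of Theorem~\ref{thm:main_connectivity}, substituting $d$-dimensional hypercubes for disks. Concretely, we reuse the five-item pipeline of Section~\ref{sec:connectivity_pipeline}. In item~\ref{pipeline4} we replace the branch-persistent disk intersection structure (Lemma~\ref{lemma:intersection_queries_branching}) by the branch-persistent version of the hypercube intersection structure of Lemma~\ref{lemma:intersection_queries_cubes}. That version comes from Lemma~\ref{lemma:branch-persistent}, which applies by Observation~\ref{obs:pointer}. Items~\ref{pipeline1}, \ref{pipeline3} and~\ref{pipeline5} are shape-independent, so they carry over verbatim. The bounding box assumption is removed via Appendix~\ref{app:generalisation}. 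Queries are answered as before: we find the highest marked ancestors in the marked-ancestor tree and then query Holm et al.~\cite{holm2001poly}, which takes $O(\log n/\log\log n)$ time.

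The first step is to re-derive the geometric counting lemmas in dimension $d$.
\begin{itemize}
\item \textbf{Analogue of Lemma~\ref{lemma:bounding_size_of_C_and_P}.} For a hypercube $\sigma$ of side at most $\Psi$, both $|\C(\sigma)|$ and $|\P(\sigma)|$ are $O(\Psi^{d-1})$. The constituents and perimeter cells on each level hug the boundary of $\sigma$, which is a $(d-1)$-dimensional surface. On the unit level this surface meets $O(\Psi^{d-1})$ cells of $7*C$-neighbourhoods. Level $\ell$ contributes $O((\Psi/2^\ell)^{d-1})$ cells, and these sum geometrically to $O(\Psi^{d-1})$. For axis-aligned cubes this is in fact easier than for disks, because the boundary is a union of $2d$ flat facets.
\item \textbf{Analogue of Lemma~\ref{lem:in_perimeter}.} The preimage count stays $O(\log\Psi)$: on each level only $O(49^d)=O(1)$ candidate storing cells exist.
\end{itemize}

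Item~\ref{pipeline2} (Lemma~\ref{lemma:connectivity_maintaining_C}) then costs $O(\Psi^{d-1}\log\log n)$ per update. This is dominated by the item~\ref{pipeline4} cost computed next.

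For item~\ref{pipeline4} we redo the amortisation of Lemma~\ref{lemma:connectivity_maintaining_P}. Each branch-persistent structure has $B=O(\Psi^{d-1})$ branches, so $\log B=O(\log\Psi)$. Each update performs $O(1)$ root-updates and $O(\Psi^{d-1})$ branch-updates and queries. The symmetric difference is $z=O(n\log\Psi)$: by the argument of Lemma~\ref{lemma:bounding_size_of_matching}, each hypercube is matched in $O(\log\Psi)$ pairs where it lies in the larger cell. By Lemma~\ref{lemma:branch-persistent} with $U(n)=O(\log^d n)$, a rebuild followed by $K$ updates costs
\[ O\bigl((n\log\Psi + K\Psi^{d-1})\log^d n\log\Psi\bigr) \]
time and $O((n\log\Psi+K\Psi^{d-1})\log^d n)$ space. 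Choosing $K:=N\log\Psi/(2\Psi^{d-1})$ gives $O(\Psi^{d-1}\log^d n\log\Psi)$ amortised update time and $O(n\log^d n\log\Psi)$ space. The interval-tree structure is deterministic, so these bounds are not merely expected. Maintaining the matchings $\MBM(C,C')$ and the proxy graph edges adds only polylogarithmic overhead per branch operation.

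The main obstacle is the first step, in particular the $O(\Psi^{d-1})$ perimeter bound. It needs care to check that the $7*C$ neighbourhoods of cells at each level stay within a thin shell around the boundary of $\sigma$. The argument is that cells meeting that shell at level $\ell$ lie within distance $O(2^\ell)$ of the boundary, so there are $O((\Psi/2^\ell)^{d-1})$ of them. We also must confirm that a cube of side at least $4$ has a storing cell of comparable size, with $|\sigma|\le 4|C_\sigma|$ for cubes. Then $\sigma\subseteq 7*C_\sigma$ still holds, which is what keeps the preimage count at $O(\log\Psi)$.
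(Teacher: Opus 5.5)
Your proposal is correct and is essentially the paper's proof, with the geometric counting and the amortisation spelled out in more detail: the same pipeline with the hypercube structure swapped into item~\ref{pipeline4}, an $O(\Psi^{d-1})$ perimeter bound, $z=O(n\log\Psi)$, and rebuilds every $K=N\log\Psi/(2\Psi^{d-1})$ updates. One slip needs fixing: the charging for $z$ must assign each matching edge to its endpoint in the smaller (perimeter) cell and then use your $O(\log\Psi)$ preimage bound, because a cube in the larger storing cell $C$ can be matched in up to $|\P(C)|=O(\Psi^{d-1})$ matchings, so ``$O(\log\Psi)$ pairs where it lies in the larger cell'' is false as written.
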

\begin{proof}
The number of squares in the perimeter of a hypercube is $O(\Psi^{d-1})$. The total number of edges in all maximal bichromatic matchings is $O(n\log\Psi)$, because for each cell $C$ with non-empty garrison $\Gamma(C)$ there are only $O(\log \Psi)$ cells $C'$ such that $|C'| = 2^{i-1}$ and $C'\subset (2^{i+5} + 1) * C$. Note that the big-O hides a constant factor $\delta^d$ dependent on the $d$-dimensional hypercube packing number.

By rebuilding after $K := (n\log \Psi)/(2\Psi^{d-1})$ updates in $\S$, following the same analysis as in Appendix~\ref{sec:connectivity}, we find the amortized update time is $O(\Psi^{d-1}\log^d n \log \Psi)$. 

Again following the same analysis as in Appendix~\ref{sec:connectivity} yields an space bound of $O((n\log \Psi + K\Psi^{d-1})\log^dn)$. For our choice of $K$,  the space usage becomes $O(n\log^d n \log\Psi )$.
\end{proof}





\end{document}